\newcolumntype{L}{>{\centering\arraybackslash}m{1cm}}
\newcolumntype{P}[1]{>{\centering\arraybackslash}p{#1}}
\DeclareMathOperator*{\argmax}{\arg\!\max}
\newcommand{\squishlist}{
\begin{list}{$\bullet$}
	{ \setlength{\itemsep}{0pt}      \setlength{\parsep}{3pt}
		\setlength{\topsep}{3pt}       \setlength{\partopsep}{0pt}
		\setlength{\leftmargin}{1.5em} \setlength{\labelwidth}{1em}
		\setlength{\labelsep}{0.5em} } }
\newcommand{\squishlisttwo}{
	\begin{list}{$\bullet$}
		{ \setlength{\itemsep}{0pt}    \setlength{\parsep}{0pt}
			\setlength{\topsep}{0pt}     \setlength{\partopsep}{0pt}
			\setlength{\leftmargin}{2em} \setlength{\labelwidth}{1.5em}
			\setlength{\labelsep}{0.5em} } }
	\newcommand{\squishend}{
\end{list}  }
\def\script#1{\mathcal{#1}}
\def\mS{\script{S}}
\def\mL{\script{L}}
\def\mR{\script{R}}
\def\mI{\script{I}}
\def\mL{\script{L}}
\def\mD{\script{D}}
\def\mH{\script{H}}
\begin{document}
	
	
	\copyrightyear{2017} 
	\acmYear{2017} 
	\setcopyright{none}
	\acmConference{Technical Report}{June 2017}{Corvallis, OR}

	\title{Schema Independent Relational Learning}
	
	\author{Jose Picado}
	\affiliation{%
		\institution{Oregon State University}
	}
	\email{picadolj@oregonstate.edu}
	
	\author{Arash Termehchy}
	\affiliation{%
		\institution{Oregon State University}
	}
	\email{termehca@oregonstate.edu}
	
	\author{Alan Fern}
	\affiliation{%
		\institution{Oregon State University}
	}
	\email{alan.fern@oregonstate.edu}
	
	\author{Parisa Ataei}
	\affiliation{%
		\institution{Oregon State University}
	}
	\email{ataeip@oregonstate.edu}
	
	\date{}

	\begin{abstract}
	Learning novel concepts and relations
	from relational databases is an important problem
	with many applications in database systems and machine learning.
	Relational learning algorithms learn the definition of a new relation in terms of existing relations in the database.
	Nevertheless, 
	the same data set may be represented under
	different schemas for various reasons, such as efficiency, data quality, and usability.
	Unfortunately, the output of current relational learning algorithms tends to vary quite substantially over the choice of schema, both in terms of learning accuracy and efficiency. This variation complicates their off-the-shelf application.
	In this paper, we introduce and formalize the property of schema independence of relational learning algorithms, and study
	both the theoretical and empirical dependence of existing algorithms on the common class of (de) composition schema transformations. 
	We study both sample-based learning algorithms, which learn from sets of labeled examples, and query-based algorithms, which learn by asking queries to an oracle.
	We prove that current relational learning algorithms are generally not schema independent.
	For query-based learning algorithms we show that the (de) composition transformations influence their query complexity.
	We propose Castor, a sample-based relational learning algorithm that achieves schema independence by leveraging data dependencies.
	We support the theoretical results with an empirical study that demonstrates the schema dependence/independence of
	several algorithms on existing benchmark and real-world datasets under (de) compositions.
	\end{abstract}

	\maketitle
	
	\section{Introduction}
\label{sec:introduction}

Over the last decade, users' information needs over
relational databases have expanded from
answering precise queries
to using machine learning 
in order to discover
interesting and novel relations and concepts~\cite{Anderson:CIDR:2013,Hellerstein:2012:PVLDB,Cate:2013:LSM:2539032.2539035}.
For instance, consider the UW-CSE database~\cite{Richardson:2006:MLN}, 
which contains information about an academic department.
Given this database, we may want to predict the {\it advisedBy(stud,prof)} relation, which indicates that student {\it stud} is advised by professor {\it prof}.
Machine learning algorithms often assume that data is represented in a single table. 
The contents of the table are the features that capture the essential information required to predict the target relation, i.e., advisedBy. 
In a typical scenario, we would be required to hand-engineer this fixed set of features~\cite{Anderson:CIDR:2013}. 
Each feature would be the result of a query to the database. We would then compute the features for each example in the training data, and store the resulting feature vectors in the table. Finally, we would run a learning algorithm. 

Three challenges arise with the described approach.
First, hand-engineering features is not an easy task. 
It is a slow and tedious process and requires significant expertise~\cite{Anderson:CIDR:2013}. 
It also restricts the algorithm from 
identifying patterns that are not 
reflected in the features or combinations of features. 
Second, by condensing information into a vector of features, we may lose the 
relational structure, which translates into loss of information.
Third, the result of the algorithm may be hard to interpret by users.

In contrast to ``table-based approaches'', relational machine learning (also called relational learning) attempts to learn concepts directly from a relational database. 
Given a database and training instances of a new target relation, relational learning algorithms attempt to induce (approximate) relational definitions of the target in terms of existing relations~\cite{progol,Quinlan:FOIL,QuickFOIL}. 
For example, given an instance of the Original schema for the UW-CSE database in Table~\ref{table:uwcse}, the goal may be to induce a definition of the missing relation {\it advisedBy(stud,prof)} based on a training set of known student-advisor pairs. 
Learned definitions are usually first-order formulas, sometimes restricted to Datalog programs. 
Importantly, such relational learning algorithms do not require the intermediate step of feature engineering. This fact, arguably, allows for the easier deployment of machine learning in the context of relational databases~ \cite{Quinlan:FOIL,QuickFOIL}.

Since the space of possible definitions (e.g. all Datalog rules) is enormous,
relational learning algorithms must employ heuristics, or biases, to search
for effective definitions. 
Unfortunately, such heuristics typically depend on the precise choice of schema of the underlying database, which means that the learning output is schema dependent. This is true even 
if the schemas represent essentially the same information.
As an example, Table~\ref{table:uwcse} shows two schemas for the
UW-CSE database, which is used as a common relational learning
benchmark. 
The original schema was designed by relational learning experts. 
This design is generally discouraged in the database community, as it delivers poor usability and performance in query processing without providing any advantages in terms of data quality in return~\cite{AliceBook}.
A database designer may use a schema closer to
the 4NF schema in Table~\ref{table:uwcse}. 
Because each student {\it stud} has only one {\it phase} and {\it years}, 
a database designer may compose relations {\it student}, {\it inPhase}, 
and {\it yearsInProgram}. She may also combine relations 
{\it professor} and {\it hasPosition}.
This would result in a more understandable schema  
with shorter query execution time, without introducing any redundancy.

\begin{example}
	\label{example:foil_uwcse_definitions}
	We use the classic relational learning algorithm
	FOIL~\cite{Quinlan:FOIL} to induce a definition for {\it advisedBy(stud, prof)} 
	over the Original and 4NF schemas of the UW-CSE database, shown in Table~\ref{table:uwcse}.
	FOIL learns a Datalog rule by starting from an empty rule
	and iteratively adding atoms 
	to the rule such that the resulting rule 
	at each step has the best score: 
	it covers the most positive and the fewest 
	negative examples.
	FOIL learns the following Datalog rule over the 
	UW-CSE database with the Original schema:
	{\small
		\begin{align*}
		\mathit{advisedBy}(x,y) \leftarrow & \mathit{yearsInProgram}(x,7), \mathit{publication}(z,x), \\ & \mathit{publication}(z,y),
		\end{align*} 
	}
	which covers 5 positive examples and 0 negative examples.
	Because $\mathit{yearsInProgram}(x,7)$ covers the most positive and 
	the fewest negative examples in this database, 
	FOIL picks it as the first atom and proceeds by 
	adding the rest of atoms to the rule.
	On the other hand, FOIL learns the following Datalog rule 
	over the 
	4NF schema:
	{\small
		\begin{align*}
		\mathit{advisedBy}(x,y) \leftarrow & \mathit{student}(x,\mathit{post\_generals}, 5), \mathit{professor}(y,\mathit{faculty}), \\
		& \mathit{publication}(z,y), \mathit{taughtBy}(v,y,w),
		\end{align*} 
	}
	which covers 12 positive examples and 10 negative examples.
	In this case, FOIL first selects $\mathit{student}(x,{\mathit post\_generals}, 5)$ because it covers the most 
	positive and the fewest negative examples.
	Because FOIL does not backtrack, then the definitions over both schemas are different, even if the rest of the atoms added to the rules are the same.
	Intuitively, the definition learned over the original schema better expresses the relationship between an advisor and advisee.
\end{example}

\begin{table}
	\small
	\centering
	\begin{tabular} { l l}
		\hline
		Original Schema & 4NF Schema\\
		\hline
		student(\underline{stud}) & student(\underline{stud},phase,years) \\
		inPhase(\underline{stud},phase) &  professor(\underline{prof},position)\\
		yearsInProgram(\underline{stud},years) & publication(\underline{title},\underline{person}) \\
		professor(\underline{prof}) & courseLevel(\underline{crs},level) \\
		hasPosition(\underline{prof},position) & taughtBy(\underline{crs},\underline{prof},\underline{term}) \\
		publication(\underline{title},\underline{person}) & ta(\underline{crs},\underline{stud},\underline{term}) \\
		courseLevel(\underline{crs},level) & \\
		taughtBy(\underline{crs},\underline{prof},\underline{term}) & \\
		ta(\underline{crs},\underline{stud},\underline{term}) &   \\
	\end{tabular}
	\caption{ {\small Schemas for the UW-CSE dataset.} }
	\label{table:uwcse}
\end{table}

Generally, there is no
canonical schema for a particular set of content in practice
and people often represent the same information in
different schemas for several reasons~\cite{AliceBook,infopreserve:XML}.
For example, it is generally easier to enforce
integrity constraints
over highly normalized schemas \cite{AliceBook}.
On the other hand, because more normalized
schemas usually contain many relations,
they are hard to understand and maintain. 
It also takes a relatively long time
to answer queries over database instances
with such schemas \cite{AliceBook}.
Thus, a database designer may sacrifice data quality and
choose a more {\it denormalized}
schema for its data to achieve better usability
and/or performance.
She may also hit a middle ground
by choosing a style of design for some relations
and another style for other relations in the schema.
Further, as the relative priorities of these objectives
change over time, the schema will also
evolve.

Users generally have to restructure their databases, in order to
effectively use relational learning algorithms,
i.e., deliver
definitions for the target concepts that a domain expert
would judge as correct and relevant.
To make matters worse, these algorithms do not normally
offer any clear description of their desired schema and
database users have to rely on their own
expertise and/or do trial and error to find such schemas.
Nevertheless, we ideally want our database analytics algorithms
to be used by ordinary users, not just experts
who know the internals of these algorithms.
Further, the structure
of large-scale databases constantly evolves, 
and we want to move away from the need for
constant expert attention to keep learning
algorithms effective.
Researchers often use
(statistical) relational learning algorithms to solve
various important core database problems,
such as query processing~\cite{Abouzied:PODS:13}, schema mapping~\cite{Cate:2013:LSM:2539032.2539035}, and entity resolution \cite{Getoor:KDD:13}.
Thus, the issue of schema dependence appears in other areas of database management.

One approach to solving the problem of schema dependence
is to run a learning algorithm over {\it all possible schemas}
for a validation subset of the data and select the
schema with the most accurate answers.
Nonetheless, computing all possible schemas
of a DB is generally undecidable \cite{infopreserve:XML}.
One may limit the search space
to a particular family of schemas to
make their computation decidable.
For instance, she may choose to check only schemas that
can be transformed via join and project operations, i.e.
composition and decomposition \cite{AliceBook}.
However, the number of possible schemas
within a particular family of a data set
are extremely large.
For example, a relational table may have
exponential number of distinct decompositions.
As many learning algorithms need some time for
parameter tunning under
a new schema \cite{MLBase:CIDR}, it may take a
prohibitively long time to find the best schema.
Since many relational learning algorithms need to access
the content of the database,
one has to transform the underlying data
to the desired schema, which may not be practical
for a large and/or constantly evolving database.

In this paper, we introduce the novel property of {\it schema independence} for relational learning algorithms, i.e., the ability to deliver the same
answers regardless of the choices of schema for the same data. We propose a formal framework to 
evaluate the property of
schema independence of a relational learning
algorithm for a given family of schema changes. 
Since none of the current relational learning algorithms are schema 
independent, we leverage concepts from database literature to 
design a schema independent algorithm.
To the best of our knowledge,
the property of schema independence has not been
introduced or explored for relational learning algorithms.
The main contributions of this paper are: 
\squishlist
\item
We introduce and formally define the property of
schema independence (Section~\ref{section:framework}), which formalizes the notion of a 
learning algorithm returning equivalent answers over 
schema transformations that preserve
information content.

\item
We analyze the property of schema independence
for the popular family of top-down relational learning algorithms~\cite{progol,Quinlan:FOIL,Richardson:2006:MLN} (Section~\ref{section:top-down}). 
We show that this family of algorithms is not schema independent
under (de) composition transformations.

\item
We explore the property of schema independence for
another family of widely used
relational learning algorithms, known as bottom-up algorithms (Section~\ref{section:bottom-up}).
We formally analyze the bottom-up algorithms Golem~\cite{golem} and ProGolem~\cite{progolem} and show that they are not schema dependent under (de) composition.

\item
We introduce Castor, a bottom-up algorithm that is
provably schema independent under vertical (de) composition (Section~\ref{section:castor}).
Castor achieves schema independence by integrating database constraints, specifically inclusion dependencies, into the learning algorithm.
Castor uses various techniques, such as employing a main-memory RDBMS, to learn efficiently over large databases.

\item Some relational learning algorithms learn the target
concepts using by asking queries from
an {\it oracle}, e.g., a database user, instead of 
using a set of prepared training data \cite{Khardon:1999,Selman:2011,Abouzied:PODS:13}.
They are called 
{\it query-based learning algorithms}.
We formalize the notion of schema independence for 
these type of algorithms and prove that well-known 
algorithms in this category are not schema independent
(Section~\ref{section:worstcase}).

\item
We empirically compare the schema
independence, effectiveness, and efficiency of Castor to some popular relational learning
algorithms under 
(de) composition using a widely used benchmark and
three real-world databases (Section~\ref{section:empirical}).
Our empirical results generally confirm our
theoretical results.
Our results show that Castor is more efficient and 
as effective as or more effective than current algorithms.
\end{list}

	\section{Background}
\label{section:background}

\subsection{Related Work}
\label{section:related}
There has been a growing interest in developing relational learning algorithms that scale to large databases~\cite{QuickFOIL,join:crossmine,amie-plus}.
QuickFOIL~\cite{QuickFOIL} provides an in-RDBMS implementation of a modified version of FOIL.
AMIE+~\cite{amie-plus} learns rules from RDF-style knowledge bases, which contain binary relations. 
These systems focus on scaling learning algorithms for large databases.
We develop a schema independent relational learning algorithm and, as opposed to them, do not modify the internals of the RDBMS.
The system in~\cite{Kumar:2015:LGL:2723372.2723713} 
learns linear models over multiple relations efficiently. 
Our aim, however, is to achieve schema independence.
Also, their system assumes that all relations can be joined into one universal relation, 
which is not generally true in relational databases. 
Moreover, it learns linear models and not Datalog definitions.

Researchers have noticed that using likelihood functions to measure joint probability distributions of attributes 
in a relational database may lead to different results over certain variations of the database design~\cite{Schulte2011}. 
They have proposed using pseudolikelihood functions to approximate the joint probability distributions, which is robust
over some schema variations. Nevertheless, the authors in \cite{Schulte2011} do not provide any general and formal 
framework to explore the sensitivity of learning algorithms to the database design. 

We build upon the body of work on transforming databases without 
modifying their content by
exploring the sensitivity of relational learning algorithms to
such transformations \cite{infopreserve:hull,infopreserve:XML}. 
Another notable group of database transformations is schema mapping for data exchange \cite{DBLP:conf/icdt/FaginKMP03}. 
These transformations may lose information and introduce incomplete information to a database. 
However, for the property of schema independence, a transformation should preserve the information content of databases.
Fagin explores invertible schema mappings that preserve the information content of database instances \cite{Fagin-inverse}.
Nevertheless, these mappings may introduce labeled nulls to the database instance.
To the best of our knowledge, relational learning over database instances with labeled nulls 
has not been precisely defined and explored.
Similar to answering queries over databases with labeled nulls, 
we believe that it is challenging to define reasonable semantics and design efficient and effective 
algorithms for learning relations over such databases. 
Hence, it takes more space than a single paper to include investigations of relational learning over instances with labeled null 
and define schema independence property for transformations that introduce labeled nulls to a database.
We leave them for future work.
Researchers have defined the property of design independence
for keyword query processing over XML \cite{SchemaIndep:ICDE}.
We extend this line of work by formally exploring
the property of schema independence for relational
learning algorithms. We focus on supervised learning algorithms and their schema independence properties over the relational data model.

The architects of the relational model have argued for logical data independence, which oversimplifying a bit, means that an exact query should return the same answers no matter which logical schema is chosen for the data \cite{AliceBook,Codd:Computerworld:85}.
In this paper, we extend the principle of logical data independence 
for relational learning algorithms. 
The property of schema independence also differs with the idea of logical data independence in a subtle but important issue.
One may achieve logical data independence
by an affordable amount of experts' intervention, e.g.,
defining views over the database. 
However, it takes deeper
expertise to find the proper schema for a
learning algorithm, particularly for database applications that contain more than a single learning algorithm. 
Hence, it is less likely to achieve schema independence via expert's intervention.

\subsection{Basic Definitions}
\label{section:basics}
We fix two disjoint (countably) infinite sets of relation 
and attribute symbols. Each relation symbol $R$ is 
associated with a set of attribute symbols denoted 
as $sort(R)$. Let $D$ be a countably infinite domain of
values, i.e., constants.
An instance $I_R$ of relation symbol $R$
with $n=$ $|sort(R)|$ is a (finite) relation over $D^n$.
{\it Schema} $\mR$ is a 
pair $({\bf R}, \Sigma)$, where ${\bf R}$ and $\Sigma$
are finite sets of relations symbols and 
{\it constraints}, respectively.
A constraint restricts the properties of 
data stored in a database.
Examples of constraints are {\it functional dependencies} (FD)
and {\it inclusion dependencies} (IND), i.e., referential
integrity. 
Let $\pi_{X}(I_R)$, $X \subseteq sort(R)$,
denote the projection of relation
$I_R$ on attribute set $X$.
Relation $I_R$ satisfies 
FD $X \rightarrow Y$, where $X, Y$ $\subset sort(R)$,
if for each pair $s,t$ of tuples in $I_R$, 
$\pi_{X}(s)=$ $\pi_{X}(t)$ implies 
$\pi_{Y}(s)=$ $\pi_{Y}(t)$.
Given relation symbols $R$ and $S$ and 
sets of attributes $X \in sort(R)$ and $Y \in sort(S)$, 
relations $I_R$ and $I_S$ satisfy IND 
$R[X] \subseteq S[Y]$ if 
$\pi_{X}(I_R) \subseteq$ $\pi_{Y}(I_S)$. 
If both INDs $R[X]\subseteq$ $S[X]$ and
$S[X]\subseteq$ $R[X]$ hold in a schema, 
we denote them as $R[X] =$ $S[X]$ and call it 
an {\it IND with equality}.
An {\it instance} of schema $\mR$ is a mapping $I$ over $\mR$
that associates each relation $R \in \mR$ to an instance $I_R$ 
that satisfies all constraints in $\Sigma$. 
The set $\Sigma$ may logically imply other constraints, e.g., 
FD $X \rightarrow Y$ and $Y \rightarrow Z$
imply $X \rightarrow Z$ \cite{AliceBook}. The set of 
all constraints implied by $\Sigma$ is shown 
as $\Sigma^{+}$. To simplify our notations,
we use $\Sigma$ and $\Sigma^{+}$ interchangeably.

An {\it atom} is a formula in the form of $R(u_1, \ldots, u_n)$ 
where $R$ is a relation symbol, $n=$ $|sort(R)|$, 
and each $u_i$, $1 \leq i \leq n$, 
is a variable or constant.
If all $u_i$s are constants, the atom is a {\it ground atom}.
A {\it literal} is an atom, or the negation of an atom.
A {\it ground literal} is a literal whose atom is a ground atom.
A {\it definite Horn clause} (Horn clause for short) is a finite set of literals that contains exactly one positive literal. The positive literal is called the head of the clause, and the set of negative literals is called the body.
A clause has the form:
$T({\bf u}) \leftarrow L_1({\bf u_1}), \cdots, L_{n}({\bf u_{n}}).$
Horn clauses are also called conjunctive queries \cite{AliceBook}.
A {\it Horn definition}, i.e., union of conjunctive queries, 
is a set of Horn clauses with the same head literal.
A Horn definition is defined over a schema if the body of all clauses 
in the definition contain only literals whose relations are 
in the schema.
In this paper, we use Horn definitions to define new target relations
that are not in the schema. Thus, the heads of all clauses in these
definitions are the {\it target relation}.

	\section{Framework}
\label{section:framework}

\subsection{Relational Learning}
\label{sec:relational-learning}
Relational learning can be viewed as a search
problem for a hypothesis that deduces the training data,
following either a top-down or bottom-up approach.
Top-down algorithms~\cite{Quinlan:FOIL,progol} start from the most
general hypothesis and employ specialization operators to get more
specific hypotheses. A common specialization operator is the addition 
of a new literal to the body of a clause.
On the other hand, bottom-up algorithms~\cite{golem, progolem}
start from specific hypotheses that are constructed based on ground
training examples, and use generalization operators to search the 
hypothesis space. Generalization operators include inverse 
resolution, relative least general generalization, asymmetric 
relative minimal generalization, among others.
Thus, a relational learning algorithm is a 
sequence of steps, where in each step an operator is applied to the 
current hypothesis.

Inductive Logic Programming (ILP) is the subfield of machine learning that performs relational learning by learning 
first-order definitions from examples and an input relational database.
In this paper we use the names ILP algorithm and relational learning 
algorithm interchangeably.
Training examples $E$ are usually 
tuples of a single target relation 
$T$, which express positive ($E^+$) or negative 
($E^-$) examples. 
The learned definitions are called the hypothesis $H$, 
which is usually restricted to Horn definitions for efficiency reasons. 
A relational learning algorithm takes as input training data $E$, 
database instance $I$, and target relation $T$, and 
learns a hypothesis $H$ that, together with $I$, entails $E$. 
The input database instance $I$ is also 
called {\it background knowledge}.
More formally, the learning problem is described as follows:
\begin{definition}
	Given background knowledge $I$, positive examples $E^+$, negative examples $E^-$, 
	and a target relation $T$, the ILP task is to find a definition $H$ for $T$ such that:
	\squishlist
		\item $\forall p \in E^+, H \wedge I \models p$ (completeness)
		\item $\forall p \in E^-, H \wedge I \not\models p$ (consistency)
	\end{list}
\end{definition}
In the following sections we provide concrete definitions of several relational learning algorithms.

\begin{example}
	\label{example:relational_learning}
	Consider using a relational learning algorithm and the UW-CSE 
	database with the Original schema shown in Table~\ref{table:uwcse} to 
	learn a definition for the target relation 
	$\mathit{collaborated}(x,y)$, 
	which indicates that person $x$ has collaborated with person $y$.
	The algorithm may return definition 
	\begin{align*}
		\small
		\mathit{collaborated}(x,y) \leftarrow \mathit{publication}(p,x), \mathit{publication}(p,y).
	\end{align*} 
	This is a complete and consistent definition with respect 
	to the training data, and indicates that two persons have collaborated if they are co-authors.
\end{example}

In this paper, we study relational learning algorithms for Horn definitions.
We denote the set of all Horn definitions over schema $\mR$ by $\mH\mD_{\mR}$.
This set can be very large, which means that algorithms would need a lot of resources (e.g. time and space) to explore all definitions.
Because, resources are limited in practice, 
algorithms accept parameters that either restrict the hypothesis space or the search strategy.
For instance, an algorithm may consider only clauses whose number
of literals are fewer than a given number, or may follow a greedy approach where only one clause is considered at a time.
Let the {\it parameters} for a learning algorithm be a tuple of variables $\theta = \langle \theta_1, ..., \theta_r \rangle$, where each $\theta_i$ is a parameter for the algorithm.
We denote the parameter space by $\Theta$, and it contains all possible parameters for an algorithm.
We denote the hypothesis space (or language) of algorithm $A$ over
schema $\mR$ with parameters $\theta$ as $\mL^{A}_{\mR, \theta}$.
Note that not all parameters affect the hypothesis space.
For instance, a parameter setting the search strategy to greedy impacts 
how the hypothesis space is explored, but does not restrict the hypothesis space.
The hypothesis space $\mL^{A}_{\mR, \theta}$ is a subset of $\mH\mD_{\mR}$ \cite{progol,Quinlan:FOIL}, and
each member of $\mL^{A}_{\mR, \theta}$ is a hypothesis.

There is a trade-off between computational resources used by 
an algorithm and the size of its hypothesis space.
The hypothesis space is restricted so that the algorithm can be 
used in practice, with the hope that it finds a consistent and complete hypothesis.
\begin{example}
	\label{example:parameter_settings}
	Continuing Example~\ref{example:relational_learning}, consider restricting the hypothesis space to clauses whose number
	of literals are fewer than a given number, which we call clause-length.
	Assume that we are now interested in learning a definition for the target relation {\it collaboratedProf(x,y)}, which indicates that professor $x$ has collaborated with professor $y$, under the Original schema.
	If we set clause-length $= 5$, the learning algorithm is able to learn the complete and consistent definition
    \begin{align*}
    	\small
		\mathit{collaboratedProf}(x,y) \leftarrow& \mathit{professor}(x), \mathit{professor}(y), \\
		& \mathit{publication}(p,x), \mathit{publication}(p,y).
	\end{align*}
	However, if we set clause-length $= 3$, the previous definitions is not in the hypothesis space of the algorithm.
	Thus, the algorithm cannot learn this or any other complete and consistent definition.
\end{example}

\subsection{Schema Independence}
\label{sec:schema_independence}

\subsubsection{Mapping Database Instances}
One may view a schema
as a way of representing background knowledge used
by relational learning algorithms
to learn the definitions of target relations.
Intuitively, in order to learn essentially the same
definitions over schemas $\mR$ and $\mS$, we should
make sure that {\it $\mR$ and $\mS$ represent basically
the same information}.
Let us denote the set of database instances
of schema $\mR$ as $\mI(\mR)$. In order to compare the ability
of $\mR$ and $\mS$ to represent the same information,
we would like to check whether for each database instance
$I \in \mI(\mR)$ there is a database instance
$J \in \mI(\mS)$ that contains basically the same information as
$I$. We adapt the notion of equivalency between schemas
 to precisely state this idea  \cite{infopreserve:hull,infopreserve:XML}.

Given schemas $\mR$ and $\mS$,
a {\it transformation} is a (computable) function
$\tau: \mI(\mR) \rightarrow \mI(\mS)$.
For brevity, we write transformation $\tau$ as
$\tau: \mR \rightarrow \mS$.
Transformation $\tau$ is {\it invertible} iff
it is total and there exists a transformation
$\tau^{-1}: \mS \rightarrow \mR$
such that the composition of $\tau$ and $\tau^{-1}$
is the identity mapping on $\mI(\mR)$, that is
$\tau^{-1}(\tau(I)) = I$ for $I \in \mI(\mR)$.
The transformation $\tau^{-1}$ may or
may not be total.
We call $\tau^{-1}$ the {\it inverse} of $\tau$
and say that $\tau$ is {\it invertible}.
If transformation $\tau$ is invertible,
one can convert every instance $I\in \mI(\mR)$
to an instance $J \in \mI(\mS)$ and reconstruct
$I$ from the available information in $J$.
If $\tau: \mR \rightarrow \mS$ is {\it bijective},
schemas $\mR$ and $\mS$ are 
{\it information equivalent} via $\tau$.
Informally, if two schemas are information equivalent,
one can convert the databases represented
using one of them to the other
without losing any information. Hence,
one can reasonably argue that equivalent schemas
essentially represent the same information.
Our definition of information equivalence between two
schemas is more restricted that the ones proposed in
\cite{infopreserve:hull,infopreserve:XML}.
We assume that in order for schemas $\mR$ and $\mS$ to be information equivalent via $\tau$,
$\tau^{-1}$ has to be total. 
Although more restricted, this definition is sufficient to cover the transformations discussed in this paper.

\begin{example}
\label{example:eqschema}
In addition to the functional dependencies shown in
Table~\ref{table:uwcse},
let the following inclusion
dependencies hold over the relations of Original
schema in this table:
$student[stud]$ $=$ \\$inPhase[stud]$,
$student[stud]$ $=$ $yearsInProgram[stud]$,\\
$professor[prof]$ $=$ $hasPosition[prof]$,
One may join relations $student$, $inPhase$,
and 
$yearsInPrograms$ and join relations
$professor$ and 
$hasPosition$ to map
each instance of the Original schema to
an instance of the 4NF schema.
Also, each instance of the 4NF schema can be
mapped to an instance of the Original schema by
projecting relation $student$ to relations
$student$, $inPhase$, and $yearsInProgram$ and
projecting relation $professor$ to
relations $hasPosition$ and $professor$.
Hence, these schemas are information equivalent.
\end{example}

\subsubsection{Mapping Definitions}
Let $\mH\mD_{\mR}$ be the set of all Horn definitions over schema $\mR$.
In order to learn semantically equivalent definitions over schemas $\mR$ and $\mS$, we should make sure that
the sets $\mH\mD_{\mR}$ and $\mH\mD_{\mS}$ are equivalent. That is, for every definition $h_{\mR} \in \mH\mD_{\mR}$,
there is a semantically equivalent Horn definition in $\mH\mD_{\mS}$, and vice versa.
If the set of Horn definitions
over $\mR$ is a superset or subset of the set of Horn definitions over $\mS$,
it is not reasonable to expect a learning algorithm
to learn semantically equivalent definitions in $\mR$ and $\mS$.

Let $\mL_{\mR}$ be a set of Horn definitions over schema $\mR$ such that $\mL_{\mR} \subseteq \mH\mD_{\mR}$.
Let $h_{\mR} \in \mL_{\mR}$ be a Horn definition over schema $\mR$ and $I \in \mI(\mR)$ be a database instance.
The result of applying a Horn definition $h_{\mR}$ to database instance
$I$ is the set containing the head of all instantiations 
of $h_{\mR}$ for which
the body of the instantiation belongs to $\mI(\mR)$.
$h_{\mR}(I)$ shows the result of $h_{\mR}$ on $I$.

\begin{definition}
\label{def:def-preserving}
Transformation $\tau: \mR \rightarrow \mS$
is {\it definition preserving}
w.r.t. $\mL_{\mR}$ and $\mL_{\mS}$
iff there exists a total function
$\delta_{\tau}: \mL_{\mR} \rightarrow \mL_{\mS}$
such that for every definition $h_{\mR} \in \mL_{\mR}$
and $I \in \mI(\mR)$, $h_{\mR}(I)$
$=\delta_{\tau}(h_{\mR})(\tau(I))$.
\end{definition}
Intuitively, Horn definitions $h_{\mR} $ and
$\delta_{\tau}(h_{\mR} )$ deliver the same
results over all corresponding database
instances in $\mR$ and $\mS$.
We call function $\delta_{\tau}$ a 
{\it definition mapping}
for $\tau$.
Transformation $\tau$ is {\it definition bijective} w.r.t.
 $\mL_{\mR}$ and $\mL_{\mS}$ iff $\tau$ and $\tau^{-1}$ are
definition preserving w.r.t. $\mL_{\mR}$ and $\mL_{\mS}$.

If $\tau$ is definition bijective w.r.t. 
equivalent sets of Horn definitions,
one can rewrite each Horn definition over 
$\mR$ as a Horn definition over $\mS$
such that they return the same results over all corresponding
database instances of $\mR$ and $\mS$, and vice versa.
We call these definitions {\it equivalent}.
We use the operator $\equiv$ to show that two definitions are equivalent.

\subsubsection{Relationship Between Bijective and Definition Bijective Transformations}
In order for a learning algorithm to learn equivalent
definitions over schemas $\mR$ and $\mS$, 
where $\tau: \mR \rightarrow \mS$, $\tau$ should be 
both bijective and definition bijective w.r.t. $\mH\mD_{\mR}$ 
and $\mH\mD_{\mS}$.
If $\tau$ is bijective, the learning algorithm takes as 
input the same background knowledge.
Also, a definition bijective transformation ensures that the 
learning algorithm can output equivalent Horn definitions over 
both schemas.
Nevertheless, it may be hard to check both
conditions for given schemas. Next, we extend the
results in \cite{infopreserve:XML} to
find the relationship between
the properties of
bijective and definition bijective 
transformations.
In this paper, we consider only transformations that can be 
written as sets of Horn definitions.
We call these {\it Horn transformations}.
Composition/ decomposition are well-known examples of
Horn transformations \cite{AliceBook}.
\begin{example}
\label{example:transformations}
Let $\mR$ be the Original schema and $\mS$ be the 4NF schema in Example~\ref{example:eqschema}. The transformation from the Original schema to the 4NF schema 
	can be written as the following set of Horn definitions:
	{\small 
	\begin{align*}
	\mathit{student}(x,y,z) \leftarrow& \mathit{student}(x), \mathit{inPhase}(x,y), \\
	& \mathit{yearsInProgram}(x,z). \\
	\mathit{professor}(x,y) \leftarrow& \mathit{professor}(x), \mathit{hasPosition}(x,y). \\
	\mathit{publication}(x,y) \leftarrow& \mathit{publication}(x,y).
	\end{align*}
	}
	\\
	The inverse of this transformation from the 4NF to Original schema is a set of projection operators, which can also be written as a set of Horn definitions.
\end{example}
Let transformation $\tau: \mR \rightarrow \mS$ and its 
inverse $\tau^{-1}: \mS \rightarrow \mR$
be Horn transformations.
Clearly, the head of each Horn definition in 
$\tau^{-1}$ will be a relation in $\mR$.
Let $h_{\mR}$ be a Horn definition in $\mH\mD_{\mR}$.
The composition of $h_{\mR}$ and $\tau^{-1}$,
denoted by $h_{\mR} \circ \tau^{-1}$,
is a Horn definition that belongs to $\mH\mD_{\mS}$, created
by applying $h_{\mR}$ to the heads of
clauses in $\tau^{-1}$ \cite{AliceBook}.
That is, $h_{\mR} \circ \tau^{-1} (J) = h_{\mR}(\tau^{-1}(J))$, for all $J \in \mI(\mS)$.

\begin{proposition}
\label{proposition:program-equivalent}
Given schemas $\mR$ and $\mS$,
if transformation $\tau: \mR \rightarrow \mS$ is
bijective and both $\tau$ and $\tau^{-1}$
are Horn transformations, then $\tau$ is definition bijective
w.r.t $\mH\mD_{\mR}$ and $\mH\mD_{\mS}$.
\end{proposition}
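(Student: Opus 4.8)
The plan is to produce the required definition mapping $\delta_{\tau}$ explicitly, by composing a definition with the inverse transformation, and then to read off the defining equation of definition preservation directly from the composition identity together with invertibility. Concretely, I would set $\delta_{\tau}(h_{\mR}) = h_{\mR} \circ \tau^{-1}$ for every $h_{\mR} \in \mH\mD_{\mR}$, where $\circ$ is the composition operator introduced just above the statement.

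First I would argue that $\delta_{\tau}$ is a well-defined total function from $\mH\mD_{\mR}$ into $\mH\mD_{\mS}$. Because $\tau^{-1}$ is a Horn transformation (as fixed in the setup preceding the proposition), each relation $R$ of $\mR$ is the head of a Horn definition in $\tau^{-1}$ whose body mentions only relations of $\mS$. Given any $h_{\mR} \in \mH\mD_{\mR}$, unfolding each body predicate of $h_{\mR}$ by the corresponding definition in $\tau^{-1}$ yields a Horn definition over the relations of $\mS$; this is exactly $h_{\mR} \circ \tau^{-1}$, and it lies in $\mH\mD_{\mS}$. Since this unfolding can be carried out for every $h_{\mR}$, the map $\delta_{\tau}$ is total.

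Next I would verify the preservation equation. By the composition property recalled in the excerpt, $(h_{\mR} \circ \tau^{-1})(J) = h_{\mR}(\tau^{-1}(J))$ for every $J \in \mI(\mS)$. Instantiating $J = \tau(I)$ for an arbitrary $I \in \mI(\mR)$ gives
\begin{equation*}
\delta_{\tau}(h_{\mR})(\tau(I)) = (h_{\mR} \circ \tau^{-1})(\tau(I)) = h_{\mR}(\tau^{-1}(\tau(I))).
\end{equation*}
Since $\tau$ is invertible and total, $\tau^{-1}(\tau(I)) = I$, so the right-hand side equals $h_{\mR}(I)$, which is precisely the identity required by Definition~\ref{def:def-preserving}. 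As $I$ and $h_{\mR}$ were arbitrary, this establishes that $\tau$ is definition preserving w.r.t.\ $\mH\mD_{\mR}$ and $\mH\mD_{\mS}$.

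The main thing to get right is not the computation---once the composition identity is in hand, the equation is essentially immediate---but justifying that the composite $h_{\mR} \circ \tau^{-1}$ genuinely lands in $\mH\mD_{\mS}$ rather than in some larger class. This hinges on $\tau^{-1}$ being expressible as a Horn transformation, so that composition reduces to predicate unfolding and preserves the Horn form; the semantic correctness of this unfolding (that substituting the defining bodies preserves the computed relation) is the standard Datalog/logic-programming fact cited as \cite{AliceBook} in the composition definition, which I would invoke rather than reprove.
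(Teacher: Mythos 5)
Your proposal is correct and follows essentially the same route as the paper: both define $\delta_{\tau}(h_{\mR}) = h_{\mR} \circ \tau^{-1}$ and derive the preservation equation from $\tau^{-1}(\tau(I)) = I$, relying (as the paper does, via its preceding discussion and the \cite{AliceBook} citation) on the assumption that $\tau^{-1}$ is a Horn transformation so that the composite lies in $\mH\mD_{\mS}$. Your added justification of totality and of the unfolding argument merely spells out what the paper asserts before the proposition, so there is no substantive difference.
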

\begin{proof}
	Let us define a function 
	$\delta_{\tau}: \mH\mD_{\mR} \rightarrow \mH\mD_{\mS}$
	to be $\delta_{\tau}(h_{\mR}) = h_{\mR} \circ \tau^{-1}$ for any $h_{\mR} \in \mH\mD_{\mR}$.
	We know that $\delta_{\tau}(h_{\mR}) \in \mH\mD_{\mS}$.
	Furthermore, for every $h_{\mR} \in \mH\mD_{\mR}$ and $I \in \mI_{\mR}$,
	$h_{\mR}(I) = h_{\mR}(\tau^{-1}(\tau(I))) = (h_{\mR} \circ \tau^{-1})(\tau(I))) = \delta_{\tau}(h_{\mR})(\tau(I))$.
	Similarly, we define a function $\delta^{'}_{\tau}: \mH\mD_{\mS} \rightarrow \mH\mD_{\mR}$ as 
	$\delta^{'}_{\tau}(h_{\mS}) = h_{\mS} \circ \tau$ for any $h_{\mS} \in \mH\mD_{\mS}$.
	Clearly, $\delta^{'}_{\tau}(h_{\mS}) \in \mH\mD_{\mR}$.
	Also, for every $h_{\mS} \in \mH\mD_{\mS}$ and every $J \in \mI_{\mS}$ such that there is an $I \in \mI_{\mR}$ where $I = \tau{J}$,
	$h_{\mS}(J) = h_{\mS}(\tau(I)) = (h_{\mS} \circ \tau)(I) = \delta^{'}_{\tau}(h_{\mS})(I)$.
	Thus, $\tau$ is definition preserving w.r.t. $\mH\mD_{\mR}$ and $\mH\mD_{\mS}$. 
\end{proof}
Intuitively, if 
$\tau: \mR \rightarrow \mS$ is bijective and both 
$\tau$ and $\tau^{-1}$ are 
Horn transformation, every Horn 
definition in $\mH\mD_{\mR}$ can be 
rewritten as a Horn definition in $\mH\mD_{\mS}$ such that they return the 
same results over equivalent database instances.
Hence, in the rest of this paper, 
we consider only the 
bijective Horn transformations whose 
inverses are Horn transformations.

\begin{example}
\label{example:eqlang}
Let $\mR$ be the Original schema and $\mS$ be the 4NF schema in Example~\ref{example:eqschema} and $\tau: \mR \rightarrow \mS$ 
$\tau^{-1}: \mS \rightarrow \mR$ are the Horn transformation explained in Example~\ref{example:transformations}.
According to Proposition \ref{proposition:program-equivalent},
$\tau$ is definition bijective w.r.t. $\mH\mD_{\mR}$ and $\mH\mD_{\mS}$.
\end{example}

\subsubsection{Schema Independence Property}
\label{sec:schema-independence-property}
The {\bf hypothesis space} determines the set of possible Horn definitions that the algorithm can explore.
Therefore, the output of a learning algorithm depends on its hypothesis space.
In Example \ref{example:parameter_settings}, we showed that an algorithm is able to learn a definition for a target relation with some hypothesis space but not in another more restricted space. 
In order for an algorithm to
learn semantically equivalent definitions for a
target relation over schemas $\mR$ and $\mS$, it should have
equivalent hypothesis spaces over $\mR$ and $\mS$.
We call this property hypothesis invariance.
Let $\Theta$ be the parameter space for algorithm $A$.

\begin{definition}
	\label{hypothesis_invariace}
	Algorithm $A$ is hypothesis invariant under transformation $\tau: \mR \rightarrow \mS$ iff
	$\tau$ is definition bijective w.r.t. 
	$\mL^{A}_{\mR, \theta}$ and $\mL^{A}_{\mS, \theta}$, for all $\theta \in \Theta$.
\end{definition}
Algorithm $A$ is hypothesis invariant under 
a set of transformations iff $A$ is hypothesis invariant 
under every transformation in the set.
We now define the notion of schema independence for relational learning algorithms over a bijective transformation.
We define a relational learning algorithm as a function
$A(I, E, \theta)$ to $\mL^{A}_{\mR, \theta}$. 
That is, taking as input a database instance $I$,
training examples $E$, and parameters $\theta \in \Theta$, the
algorithm outputs a hypothesis in
$\mL^{A}_{\mR, \theta}$.
\begin{definition}
	\label{schema_independence}
	 Algorithm $A$ is schema independent
	 under bijective transformation 
	 $\tau:\mR \rightarrow \mS$ iff $A$ is hypothesis 
	 invariant under $\tau$ and
     for every $I \in \mI(\mR)$ 
     and all $\theta \in \Theta$,
	 we have:
$A(\tau(I), E, \theta) \equiv \delta_{\tau} (A(I, E, \theta))$,
		where $\delta_{\tau}$ is the definition
		mapping for $\tau$.
\end{definition}
Algorithm $A$ is schema independent under the 
set of transformations iff it is schema independent under 
each transformation in the set.
Note that if an algorithm is schema independent under transformation $\tau$, it is hypothesis invariant under $\tau$.
However, it is possible for an algorithm not to be schema independent, but be hypothesis invariant. In such cases, the cause of schema dependence must necessarily be related to the search process of the algorithm, rather than hypothesis representation capacity.

\begin{example}
	\label{example:schema_independence}
	Consider the Original schema and the 4NF schema in in Example~\ref{example:eqschema}. The Original schema 
	is the result of a decomposition of the 4NF schema.
	Consider the learning algorithm FOIL. 
	If the target relation is {\it collaboratedProf(x,y)}, as in 
	Example~\ref{example:parameter_settings}, FOIL is able to learn 
	equivalent definitions under the Original schema and the 
	4NF schema.
	But, if the target relation is {\it advisedBy(x,y)}, FOIL 
	learns non-equivalent definitions under these schemas, as seen in 
	Example \ref{example:foil_uwcse_definitions}, and 
	is not schema independent.
\end{example}

	\section{Decomposition and Composition}
\label{transformations}
There are many bijective Horn
transformations between relational schemas
\cite{infopreserve:hull,AliceBook}. It takes
more space than a single paper to explore
the behavior of relational learning algorithms over 
all such transformations.
In this paper, we explore the schema independence of
relational learning algorithms under two widely used 
Horn transformations called {\it decomposition}, where the 
transformation is projection, and {\it composition}, where 
the transformation is natural join \cite{AliceBook}.
Our reasons for selecting these transformations are two fold.
First, they are used in most normalizations and
de-normalizations, e.g., 3rd normal
form. which are 
arguably one of the most frequent schema modifications 
and their importances have been recognized from the early days of 
relational model \cite{AliceBook}.
Database designers often normalize schemas to 
remove redundancy and insertion/ deletion anomalies 
and denormalize them to improve 
query processing time and schema readability \cite{AliceBook}. 
We also observe several cases of them
in relational learning benchmarks, one
of which is presented in Section~\ref{sec:introduction}.

We define decomposition as follows \cite{AliceBook}.
Let $S_i \bowtie S_j$ and $I_{S_i} \bowtie I_{S_j}$ denote the natural join between $S_i$ and
$S_j$ and their instances, respectively. 
We restrict the definition of natural join for the cases where
$S_i$ and $S_j$ have at least one attribute symbol in common 
to avoid Cartesian product.
Let $\bowtie_{i=1}^{n} S_i$ show the natural
join between $S_1,$ $\ldots$, $S_n$. 
Recall that if both INDs $S_1[A]\subseteq$ $S_2[B]$ and
$S_2[B]\subseteq$ $S_1[A]$ hold in a schema, 
we denote them as $S_1[A] =$ $S_2[B]$ and call it 
an IND with equality.
\begin{definition}
\label{def:decompositionGeneral}
A decomposition of schema $\mR=$ $({\bf R}, \Sigma_{R})$ 
with single relation symbol $R$ is schema $\mS=$ 
$({\bf S}, \Sigma_{S})$ with relation symbols $S_1\ldots$ $S_n$ 
such that $sort(R)=$ $\cup_{1 \leq i \leq n} sort(S_i)$
and 
\squishlist
\item For each relation $I_R$ there is one and only one instance
$(I_{S_1} \ldots I_{S_n})$ of $\mS$ such that 
$\pi_{sort(S_i)}(I_R)$ $=I_{S_i}$, $1 \leq i \leq n$,
and $\bowtie_{i=1}^{n} I_{S_i}$ $=I_R$.

\item For all $S_i, S_j$, $1 \leq i,j \leq n$, 
such that $X =$ $sort(S_i) \cap sort(S_j)$ $\neq \emptyset$,
$\Sigma_{\mS}$ contains IND with equality 
$S_i[X]=$ $S_j[X]$.

\item We have $\Sigma_S =$ $\Sigma_R \cup \lambda$. 
\end{list}
\end{definition}
The first and third conditions in Definition~\ref{def:decompositionGeneral} 
are generally known as {\it lossless join} and
{\it dependency preservation} properties, respectively.
The second condition in Definition~\ref{def:decompositionGeneral}
ensures that the natural join 
of relations in every instance $I_{\mS}$ of $\mS$ 
does not lose any tuples in $I_{\mS}$.  
Table~\ref{table:uwcse} depicts 
an example of a decomposition. 
Relation symbol $student$ in the 4NF schema is 
decomposed into $student$,
$inPhase$, and $yearsInProgram$ in the original schema.
The conditions of Definition~\ref{def:decompositionGeneral}, 
e.g., lossless join property, hold in this example
due to the FDs in original and 4NF schemas \cite{AliceBook}.
These conditions may also be satisfied because of other types of constraints in the schema, such as multi-valued dependencies.
A {\it composition} is the inverse of a decomposition, which
is expressed by natural join.

Consider again schema $\mS$ in Definition~\ref{def:decompositionGeneral}.
The join $\bowtie_{i=1}^{n} I_{S_i}$ is {\it globally consistent} if for each $j$, $1\leq j \leq n$, 
$\pi_{\mathit{sort(S_j)}}$ $\bowtie_{i=1}^{n} I_{S_i}$ $=I_{S_j}$ \cite{AliceBook}.
Intuitively speaking, a join is globally consistent if none of its relation has a dangling tuple 
regarding the join. For example, the join between the relations of $\mS$ in the first condition of 
Definition~\ref{def:decompositionGeneral} is globally consistent. 
The join $\bowtie_{i=1}^{n} I_{S_i}$ is {\it pairwise consistent} 
if for each $1 \leq i,j \leq n$, $\pi_{\mathit{sort}(S_i)} (I_{S_i} \bowtie I_{S_j})$ $= I_{S_i}$.
In other words, $I_{S_i}$ does not lose any tuple after joining with $I_{S_j}$.
The join $\bowtie_{i=1}^{n} S_i$ is {\it acyclic} if each instance 
$\bowtie_{i=1}^{n} I_{S_i}$ that is pairwise consistent is globally consistent \cite{AliceBook}.
For example, the join $S_1 \bowtie S_2$ in schema $\mS_1:$$\{S_1(A,B),S_2(A,C)\}$ is acyclic. But, the join
$S_3 \bowtie S_4 \bowtie S_5$ in schema $\mS_2:$ $\{S_3(A,B),S_4(B,C),$ $S_5(B,A),\}$ is cyclic.
In this paper, we consider only the decompositions 
where the join in the first condition of Definition~\ref{def:decompositionGeneral} is acyclic \cite{AliceBook}. 
Acyclic joins cover most decompositions in real-world \cite{AliceBook}. 
For examples, most normal forms, e.g., 3NF, BCNF, 4NF, have acyclic joins.

For simplicity, we consider leaving a relation unchanged as 
a special case of decomposition.
We define the decomposition (composition) of a schema with more than
one relation as the set of decompositions (compositions) of all its
relations. We define a
{\it decomposition/ composition} of a schema
as a finite set of applications of composition and/or
decomposition to the schema.
Every decomposition is bijective \cite{AliceBook}.
Because each decomposition is bijective, every composition is also bijective. 
Because both projection and natural join can be written 
as Horn definitions, each decomposition/ composition 
and its inverse are Horn transformations. 
Hence, they are definition bijective.
We explore the property of schema independence only 
for decomposition/ composition in this paper.

	\section{Top-down algorithms}
\label{section:top-down}
Most relational learning algorithms follow a covering approach \cite{Quinlan:FOIL,progol}.
The covering approach consists in constructing one clause at a time.
After building a clause, the algorithm adds the clause to the hypothesis, discards the positive examples covered by the clause,
and moves on to learn a new clause.
Algorithm~\ref{algorithm:generic-covering} sketches a generic relational learning algorithm that follows a covering approach.
The strategy followed by the {\it LearnClause} procedure depends on the nature of the algorithm.
In top-down algorithms, the {\it LearnClause} procedure in
Algorithm~\ref{algorithm:generic-covering}
searches the hypothesis space from general to specific, by using a refinement (specialization) operator that is 
generally adding a new literal to the clause.

\begin{algorithm}
	\SetKwInOut{Input}{Input}
	\SetKwInOut{Output}{Output}
	\Input{Database instance $I$, positive examples $E^+$, negative examples $E^-$}
	\Output{A Horn definition $H$}
	
	$H \leftarrow \{\}$ ; 	$U \leftarrow E^+$\;
	\While{$U$ is not empty}{
		$C \leftarrow LearnClause(I,U,E^-)$\;
		\If{$C$ satisfies minimum condition}{
			$H \leftarrow H \cup C$\;
			$U \leftarrow U - \{ c \in U | I \cup H \models c \}$\;
		}
	}
	return $H$\;
	
	\caption{{\small Generic relational learning algorithm following a covering approach.}}
	\label{algorithm:generic-covering}
\end{algorithm}

The hypothesis space in top-down algorithms is a refinement graph, that is a rooted directed acyclic graph in which nodes represent clauses and each arc is the application of a basic refinement operator.
The basic strategy of top-down algorithms consists of starting from the most general clause, which corresponds to the root of the refinement graph, and repeatedly refining it until it does not cover any negative example.
Figure~\ref{figure:refinementgraph} shows fragments of the
refinement graph for learning the definition of
$collaborated$ relation over the original schema of Table~\ref{table:uwcse}.
Because of space constraints, we do not show the head of
the clause $collaborated$ in any node of the refinement graph
in Figure~\ref{figure:refinementgraph} but its root.

The strategy of constructing and searching the refinement graph
varies between different top-down algorithms.
For instance, FOIL~\cite{Quinlan:FOIL,QuickFOIL} 
is an efficient and popular
top-down algorithm that follows a greedy best-first search
strategy.
In this section, we analyze the schema independence properties of 
FOIL. However, the results that we show in this section hold for all 
top-down algorithms no matter which search strategy they follow.

\begin{figure}
	\centering
	\begin{tikzpicture}[auto]
	\tikzset{edge/.style = {->,> = latex'}}
	\centering
	
	\node[] (1) [] at (0,0) {{\it \small collaborated(x,y) $\leftarrow$ true}};
	\node[] (2) [] at (-3.5,-1) {{\it \small  $\leftarrow$ student(x)}};
	\node[] (3) [] at (-1,-1) {{\it \small  $\leftarrow$ inPhase(x,p)}};
	\node[] (4) [] at (1,-1) {$\cdots$};
	\node[] (5) [] at (3,-1) {{\it \small  $\leftarrow$ publication(p,x)}};
	\node[] (6) [] at (-4,-2) {$\cdots$};
	\node[] (7) [] at (-2,-2) {$\cdots$};
	\node[] (8) [] at (1,-2) {{\it \small  $\leftarrow$ publication(p,x), publication(p,y)}};
	\node[] (9) [] at (4,-2) {$\cdots$};
	
	\draw[edge] (1) to (2);
	\draw[edge] (1) to (3);
	\draw[edge] (1) to (4);
	\draw[edge] (1) to (5);
	\draw[edge] (2) to (6);
	\draw[edge] (3) to (7);
	\draw[edge] (5) to (8);
	\draw[edge] (5) to (9);
	
	\end{tikzpicture}
	\caption{Fragments of a refinement graph for $collaborated$.}
	\label{figure:refinementgraph}
\end{figure}
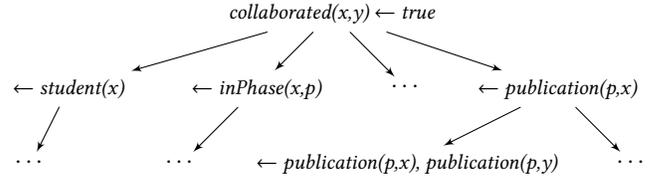

The refinement graph for most schemas, even the ones with a relatively
small number of relations and attributes,
may grow significantly \cite{Quinlan:FOIL,progol}.
Hence, the construction and search over the refinement graph may
become too inefficient to be practical.
To be used in practice, FOIL restricts its search space, i.e. hypothesis space.
We call the number of literals in a clause its length.
A common method is to restrict the maximum length of each clause in the refinement graph \cite{Quinlan:FOIL,progol}.
Intuitively, because composition/decompositions modify
the number of relations in a schema, equivalent clauses over
the original and transformed schemas may have different lengths. Hence, this type of restrictions may result
in different hypothesis spaces. 
One may like to fix this problem by
choosing different values for the maximum lengths over
the original and transformed schemas. 
The following theorem proves that it is not possible to
achieve equivalent hypothesis spaces 
by restricting the maximum length of clauses no matter
what values are used over the original and transformed schemas.
\begin{theorem}
	\label{proposition:topdown-parameters}
	FOIL is not hypothesis invariant.
\end{theorem}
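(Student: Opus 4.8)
The plan is to refute hypothesis invariance by exhibiting a single vertical decomposition together with a single parameter setting that already breaks definition bijectivity. Recall from Definition~\ref{hypothesis_invariace} that FOIL would be hypothesis invariant under $\tau:\mR\rightarrow\mS$ only if $\tau$ were definition bijective w.r.t. its hypothesis spaces $\mL^{A}_{\mR,\theta}$ and $\mL^{A}_{\mS,\theta}$ (with $A$ denoting FOIL) for \emph{every} $\theta\in\Theta$. Hence a single witnessing $\theta$ defeats the property. The only relevant component of $\theta$ is the clause-length bound, and since the whole argument concerns the hypothesis \emph{space} rather than how it is searched, the same reasoning will apply to every top-down algorithm, justifying the remark preceding the theorem.

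First I would fix the transformation. Let $\mS$ consist of a single relation $R(\bar A)$ of arity at least three with a key attribute $A_0$ functionally determining the rest, and let $\tau^{-1}:\mS\rightarrow\mR$ be the decomposition of Definition~\ref{def:decomposition} splitting $R$ into $S_1(A_0,A_1),\ldots,S_m(A_0,A_m)$ with $m\ge 2$ and the inclusion dependencies $S_i[A_0]=S_j[A_0]$; the relation \emph{student} of Table~\ref{table:uwcse} is a concrete instance. By the remark following Definition~\ref{def:decomposition} this $\tau$ is bijective, and being expressible by projection and join it is a Horn transformation, so Proposition~\ref{proposition:program-equivalent} guarantees definition bijectivity on all of $\mH\mD_{\mR}$ and $\mH\mD_{\mS}$; the only thing that can break it is the length restriction in $\theta$. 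I would then set the clause-length bound to $2$ and take the witness $h_{\mS}=\big(T(\bar X)\leftarrow R(\bar X)\big)\in\mL^{A}_{\mS,\theta}$, a definition of length $2$. Its decomposed image is equivalent to the join $T(\bar X)\leftarrow S_1(A_0,A_1),\ldots,S_m(A_0,A_m)$, of length $m+1\ge 3$, and I would argue that \emph{no} definition of length $\le 2$ over $\mR$ is equivalent to $h_{\mS}$: a length-$2$ clause over $\mR$ has a single body literal $S_i$, which exposes only one of the attributes $A_1,\ldots,A_m$, whereas all of them occur in the head of $h_{\mS}$ and are stored in distinct relations by the decomposition. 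Consequently no total mapping $\delta_{\tau^{-1}}:\mL^{A}_{\mS,\theta}\rightarrow\mL^{A}_{\mR,\theta}$ satisfying Definition~\ref{def:def-preserving} can exist, so $\tau^{-1}$ is not definition preserving, $\tau$ is not definition bijective for this $\theta$, and FOIL is not hypothesis invariant.

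The main obstacle is exactly this length lower bound: it does not suffice to note that the \emph{natural} rewriting is longer, since in principle the schema dependencies might let one drop literals. I would discharge this by observing that omitting any $S_i$ severs the binding of $A_i$ to the key, a binding that no functional or inclusion dependency can recover, so the minimal equivalent length is genuinely $m+1$. Finally, to meet the stronger informal claim that no choice of bounds repairs the mismatch, I would allow possibly different bounds $k_{\mR},k_{\mS}$ and track how length transforms under $\tau$ and $\tau^{-1}$: an $\mR$-clause of length $L$ rewrites to an $\mS$-clause of length at most $L$ (each maximal shared-key group of $S_i$-literals collapses to one $R$-literal), and the distinct-key instance shows this is tight, forcing $k_{\mS}\ge k_{\mR}$; conversely an $\mS$-clause of length $L$ whose $R$-literals use distinct keys rewrites to an $\mR$-clause of length $m(L-1)+1$, forcing $k_{\mR}\ge m(k_{\mS}-1)+1$. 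For $m\ge 2$ these inequalities collapse to $k_{\mR}\le 1$, so only the trivial hypothesis space survives, confirming that bounding clause length can never equate the two spaces. Because every step reasons about the hypothesis spaces alone, the conclusion is independent of the search strategy.
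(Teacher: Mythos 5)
Your proof is correct, but it takes a genuinely different route from the paper's. The paper works with two ternary relations $R_1(A,B,C)$, $R_2(D,B,E)$ decomposed into four binary ones, exhibits the length-$2$ witness $T(X,Y)\leftarrow R_1(X,Z,W),R_2(Y,Z,V)$ whose image has length $4$, and then handles the ``what if we retune the bound?'' objection by an \emph{alternating escalation}: raising $l$ to $4$ over $\mS$ admits a clause with four $S_1$-literals whose $\mR$-equivalent again overflows the bound, and so on without a stopping condition. You instead decompose a single relation $R(A_0,A_1,\ldots,A_m)$ into $m$ key-sharing binary relations and, crucially, replace the open-ended alternation with a closed-form contradiction: tracking how minimal length transforms in each direction yields $k_{\mS}\ge k_{\mR}$ and $k_{\mR}\ge m(k_{\mS}-1)+1$, which for $m\ge 2$ force $k_{\mR}\le 1$, i.e., only the trivial space survives. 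This buys a finite, quantitative argument where the paper offers an informal infinite regress. A second, subtler advantage: your witnesses expose \emph{all} attributes in the head, and you explicitly argue that no functional or inclusion dependency can recover a dropped $S_i$. This matters because the paper's own witnesses contain isolated existential variables ($W$, $V$, and the fresh $V_i$), and under the INDs with equality that the decomposition itself introduces, a literal like $S_2(Z,W)$ with $W$ occurring nowhere else is implied by $S_1(X,Z)$ -- so the paper's minimality claims are fragile exactly where yours are airtight. One small polish point: in your final inequality $k_{\mR}\ge m(k_{\mS}-1)+1$ you state the $\mS$-witness only as having ``$R$-literals with distinct keys''; you should also require (as you do for the base witness) that all non-key attributes occur in the head, since otherwise some $S_i$-literals become droppable by the very mechanism you identified as the main obstacle. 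With that stated, both inequalities are tight and the argument is complete, and since it reasons only about hypothesis spaces, your closing remark that it covers any top-down algorithm with a clause-length bound is justified.
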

\begin{proof}
	Let relations $R_1(A, B, C)$ and
	$R_2(D, B, E)$ be in $\mR$ and 
	$\tau:\mR \rightarrow \mS$ decompose 
	$R_1$ to $S_1(A, B)$ and $S_2(B, C)$
	and $R_2$ to $S_3(D, B)$ and $S_4(B, E)$.
	Let $l$ be the maximum clause length 
	and $\theta = \langle l \rangle$ be the parameter setting 
	for FOIL. Without loss of generality we set the 
	value of $l$ to 2.
	Let $T(x,y)$ be the target relation. 
	Consider hypothesis $h_{\mR}$:
	$T(x, y) \leftarrow$ $R_1(x, z, w), R_2(y, z, v)$
	over schema $\mR$ whose  
	mapped hypothesis $\delta_{\tau}(h_{\mR})$ is:
	$T(x, y) \leftarrow$ $S_1(x, z)$ $,S_2(z, w),$ $S_3(y, z), S_4(z, v)$.
	Hypothesis $h_{\mR}$ is in the hypothesis language 
	$\mL^{FOIL}_{\mR, 
		\theta}$ because it clause length is equal to 2.
	However, hypothesis $\delta_{\tau}(h_{\mR})$ is not in the hypothesis language $\mL^{FOIL}_{\mS, \theta}$ because its
	clause length exceeds 2.
	Therefore, hypothesis spaces $\mL^{FOIL}_{\mR, \theta}$ and 
	$\mL^{FOIL}_{\mS, \theta}$ are not equivalent.
	To achieve hypothesis equivalence, one may change the parameter 
	setting for $\mS$ to $\theta'$ with $l=4$ so that the hypothesis 
	$\delta_{\tau}(h_{\mR})$ becomes a member of 
	$\mL^{FOIL}_{\mS, \theta'}$. 
	This modification also brings the new hypothesis 
	$T(x, y) \leftarrow$ $S_1(x, z),$ $S_1(x, w),$ $S_1(x, t), S_1(x, 
	y)$ to $\mL^{FOIL}_{\mS, \theta'}$.
	The equivalent hypothesis to this 
	new hypothesis over $\mR$ is
	$T(x, y) \leftarrow$ $R_1(x, z, v_1)$ $,R_1(x, w, v_2),$ 
	$R_1(x, t, v_3),$ $R_1(x, y, v_4)$
	where $v_i$s
	are fresh variables. Because this 
	hypothesis over $\mR$ is minimal, one has to also change $l$ over 
	$\mR$ to 4 to achieve equivalent hypothesis spaces over $\mR$ and 
	$\mS$. Hence, we have to alternate between the parameter settings 
	over $\mR$ and $\mS$ without any stopping condition. Thus, 
	there are not any fixed parameter settings that ensure the 
	hypothesis equivalence over $\mR$ and $\mS$.
\end{proof}

Progol is another popular top-down algorithm that
follows the same approach as FOIL but considers a larger number of candidate clauses at each step over a more restricted hypothesis space \cite{progol}.
Theorem~\ref{proposition:topdown-parameters} also applies to Progol.
It is confirmed by our experiments in Section~\ref{section:empirical}.

	\section{Bottom-up Algorithms}
\label{section:bottom-up}
Bottom-up algorithms also follow the covering approach shown in Algorithm~\ref{algorithm:generic-covering}.
However, their {\it LearnClause} procedure searches the hypothesis space from specific to general hypotheses.
Given a positive example, bottom-up algorithms attempt to find the most specific clause in the hypothesis space, called {\it bottom-clause}, that covers the example, relative to the database instance \cite{golem,progolem}. 
They generalize these bottom-clauses to find
definitions that cover as most positive and as fewest negative examples as possible.

\subsection{Bottom-clause Construction}
\label{section:bottomclause}
Let $I_{\mR}$ be a database instance over schema $\mR$.
The bottom-clause associated with positive example $e$, relative to $I_{\mR}$, denoted by $\perp_{e,I_{\mR}}$,
is the most specific clause over $\mR$ that covers $e$, relative to $I_{\mR}$. 
A typical algorithm for computing bottom-clauses using inverse entailment is given in \cite{progol}.
The algorithm starts with an empty clause, and iteratively adds literals to the clause.
Given positive example $T(a_1, \ldots, a_n)$, it assigns a fresh variable $u_i$ to each distinct 
constant and adds literal $T(u_1, \ldots, u_n)$ to the head of the bottom-clause. 
The algorithm maintains the mapping between constants and variables. 
It then finds all tuples in the database that contain constants $a_1, \ldots, a_n$.
For each tuple, the algorithm adds a new literal to the bottom-clause, 
where the predicate symbol is the tuple relation symbol
and the terms are variables obtained by replacing $a_1, \ldots, a_n$ in the tuple to their corresponding variables
and assigning new variables to newly encountered constants in the tuples.
. 
In the following iterations, the algorithm searches the database
for tuples that contain new constants and adds new literals to the 
bottom-clause. 
This algorithm may generate very long clauses after multiple iterations over a large database.
A common method to restrict the number of iterations is to limit the maximum {\it depth} of the bottom-clause \cite{progol}.
The depth of a variable $x$, denoted by $d(x)$, is 0 if it appears in the head of the clause, 
otherwise it is $min_{v \in U_x}(d(v))+1$, where $U_x$ are the variables of literals in the body of the clause containing $x$. 
The depth of a literal is the maximum depth of the variables appearing in the literal.
The depth of a clause is the maximum depth of the literals appearing in the clause.
The algorithm creates literals of depth at most $i$ in iteration $i$.
\begin{example}
	\label{example:depth}
	This clause over the Original UW-CSE schema in Table~\ref{table:uwcse} 
	has depth 1: {\small $taLevel(x,y) \leftarrow$} {\small $ta(c,x,t), courseLevel(c,y)$}.
	The following clause for target relation $commonLevel(x,y)$, 
	which says that students $x$ and $y$ assist with courses at the same level has depth 2:
	{\small 
	\begin{align*}
		\mathit{commonLevel}(x,y) \leftarrow&  \mathit{ta}(c1,x,t1), \mathit{ta}(c2,y,t2), \\
		& \mathit{courseLevel}(c1,l), \mathit{courseLevel}(c2,l).
	\end{align*}
	}
\end{example}
Bottom-clauses determine the hypothesis space of a bottom-up algorithm: longer bottom-clauses allow the algorithm 
to explore larger number of definitions. 
To be schema independent, bottom-up algorithms must get equivalent bottom-clauses associated with the same example, relative 
to equivalent instances of the original and transformed schemas.
Otherwise, these algorithms will not be hypothesis invariant. 
Using the depth parameter does not result in such equivalent bottom-clauses,  
because the original and transformed schemas need different 
depths to create equivalent bottom-clauses.
\begin{example}
	Let us compose and replace relations 
	{\small $courseLevel$ $(crs,level)$} and 
	{\small $ta(crs, stud, term)$} 
	in the Original UW-CSE schema with
	$courseLevelTa(crs,level,stud,term)$.
	{\small $commonLevel$} from Example \ref{example:depth}
	has the following definition over this schema, which has depth 1:
	{\small $commonLevel(x,y) \leftarrow$} {\small $courseLevelTa(c1,l,x,t1),$}
	{\small $courseLevelTa(c2,l,y,t2)$}.
	If we set the maximum depth to 1, in the Original schema, the clause in Example \ref{example:depth} is not in the hypothesis language.
	But, under the new schema, the clause presented above is in the hypothesis language.
\end{example}
Using a similar idea to the proof of 
Theorem~\ref{proposition:topdown-parameters}, the following
lemma proves that the bottom-clause construction 
algorithm is schema dependent even if different depth 
values are used across schemas.
\begin{lemma}
Bottom-clause construction is schema dependent.
\end{lemma}

\subsection{Generalization}
There are multiple bottom-up algorithms whose 
differences lie mainly in their generalization operator \cite{golem,progolem,Arias:2007}. 
We consider two algorithms that are representative of the family of bottom-up algorithms: Golem~\cite{golem} and ProGolem~\cite{progolem}.

\subsection{Golem}
\label{golem}

In this section, we consider a bottom-up learning algorithm called Golem~\cite{golem}.
Golem, like other learning algorithms, follows a covering approach, as the one shown in Algorithm~\ref{algorithm:generic-covering}.
Golem's {\it LearnClause} procedure follows a bottom-up approach, which is based on the {\it least general generalization} ($lgg$) operator.
Given clauses $C_1$ and $C_2$, the $lgg$ of $C_1$ and $C_2$ is the clause $C$ that is more general than $C_1$ and $C_2$, but the least general such clause.
The notion of generality is defined by $\theta$-subsumption. Therefore, clause $C$ is more general than $C_1$ if and only if $C$ $\theta$-subsumes $C_1$ (and similarly for $C_2$).
This notion of generality gives a computable generality relation. Further, the $lgg$ of two clauses is unique.
Because of the lack of space, for further details we refer the readers to~\cite{golem,plotkin71furtherNote}.


Golem uses a special case of bottom-clause, where all literals of the clause are grounded. 
We call this type of clause the {\it saturation}.
A saturation can be computed using the bottom-clause construction algorithm described above.
Given the saturations for a pair examples, 
the operator that computes the $lgg$ for the pair of saturations is called the {\it relative least general generalization} ($rlgg$).
The $lgg$ of a set of saturations is defined via pairwise operations, that is
\begin{equation*}
lgg(\{ C_1, ..., C_n \}) = lgg(lgg(\{ C_1, ..., C_{n-1} \}), C_n)
\end{equation*}
The order of pairwise $lgg$s does not matter as the $lgg$ operator is commutative and associative.

Given a database instance $I$ and training examples $E^+$ and $E^-$, Golem's {\it LearnClause} procedure learns a clause that covers the most positive and the fewest negative examples as possible.
Algorithm~\ref{algorithm:golem-learnclause} sketches this procedure.
Intuitively, the algorithm first randomly selects a subset $E^+_S$ of positive examples $E^+$.
It then generates candidate clauses by computing the $rlgg$ between every pair of examples in $E^+_S$.
The algorithm considers only candidate clauses that satisfy some minimum condition, e.g., minimum precision of a clause.
It then greedily includes new examples into the generalization to create new candidate clauses.
This algorithm uses the function $Covers(C,E)$, which returns the examples in $E$ covered by clause $C$.
The algorithm stops when no improvement can be done.

\begin{algorithm}
	\SetKwInOut{Input}{Input}
	\SetKwInOut{Output}{Output}
	\Input{Database instance $I$, positive examples $E^+$, negative examples $E^-$, parameter $K$.}
	\Output{A new clause $C^*$.}
	
	$E^+_S \leftarrow K$ randomly selected positive examples from $E^+$\;
	${\bf C} = \{ C = lgg(\perp_{e,I}, \perp_{e',I}) \; | \; e, e' \in E^+_S, C \text{ satifies minimum condition} \}$\;
	
	\While{${\bf C}$ is not empty}{
		$C^* = \argmax_{C \in {\bf C}} \; Score(C,E^+_S,E^-)$\;
		$E^+_S = E^+_S - Covers(C^*, E^+_S)$\;
		${\bf C} = \{ C = lgg(C^*, \perp_{e,I}) \; | \; e \in E^+_S, C \text{ satifies minimum condition} \}$\;
	}
	return $C^*$
	
	\caption{Golem's {\it LearnClause} algorithm.}
	\label{algorithm:golem-learnclause}
\end{algorithm}

\begin{theorem}
	\label{proposition:rlgg-schemaindependent}
	The $rlgg$ operator is schema independent.
\end{theorem}
\begin{proof}
	Let $\tau: \mR \rightarrow \mS$ be a bijective transformation that is a vertical composition/ decomposition between schemas $\mR = ({\bf R}, \Sigma_{\mR})$ and $\mS = ({\bf S}, \Sigma_{\mS})$.
	Let $I$ and $J$ be instances of $\mR$ and $\mS$, respectively, such that $\tau(I) = J$.
	Let $T$ be the target relation, and $e_1 = T(a_1, \cdots, a_l)$ and $e_2 = T(b_1, \cdots, b_l)$ be two positive examples.
	Let $(e_1 \leftarrow I'_1)$ and $(e_2 \leftarrow I'_2)$ be the saturations under schema $\mR$ for $e_1$ and $e_2$, respectively, such that $I'_1, I'_2 \subseteq I$.
	Similarly, let $(e_1 \leftarrow J'_1)$ and $(e_2 \leftarrow J'_2)$ be the saturations under schema $\mS$ for $e_1$ and $e_2$, respectively, such that $J'_1, J'_2 \subseteq J$.
	
	We show that the result of the $rlgg$ operator for examples $e_1$ and $e_2$ is equivalent under schemas $\mR$ and $\mS$.
	That is 
	\begin{align*}
		rlgg_{\mR}(e_1, e_2) &\equiv rlgg_{\mS}(e_1, e_2) \\
		lgg((e_1 \leftarrow I'_1), (e_2 \leftarrow I'_2)) &\equiv lgg((e_1 \leftarrow J'_1), (e_2 \leftarrow J'_2))
	\end{align*}
	We know that $(e_1 \leftarrow I'_1)$ and $(e_2 \leftarrow I'_2))$ are clauses.
	Therefore, $lgg((e_1 \leftarrow I'_1), (e_2 \leftarrow I'_2))$ is the set of pairwise $lgg$ operations of compatible ground atoms in $(e_1 \leftarrow I'_1)$ and $(e_2 \leftarrow I'_2)$.
	Two atoms are compatible if they have the same relation name.
	We show that the $lgg$ of compatible ground atoms under schema $\mR$ delivers equivalent results under schema $\mS$.
	
	Let $R \in {\bf R}$ be a relation in $\mR$ such that $\tau (R) = S_1, \cdots, S_m$, $1 \le m \le |{\bf S}|$.
	Because of Corollary 4.3.2 in \cite{Atzeni:TCS:82}, we know that if $\tau$ is bijective, $\Sigma_{\mS}$ contains inclusion dependencies between the join attributes of $S_1, \cdots, S_m$.
	Let $r_1 = R(a_1, \cdots, a_k)$ and $r_2 = R(a'_1, \cdots, a'_k)$ be two ground atoms in $I$.
	Then, $\tau (r_1) = S_1(t_1), \cdots,$ $S_m(t_m)$ and $\tau (r_2) = S_1(t'_1), \cdots, S_m(t'_m)$ are ground atoms in $J$, where $t_i$ and $t'_i$, $1 \le i \le m$, are tuples.
	Then, the $lgg$ of ground atoms $r_1$ and $r_2$ is defined as
	\begin{align*}
		lgg(r_1, r_2) ={}& R(lgg(a_1, a'_1), \cdots, lgg(a_k, a'_k))
	\end{align*}
	By applying transformation $\tau$, this is equivalent to
	\begin{align*}
		S_1(s_1), S_2(s_2), \cdots, S_m(s_m)
	\end{align*}
	where $s_j$ is a tuple that contains a subset of attributes in $\{ lgg(a_1, a'_1),$ $\cdots,$ $lgg(a_k, a'_k) \}$ for $1 \le j \le m$.
	By definition of the $lgg$ operator, we get
	\begin{align*}
			S_1(s_1), S_2(s_2), \cdots, S_m(s_m) =& lgg(S_1(t_1), S_1(t'_1)), \\
			& \cdots, \\
			& lgg(S_m(t_m), S_m(t'_m))\\
		=& lgg(\tau (r_1), \tau (r_2))
	\end{align*}
\end{proof}

In Section~\ref{section:CastorBottomClause} we show that the bottom-clause construction algorithm can be modified to be schema independent. 
Because the $rlgg$ operator is also schema independent, Golem can achieve schema independence.
However, Golem may generate very large clauses after each application of the $rlgg$ operator.
The reason is that the size of a clause generated by $lgg(C_1, C_2)$, where $C_1$ and $C_2$ are clauses, is bounded by $|C_1| \cdot |C_2|$.
Let $n$ be the number of positive examples to generalize and $m$ be the maximum length of a bottom-clause. Then, the length of the clause generated by $rlgg$ is bounded by $O(m^n)$, i.e., it grows exponentially in the number of positive examples covered.
This results in exponential running time.
Therefore, an algorithm that uses the $rlgg$ operator, such as Golem~\cite{golem}, cannot learn efficiently over small or medium databases without making assumptions that do not hold over most real-world databases~\cite{progolem}.

\subsection{ProGolem}
\label{progolem}

ProGolem is a bottom-up algorithm that can run efficiently over small or medium databases without making generally unrealistic assumptions~\cite{progolem}.
To explore the hypothesis space and generalize clauses efficiently, 
ProGolem assumes that clauses are ordered. 
An {\it ordered clause} is a clause where the order and duplication of literals matter.
If clause $C$ is considered an ordered clause, then it is denoted as $\overrightarrow{C}$.
For instance, clauses $\overrightarrow{C} = T(x) \leftarrow P(x), Q(x)$, $\overrightarrow{D} = T(x) \leftarrow Q(x), P(x)$, and $\overrightarrow{E} = T(x) \leftarrow P(x), P(x), Q(x)$ are all different.

ProGolem uses the {\it asymmetric relative minimal generalization} ({\it armg}) operator to generalize clauses. 
ProGolem's $\mathit{LearnClause}$ procedure first generates the bottom-clause associated with some positive example. 
Then, it performs a beam search to select the best 
clause generated after multiple applications of the {\it armg} operator.
More formally, given clause $\overrightarrow{C}$, ProGolem randomly picks a subset $E^+_S$ of positive examples to generalize $\overrightarrow{C}$.
For each example $e'$ in $E^+_S$, ProGolem uses the {\it armg} operator to generate a candidate clause $\overrightarrow{C'}$, which is more general than $\overrightarrow{C}$ and covers $e'$.
It then selects the highest scoring candidate clauses to keep in the beam and iterates until the clauses cannot be improved.
The beam search requires an evaluation function to score clauses. 
One may select an evaluation function that is 
agnostic of the schema used, such as coverage, which is the 
number of positive examples minus the number 
of negative examples covered by the clause. 

\begin{algorithm}
	\SetKwInOut{Input}{Input}
	\SetKwInOut{Output}{Output}
	\Input{Bottom-clause $\perp_{e,I_{\mR}}$, positive example $e'$.}
	\Output{An {\it ARMG} of $\perp_{e,I_{\mR}}$ that covers $e'$. } 
	
	$\overrightarrow{C}$ is $\perp_{e,I_{\mR}} = T \leftarrow L_1, \cdots, L_n$\;
	\While{there is a blocking atom $L_i$ w.r.t. $e'$ in the body of $\overrightarrow{C}$}{
		Remove $L_i$ from $\overrightarrow{C}$\;
		Remove atoms from $\overrightarrow{C}$ which are not head-connected\;
	}
	Return $\overrightarrow{C}$\;
	
	\caption{{\it ARMG} algorithm.}
	\label{algorithm:armg}
\end{algorithm}

We now explain the {\it armg} operator in detail.
Let $\perp_{e,I_{\mR}}$ be the bottom-clause associated with example $e$, relative to $I_{\mR}$.
Let $\overrightarrow{C} = T \leftarrow L_1, \cdots, L_n$ be the ordered version of $\perp_{e,I_{\mR}}$. 
Let $e'$ be another example.
$L_i$ is a {\it blocking atom} iff $i$ is the least value such that for all substitutions $\theta$ where 
$e' = T\theta$, 
the clause $\overrightarrow{C'}\theta = (T \leftarrow L_1, \cdots, L_i)\theta$ does not cover $e'$, relative to $I_{\mR}$~\cite{progolem}.
Algorithm~\ref{algorithm:armg} shows the {\it ARMG} algorithm, which implements the {\it armg} operator.
Given the bottom-clause $\perp_{e,I_{\mR}}$ and a positive example $e'$, {\it armg} drops all blocking atoms from the body of $\perp_{e,I_{\mR}}$
until $e'$ is covered. 
After removing a blocking atom, some literals in the body may not have any variable in common with the other literals 
in the body and head of the clause, i.e., they are not {\it head-connected}. 
{\it Armg} also drops those literals.
For ProGolem to be schema independent, 
the {\it armg} operator must return equivalent clauses given
equivalent input clauses over original and transformed databases.
\begin{example}
	\label{example:armg-notrobust}
	Consider the following equivalent definitions for target relation $\mathit{hardWorking}$
	over the Original and 
	4NF UW-CSE schema in Table~\ref{table:uwcse},
	respectively:\\
	{\small $\mathit{hardWorking(x)} \leftarrow$} {\small $\mathit{student(x)},$} {\small $\mathit{inPhase(x,prelim)},$} {\small $\mathit{yearsInProgram(x,3)}$},\\
	{\small $\mathit{hardWorking(x)} \leftarrow \mathit{student(x,prelim,3)}.$}\\
	Assume that {\it armg} wants to generalize these clauses to cover example $e'$. Let $e'$ satisfy literal 
	$\mathit{student(x)}$ but does not satisfy $\mathit{inPhase(x,} $ $\mathit{prelim)}$.
	The {\it armg} operator keeps literal 
	$\mathit{student(x)}$ in the first 
	clause, but it eliminates 
	$\mathit{student(x,prelim,3)}$ from the second clause. Hence, it delivers non-equivalent generalizations.
\end{example}
Thus, neither bottom-clause construction nor
generalization phases in ProGolem are schema independent. 
\begin{theorem}
ProGolem is not schema independent.
\end{theorem}


	\section{Castor}
\label{section:castor}
This section presents {\it Castor}, a bottom-up relational learning algorithm.
Castor uses the covering approach presented in Algorithm~\ref{algorithm:generic-covering}.
It follows the same search strategy as ProGolem, 
but integrates INDs into the bottom-clause construction and generalization algorithms to achieve schema independence.
If we apply the INDs in schema $\mR$ to Horn clause $h_{\mR}$ over $\mR$, 
we get an equivalent Horn clause that has a similar syntactic structure 
to its equivalent Horn clauses in decomposition/ compositions of $\mR$ \cite{AliceBook}.
For example, consider schema $\mR_2:$$\{R_1(A,B),R_2(A,C)\}$ with the IND $R_1[A]=$ $R_2[A]$ and the clause 
$h_{\mR_2}:$ $T(x) \leftarrow R_1(x,y)$. 
Because each value in $R_1[A]$ also appears in $R_2[A]$, we can rewrite $h_{\mR_2}$ as $g_{\mR_2}:$ $T(x) \leftarrow R_1(x,y), R_2(x,z)$. 
Now, consider a composition of $\mR$,  $\mS_2:$$\{S_1(A,B,C)\}$.
The clause $h_{\mS_2}:$ $T(x) \leftarrow S_1(x,y,z)$ over $\mS_2$ is equivalent to both $h_{\mR_2}$ and $g_{\mR_2}$. 
$g_{\mR_2}$ and $h_{\mS_2}$ have also similar syntactic structures: 
there is a bijection between the distinct variables in $g_{\mR_2}$ and $h_{\mS_2}$.
However, such bijection does not exist between $h_{\mR_2}$ and $h_{\mS_2}$.
As learning algorithms modify the syntactic structure of clauses to learn a target definition 
and $h_{\mR_2}$ and $h_{\mS_2}$ have different syntactic structures, these algorithms may modify them differently and generate non-equivalent clauses. 
For instance, assume that an algorithm renames variable $z$ 
to $x$ in $h_{\mS_2}$ to generate clause $h^{'}_{\mS_2}:$ $T(x) \leftarrow S_1(x,y,x)$. 
This algorithm cannot apply a similar change to $h_{\mR_2}$ as $h_{\mR_2}$ does not have any corresponding variable to $z$.
But, the algorithm can apply the same modification to $g_{\mR_2}$ and generate an equivalent Horn clause to $h^{'}_{\mS_2}$.
Moreover, as INDs generally reflect important relationships, 
it may improve the effectiveness of the algorithm to use them for learning definitions. 

Castor's $\mathit{LearnClause}$ procedure is shown in Algorithm~\ref{algorithm:castor-beamsearch}. 
It first generates the bottom-clause associated with some positive example using 
the modified bottom-clause construction algorithm presented in Section~\ref{section:CastorBottomClause}.
It minimizes the bottom-clause using the procedure explained in Section~\ref{section:implementation}.
Then, it performs a beam search to select the best 
candidate after multiple applications of the modified {\it ARMG} algorithm, explained in Section~\ref{section:CastorARMG}.
Finally, it reduces the best candidate using the algorithm explained in Section~\ref{section:CastorNegativeReduction}.
\begin{algorithm}
	\SetKwInOut{Input}{Input}
	\SetKwInOut{Output}{Output}
	\Input{Database instance $I$, positive examples $E^+$, negative examples $E^-$, parameters $K$ and $N$.}
	\Output{A new clause $C$. }
	
	$\overrightarrow{C} \leftarrow \mathit{Castor\_BottomClause}(\text{first example in } E^+)$ \;
	
	$\overrightarrow{C} \leftarrow \mathit{Minimize(\overrightarrow{C})}$ ;
	$\mathit{BC} \leftarrow \{ \overrightarrow{C} \}$\;
	
	\Repeat{$\mathit{NC} = \{\}$}{
		$\mathit{BestScore} \leftarrow$ score of highest scoring candidate in $\mathit{BC}$\;
		$E^+_S \leftarrow K$ randomly selected positive examples from $E^+$\;
		$\mathit{NC} = \{\}$\;
		\ForEach{clause $C \in \mathit{BC}$}{
			\ForEach{$e' \in E^+_S$}{
				$C' \leftarrow \mathit{Castor\_ARMG}(C, e')$\;
				\If{$\mathit{Score}(C') > \mathit{BestScore}$}{
					$\mathit{NC} \leftarrow \mathit{NC} \cup C'$\;
				}
			}	
		}
		$\mathit{BC} \leftarrow$ highest scoring $N$ candidates from $\mathit{NC}$\;
	}
	$C' \leftarrow$ highest scoring candidate in $\mathit{BC}$\;

	Return $\mathit{Castor\_Reduce}(C', I, E^-)$\;
	
	\caption{Castor's {\it LearnClause} algorithm. }
	\label{algorithm:castor-beamsearch}
\end{algorithm}
\subsection{Castor Bottom-Clause Construction}
\label{section:CastorBottomClause}
Castor selects a positive example and constructs its bottom-clause by 
following the normal procedure of bottom-clause construction: at each iteration, it selects a relation and 
adds one or more literals of that relation to the bottom-clause.
Let relation symbol $R$ in the schema $\mR$ be decomposed to relation symbols $S_1 \ldots S_n$ in the transformed schema $\mS$. 
If the bottom-clause construction algorithm considers tuple $r$ in an instance of $R$, $I_R$,
it must also examine tuples $s_1, \ldots, s_n$ in instances $I_{S_1}, \ldots, I_{S_n}$, respectively, such that 
$\bowtie_{i=1}^{n}[s_i]$ $=r$, to ensure the produced bottom-clauses over both schemas are equivalent.
After the bottom-clause construction algorithm replaces the constants with variables in these bottom-clauses, 
it generates equivalent bottom-clauses over $\mR$ and $\mS$.
Hence, if Castor examines tuple $s_j \in I_{S_j}$, it should find tuples $s_i \in I_{S_i}$ whose natural join with 
$s_j$ creates tuple $r$. One approach is to find all relations $S_i$ that have some common attributes with $S_j$ 
as they have some tuples that join with $s_i$ and produce $r$. However, designers may rename the attributes on which
$S_1 \ldots S_n$ join. For instance, relations {\it student}, {\it inPhase}, and
{\it yearsInProgram} in the original schema  join over attribute {\it stud} to create relation
{\it student} in the 4NF schema in Table~\ref{table:uwcse}. 
The database designer may rename attribute {\it stud} to {\it name} in relation {\it student}. 
Hence, this approach is not robust against attribute renaming.
According to Definition~\ref{def:decompositionGeneral}, there are INDs with equality between the join attributes of 
relation symbols $S_1 \ldots S_n$.
We use IND with equality between the attributes in schema $\mS$ to find tuples $s_i$.
To simplify our notations, we assume that the join between relations in $\mS$ is still natural join. 
Our results extend for composition joins that are equi-join.
\begin{definition}
The {\it inclusion class} ${\bf N}$ in schema $\mS$ is the maximal set of relation symbols 
in $\mS$ such that for each $S_i, S_j \in {\bf N}$, $i \neq j$, there is 
a sequence of INDs $S_k[X_k]=$ $S'_k[X_k]$, 
$i \leq k \leq j$, in $\mS$ such that
\squishlist
\item $X_k=\mathit{sort}(S_k)$ $\cap \mathit{sort}(S'_k)$.
\item $S_{k+1}=$ $S'_k$ for $i \leq k \leq j-1$.
\end{list}
\end{definition}
\noindent
Castor first constructs the inclusion classes in the input schema $\mS$. 
Assume that the algorithm generates a bottom-clause relative to an instance of schema $\mS$.
Also, assume that the algorithm has just selected relation $I_{S_i}$ and added 
literal $L_i$ to the bottom-clause based on some tuple $s_i$ of $I_{S_i}$.
Let $S_i$ be a member of inclusion class ${\bf N}$ in $\mS$. 
For each constraint $S_j[X]$ $=$ $S_i[X]$ between the members of ${\bf N}$, Castor selects 
all tuples $s_j$ of relation $I_{S_j}$, $i \neq j$ such that $\pi_{X}(s_j)=$ $\pi_{X}(s_i)$.
It applies the same process for $s_j$ until it exhausts the INDs between the members of ${\bf N}$.
As the join between $S_1 \ldots S_n$ is pairwise consistent, this method efficiently 
finds the tuples $s_1, \ldots, s_n$ that all participate in the join and none of them is a dangling
tuple with the regard to the full join.
Otherwise, Castor must check the join condition for each pair of tuples. 
\begin{example}
\label{example:bottom-clause-castor}
Consider an instance of the original UW-CSE schema in Table~\ref{table:uwcse} with tuples
$s_1:\mathit{student(Abe)}$, $s_2:\mathit{inPhase}$\\$\mathit{(Abe,prelim)}$ and  $s_3:\mathit{year(Abe,2)}$.
Given INDs $student[stud]$ $=inPhase[stud]$ and $student[stud]=$ $yearsInProgram[stud]$ hold in this schema, 
{\it student}, {\it inPhase}, and {\it yearsInProgram} constitute an inclusion class.
Let Castor select tuple $s_1$ during the bottom-clause construction.
As $\pi_{stud}(s_1)=$ $\pi_{stud}(s_2)$ and $\pi_{stud}(s_1)$ $=\pi_{stud}(s_3)$, 
Castor adds tuples $s_2$ and $s_3$ to the bottom-clause.
\end{example}
The INDs between relations in a inclusion class may form a cycle.
\begin{definition}
A set of INDs with equality $\lambda$ over schema $\mS$ is {\it cyclic} if
there is a sequence $S_i[X_i]=$ $S'_i[Y_i]$, 
$1 \leq i \leq n$, in $\lambda$ such that
\squishlist
\item $S_{i+1}=$ $S'_i$ for $1 \leq i \leq n-1$ and $S_1=$ $S'_n$.
\item There is an $i$ where $Y_{i}\neq$ $X_{i+1}$.
\end{list}
\end{definition}
\noindent
If the INDs induced by the inclusion class ${\bf N}$ are cyclic, Castor may have to examine a lot more tuples 
than the case where the INDs of ${\bf N}$ are not cyclic.
For example, consider schema $\mS_1$ with relations $S_1(A,B)$, $S_2(B,C)$, and $S_3(C,A)$.
The set of INDs $S_1[B]=$ $S_2[B]$, $S_2[C]=$ $S_3[A]$, and $S_3[A]=$ $S_1[A]$ is cyclic.
Consider tuples $s_1$, $s_2$, and $s_3$ such that $\pi_{B}(s_1)=$ $\pi_{B}(s_2)$ 
and $\pi_{C}(s_2)=$ $\pi_{C}(s_3)$. We may not have
$\pi_{A}(s_3)=$ $\pi_{A}(s_1)$. Hence, Castor has to scan many tuples in 
$S_3$ to find a tuple $s'_3$ that satisfies both $\pi_{C}(s_2)=$ $\pi_{C}(s'_3)$ and 
$\pi_{A}(s'_3)=$ $\pi_{A}(s_1)$.
The following proposition shows that if the composition join in Definition~\ref{def:decompositionGeneral} is acyclic, 
the INDs with equality in the decomposed schema are not cyclic. Thus, Castor does not face the aforementioned issue.
\begin{proposition}
\label{prop:acyclic}
Give schema $\mR$ with a single relation symbol $R$ and its decomposition 
$\mS$ with relation symbols $S_1, \ldots, S_n$,
if the join $\bowtie_{j=1}^{n}[S_1,\ldots,S_n]$ is acyclic, the INDs with equality $\lambda$ in Definition~\ref{def:decompositionGeneral}
are not cyclic.
\end{proposition}
\begin{proof}
	Because the join is acyclic, there is a join tree for it whose nodes are $S_i$, $1 \leq i \leq n$ such that (i) every edge $(S_i,S_j)$ 
	is labeled by the set of attributes $\mathit{sort}(S_i) \cap \mathit{sort}(S_j)$ and (ii)
	for every pair $S_i$, $S_j$ of distinct nodes, for each attribute $A\in$ $\mathit{sort}(S_i) \cap \mathit{sort}(S_j)$,
	each edge along the unique path between $S_i$ and $S_j$ includes label $A$. As the IND with equalities $\lambda$ are defined over
	the common attributes of $S_i$ and $S_j$, $\lambda$ are acyclic.
\end{proof}
\noindent
Given $S_i, S_j \in {\bf N}$, too many tuples from a relation $I_{S_j}$ may join with the current tuple $s_i \in I_{S_i}$, 
which may result in an extremely large bottom-clause. 
One may limit the maximum number of tuples that can join with the current tuple to a reasonably large value.
We use the value of 10 in our reported experiments.
After finding the joint tuples, for each tuple $s_j$, Castor creates a ground literal $L_j$. 
If a constant in $L_j$ has been already seen, the algorithm replaces it in $L_j$ with the variable that was assigned to that constant.
Otherwise, it assigns a fresh new variable for that constant in $L_j$.  
Finally, the algorithm adds $L_j$ to the bottom-clause.
Because inclusion classes are maximal, each relation symbol belongs to at most one inclusion class.
After exhausting all INDs with equality between the members of ${\bf N}$, Castor returns to the typical procedure of bottom-clause construction.
Castor may scan more relations than other bottom-clause construction algorithms to find tuples that satisfy the INDs at the end of each iteration. 
But, a schema usually has a relatively small number of INDs.  We show in Sections~\ref{section:implementation} 
and \ref{section:empirical} that using an RDBMS implementation, Castor bottom-clause construction algorithm runs faster than other algorithms. 

As explained in Section~\ref{section:bottomclause}, the bottom-clauses may get too large. 
We propose a modification of the original bottom-clause construction algorithm so that the stopping condition is based on the maximum number 
of distinct variables in a bottom-clause.
At the end of each iteration, Castor checks how many distinct variables are in the bottom-clause. 
If this number is less than an input parameter, Castor continues to the next iteration and stops otherwise. 
Intuitively, since the number of distinct variables in equivalent Horn clauses over composition/ decomposition are equal, 
this condition helps Castor to return equivalent bottom-clauses over composition/ decomposition.
The following Lemma states that Castor bottom-clause construction algorithm is schema independent. 
\begin{lemma}
	\label{lemma:equivalent-bottom-clauses}
	Let $\tau: \mR \rightarrow \mS$ be a composition/ decomposition, 
	$I$ be an instance
	of $\mR$, and $\perp_{e,I}$ and $\perp_{e,\tau(I)}$ 
	are bottom-clauses generated by Castor for example $e$ relative to $I$ 
	and $\tau(I)$, respectively. We have $\perp_{e,I} \equiv \perp_{e,\tau(I)}$.
\end{lemma}
\begin{proof}
	Assume that 
	$\tau$ decomposes $I_R$ to relations $I_{S_1}, \ldots, I_{S_m}$.
	Let the constants in $e$ appear in 
	a subset of relation $I_R$ denoted as $I^{e}_{R}$.
	Thus, the constants in $e$ must also appear in at least
	a subset of one relation in $\tau(I)_{S_1}, \ldots, \tau(I)_{S_m}$, shown
	as $\tau(I)^{e}_{S_{i}}$, $1 \leq i \leq m$. The algorithm examines all tuples in 
	$I^{e}_{R}$ and $\tau(I)^{e}_{S_{i}}$ at the same iteration.
	Let $L$ be the set of literals that algorithm adds 
	to $\perp_{e,I}$ based on tuples in $I^{e}_{R}$.
	By applying INDs at the end of iteration, 
	the algorithm considers all tuples
	$s_j$ in $I_{S_1}, \ldots, I_{S_m}$ such that
	$\bowtie_{j=1}^{n}[s_i]$ $=r$ for every 
	$r \in I^{e}_{R}$. Hence, it will create equivalent clauses
	at the end of iteration.
	In the following iterations, as the
	algorithm selects tuples in $I$ and $\tau(I)$ 
	using the same set of constants, 
	it adds equivalent literals to the clauses over
	$I$ and $\tau(I)$.
	Because the algorithm uses a one-to-one mapping from variables to 
	constants, the clauses over $I$ and $\tau(I)$
	will be equivalent
	when the algorithm stops. The theorem is similarly proved for 
	composition. Hence, it holds for composition/ decomposition.
\end{proof}

\subsection{Castor Generalization}
\subsubsection{{\it ARMG} Algorithm}
\label{section:CastorARMG}
Castor modifies Algorithm~\ref{algorithm:armg} to compute equivalent {\it armg}s over composition/ decomposition. 
Before we explain the Castor generalization algorithm, we define some concepts.
Given clause $\overrightarrow{C}$ and literal $R(u)$ in $\overrightarrow{C}$,
we call $u$ that may contain both variables and constants a {\it free tuple}. 
We extend the definitions of projection $\pi$ and natural join $\bowtie$ operators over free tuples in natural manner.
A {\it canonical database instance} of clause $\overrightarrow{C}$, shown as $I^{\overrightarrow{C}}$,
is the database instance whose tuples are the free tuples in $\overrightarrow{C}$ \cite{AliceBook}. 
In other words, relation $I_R$ in $I^{\overrightarrow{C}}$ has free tuple $u$ if literal $R(u)$ is in $\overrightarrow{C}$.
In each iteration of the algorithm, Castor ensures that the canonical database instance of clause $\overrightarrow{C}$ always satisfies 
the INDs of the schema.
Assume the algorithm is applied on instance $I_{\mR}$ of schema $\mR=$ $({\bf R}, \Sigma)$.
Immediately after removing a blocking atom $L_i$ from clause $\overrightarrow{C}$ in Algorithm~\ref{algorithm:armg}, 
Castor examines all remaining literals in $\overrightarrow{C}$ and finds the ones whose relation symbols participate 
in an IND with equality in $\Sigma$. More precisely, 
let $R_1(u_1)$ be a literal and $\lambda_{R_1} \subseteq \Sigma$ be the set of INDs with equality in which $R_1$ participates. 
For each IND $R_1[X]=$ $R_2[X]$ in $\lambda_{R_1}$, if there is {\bf not} a literal with relation symbol $R_2$ 
in $\overrightarrow{C}$, Castor eliminates literal $R_1(u_1)$ from $\overrightarrow{C}$.
Otherwise, assume that $\overrightarrow{C}$ contains literal $R_2(u_2)$.
If for all literals $R_2(u_2)$, we have $\pi_{X}(u_1) \neq$ $\pi_{X}(u_2)$, Castor removes literal $R_1(u_1)$. 
Castor checks these conditions for every literal in $\overrightarrow{C}$ and all its corresponding INDs.
Castor increases the time complexity of Algorithm~\ref{algorithm:armg} by a factor of 
$O(|C_{max}|^2  |\lambda|)$, where the $|C_{max}|$ is the size of the largest candidate clause and 
$|\lambda|$ is the number of INDs with equality in the schema.
\begin{example}
	\label{example:armg-castor}
	Consider again the definitions for target relation {\small $\mathit{hardWorking}$} from Example~\ref{example:armg-notrobust} 
 	over the Original and 4NF UW-CSE schemas in Table~\ref{table:uwcse}. 
	Let the INDs $student[stud]=$ $inPhase[stud]$ and 
	$student[stud]=$ $yearsInProgram[stud]$ hold in the Original schema. 
	Assume that Castor wants to generalize these clauses to cover example $e'$, which satisfy 
	$\mathit{student(x)}$ but does not satisfy $\mathit{inPhase(x,prelim)}$.
	Castor removes $\mathit{inPhase}$ literal from the first clause and then removes literals with relation symbols
	$\mathit{student}$ and $\mathit{yearsInProgram}$ due to the INDs 
	in the original schema.
    It also removes $\mathit{student(x,prelim,3)}$ from the second clause.
    Hence, it returns equivalent generalizations.
\end{example}
\begin{lemma}
	\label{armg-schemaindependent}
	Castor's {\it ARMG} is schema independent.
\end{lemma}
\begin{proof}
	Let $\tau: \mR \rightarrow \mS$ be a decomposition from schema $\mR = ({\bf R}, \Sigma_{\mR})$ and $\mS = ({\bf S}, \Sigma_{\mS})$.
	Let $\tau$ map each relation $R^i \in {\bf R}$ to relations $S_{i_1} \ldots $ $S_{i_m} \in {\bf S}$.
	Assume that the input to the {\it ARMG} algorithm over schema $\mR$ is the bottom-clause for seed example $e$, denoted as $\overrightarrow{C_{\mR}}$, which is in the form of $T(w) \leftarrow L_1(u_1), \cdots, L_{n}(u_{n})$. 
	The input to the algorithm over schema $\mS$ is the bottom-clause for seed example $e$, denoted as $\overrightarrow{C_{\mS}}$, which is
	in the following form: $T(w) \leftarrow P_1(v_1), \cdots, P_{k}(v_{k})$. $\overrightarrow{C_{\mR}}$ and $\overrightarrow{C_{\mS}}$ are generated by the Castor bottom-clause construction algorithm and according to Lemma~\ref{lemma:equivalent-bottom-clauses} are equivalent.
	They also do not contain any redundant literal. 
	
	The mapping between equivalent clauses over $\mR$ and $\mS$, $\delta_{\tau}$, that is associated with $\tau$ projects each literal with relation symbol $R^i$ in $\overrightarrow{C_{\mR}}$ to literals with relation symbols $S_{i_1} \ldots S_{i_m}$ in the clause $\overrightarrow{C_{\mS}}$.
	Hence, there is a bijective mapping $M$ that maps each literal $R^i(u_l)$ in the body of $\overrightarrow{C_{\mR}}$ to a set of 
	literals $S_{i_1}(v_j) \ldots$ $S_{i_m}(v_{j + (i_m - i_1)})$ in the body of $\overrightarrow{C_{\mS}}$.
	Moreover, according to Lemma~\ref{lemma:equivalent-bottom-clauses}, 
	a literal $L_l$ appears before $L_o$ in the body of $\overrightarrow{C_{\mR}}$ iff 
	all literals in $M(L_l)$ appear before the ones in $M(L_o)$ in $\overrightarrow{C_{\mS}}$. 	
	The mapping $\delta$ only projects each literal with relation symbol $R^i(u_l)$ to a set of literals in $M(R^i(u_l))$. Hence, 
	the free tuples in every pairs of literals $L_l$ and $L_o$ in $\overrightarrow{C_{\mR}}$ have a variable in common 
	iff the sets of free tuples in $M(L_l)$ and $M(L_o)$ have a shared variable. Otherwise, $\overrightarrow{C_{\mR}}$ and 
	$\overrightarrow{C_{\mS}}$ are not equivalent.

	Assume that Castor removes literal $L_b$ in $\overrightarrow{C}_{\mR}$ because it is the blocking atom in the current iteration.
	Let the positive example considered for this iteration of the algorithm be $e'$. If $L_b$ is the blocking atom, the sub-clause of $\overrightarrow{C}_{\mR}$ up to and excluding $L_b$ covers $e'$ and the one up to and including $L_b$ does not cover $e'$.
	Because mapping $M$ preserves the order of literals, the sub-clause of $\overrightarrow{C}_{\mS}$ up to and excluding $L_b$ 
	covers $e'$ and the one up to and including literals in $M(L_b)$ does not cover it. 
	Hence, at least one literal in $M(L_b)$ is a blocking atom in $\overrightarrow{C}_{\mS}$. 
	If the algorithm removes this literal, it also drops the rest of literals in $M(L_b)$.
	This is because the free tuples of these literals do not satisfy the IND between relation symbols of 
	$M(L_b)$ in the canonical database instance of $\overrightarrow{C}_{\mS}$ after removing the blocking atom in $\overrightarrow{C}_{\mS}$.
	Similarly, if one of the literals in $M(L_b)$ is a blocking atom, $L_b$ will be also a blocking atom. In this case, the {\it ARMG} algorithm will also 
	remove the non-blocking atoms in $M(L_b)$ that are not member of $M(L_o)$, $L_b \neq L_o$ as they do not satisfy any IND 
	after removing the blocking atom.

	Assume that a literal $L_l$ is removed because it does not satisfy any IND in the canonical database 
	instance of $\overrightarrow{C}_{\mR}$ immediately after dropping the blocking atom $L_b$. 
	Let the IND between the relation symbol of $L_b$ and the relation symbol of $L_l$ be $\Sigma_1$.
	Because $\tau$ preserves the INDs between relations in ${\bf R}$, there is also an IND $\Gamma_1$
	between the relation symbol of a literal $P_l$ in $M(L_l)$ and the relation symbol of a literal in $M(L_b)$. 
	Because $L_b$ is a blocking atom, {\it ARMG} algorithm has already removed all literals in $M(L_b)$ from $\overrightarrow{C}_{\mS}$.
	Assume that the free tuples of $P_l$ and another literal $P_o$ in $\overrightarrow{C}_{\mS}$ satisfy $\Gamma_1$.
	If $P_o$ has not been already removed from $\overrightarrow{C}_{\mS}$, 
	the free tuples of $L_l$ and $L_o$ satisfy the IND constraint $\Sigma_1$ in the canonical database of $\overrightarrow{C}_{\mR}$. 
	Thus, $L_l$ should not have been removed from  $\overrightarrow{C}_{\mR}$. 
	Therefore, $P_o$ is removed from $\overrightarrow{C}_{\mS}$.
	Hence, $P_l$ must also be removed from $\overrightarrow{C}_{\mS}$ as it does not satisfy any IND.
	After removing $P_1$, all literals in $M(L_l)$ 
	will be removed from $\overrightarrow{C}_{\mS}$. Using similar argument, we show that if the {\it ARMG} algorithm removes 
	a literal $L_r$ from $\overrightarrow{C}_{\mR}$ because its free tuple does not satisfy any IND after dropping another literal, 
	the algorithm removes the literals in $M(L_r)$ that are not member of $M(L_o)$, $L_r \neq L_o$.
	Also, we prove that if the algorithm eliminates a literal $P_r$ from $\overrightarrow{C}_{\mS}$ 
	because its free tuple does not satisfy any IND, the
	algorithm also removes the literals $L_r$, where $P_r \in M(L_r)$ from $\overrightarrow{C}_{\mR}$.
	We similarly prove that if Castor removes a literal because it is not head-connected, it also removes its corresponding literals over the decomposition and vice versa. 
\end{proof}

\subsubsection{Negative Reduction}
\label{section:CastorNegativeReduction}

\begin{algorithm}
	\SetKwInOut{Input}{Input}
	\SetKwInOut{Output}{Output}
	\Input{Clause $\overrightarrow{C} = T \leftarrow L_1, \cdots, L_n$, database instance $I$, negative examples $E^-$.}
	\Output{Reduced clause $\overrightarrow{C'}$.}
	
	$E^-_c \leftarrow $ subset of $E^-$ covered by $\overrightarrow{C}$\;
	${\bf I} \leftarrow$ list containing all instances of inclusion classes in $\overrightarrow{C}$\;
	\While{true} { 
		$I_i \leftarrow$ first inclusion instance in ${\bf I}$ such that clause $T \leftarrow B$, where $B$ contains literals 
		in inclusion instances $I_1, \cdots, I_i$, has negative coverage $E^-_c$\;
		${\bf H} \leftarrow$ inclusion instances in ${\bf I}$ that connect $I_i$ with $T$\;
		${\bf N} \leftarrow$ literals from inclusion instances $I_1, \cdots, I_i$ not in ${\bf H}$\;
		${\bf I'} \leftarrow {\bf H} \cup [ I_i ] \cup {\bf N}$ \;
		\If{$length({\bf I'}) = length({\bf I})$} {
			$C' = T \leftarrow B$, where $B$ contains all literals in ${\bf I'}$\;
			Return $C'$\;
		}
		${\bf I} \leftarrow {\bf I'}$\;
	}
	\caption{Castor negative reduction algorithm.}
	\label{algorithm:reduction}
\end{algorithm}

Castor further generalizes clauses produced by {\it ARMG} by removing non-essential literals from clauses.
A literal is {\it non-essential} if after it is removed from a clause, 
the number of negative examples covered by the clause does not increase~\cite{golem,progolem}. 
This step is called {\it negative reduction} and reduces the generalization error of the produced definitions to the training data. 
Castor uses INDs with equality to compute equivalent reductions of clauses over composition/ decomposition.
Given a clause $\overrightarrow{C}$ and inclusion class ${\bf N}=$ $\{S_i \mid 1 \leq i \leq m\}$ 
over schema $\mS$, an instance $Y_{\bf N}$ of ${\bf N}$ is a set of literals $S_1(u_1), \cdots, S_m(u_m)$ 
in $\overrightarrow{C}$ such that for every IND $S_i[X] = S_j[X]$, $1 \leq i,j \leq m$, 
there are literals $S_i(u_i)$ and $S_j(u_j)$ in $Y_{\bf N}$ such that $\pi_{X}(u_i) =$ $\pi_{X}(u_j)$.
An instance $Y_{\bf N}$ over a clause $\overrightarrow{C}$
is {\it non-essential} if after removing all literals in $Y_{\bf N}$ from $\overrightarrow{C}$, 
the number of negative examples covered by the clause does not increase.
First, for each literal $L_j$ in the input clause $\overrightarrow{C}$, Castor computes the instances of inclusion classes in $\overrightarrow{C}$ that start with $L_j$.
It creates a list containing all found instances, in the order in which they are found.  
Then, it iteratively removes non-essential instances from this list.
In each iteration, it finds the first inclusion instance $Y_i$ 
such that the sub-clause of $\overrightarrow{C}$ that contains all literals in every inclusion instance up to $Y_i$ 
has the same negative coverage as $\overrightarrow{C}$.
A head-connecting inclusion instance for $Y_i$ contain literals that connect a literal in $Y_i$ to the head of the clause by a chain of variables.
Castor moves $Y_i$ and its head-connecting inclusion instances to the beginning of the list, and discards the inclusion instances after $Y_i$.
These instances can be discarded because they are non-essential.
Note that some literals in the discarded instances may also belong to other instances before or in $Y_i$.
The algorithm iterates until the number of inclusion instances in the clause does not change after one iteration. 
At the end, it creates a clause whose head literal is the same as $\overrightarrow{C}$ and 
body contains all literals in the remaining instances of inclusion classes.
Because negative reduction only removes literals from the clause, it does not decrease the number of positive examples covered by 
the clause.
More details can be found in Algorithm~\ref{algorithm:reduction}. 
\begin{lemma}
	\label{castor-reduction-schemaindependent}
	Castor's negative reduction is schema independent.
\end{lemma}
\begin{proof}
	Let $\tau: \mR \rightarrow \mS$ be a composition/ decomposition
	between schemas
	$\mR = ({\bf R}, \Sigma_{\mR})$ and $\mS = ({\bf S}, \Sigma_{\mS})$.
	Let $R[U] \in {\bf R}$ and
	$\tau$ map relation $R[U]$ to
	relations $S_1[V_1], \cdots, $ $ S_m[V_m]$, 
	$1 \le m \le |{\bf S}|$.
	Let ${\bf N}$ be the inclusion class in $\Sigma_{\mS}$ that contains relations $S_1[V_1], \cdots, $ $ S_m[V_m]$.	
	Assume that $\overrightarrow{C}_{\mR}$ is a clause over
	schema $\mR$ and contains $k$ literals $R(u_i)$, 
	$1 \leq i \leq k$.
	Let $\overrightarrow{C}_{\mS}$ be the equivalent clause of
	$\overrightarrow{C}_{\mR}$ over $\mS$.
	Let $\mathit{Reduce}(C)$  be the function that performs negative reduction on clause $C$.
	We show that 
	$\mathit{Reduce}(\overrightarrow{C}_{\mR}) \equiv \mathit{Reduce}(\overrightarrow{C}_{\mS})$.
	
	Because $\overrightarrow{C}_{\mR}$ contains $k$ literals $R(u_i)$, $1 \leq i \leq k$, and $\overrightarrow{C}_{\mR} \equiv \overrightarrow{C}_{\mS}$, 
	then $\overrightarrow{C}_{\mS}$ must contain $k$ instances of inclusion class ${\bf N}$.
	These instances of inclusion class may or may not share literals.
	Let $n$ be the number of instances of inclusion class ${\bf N}$ in $\overrightarrow{C}_{\mS}$ that share literals. 
	Without loss of generality, we assume that instances can only share the first literal. 
	That is, instances $I_{{\bf N}i}$ and $I_{{\bf N}j}$ share a literal if they have the form 
	$I_{{\bf N}i} = $ $S_1(v_{\_1}), S_2(v_{i2}), \cdots, $ $S_m(v_{im})$ and 
	$I_{{\bf N}j} = $ $S_1(v_{\_1}), S_2(v_{j2}), \cdots, $ $S_m(v_{jm})$.
	We prove by induction on $n$. 
	
	Base case: let $n=1$. Clause $\overrightarrow{C}_{\mR}$ contains literal $R(u)$ and 
	$\overrightarrow{C}_{\mS}$ contains an instance of inclusion class ${\bf N}$ with literals $S_1(v_{1}), \cdots, $ $ S_m(v_{m})$ such that 
	$\bowtie_{l=1}^{m}[v_{l}]$ $=u$.
	Notice that $\overrightarrow{C}_{\mR}$ may contain other literals with relation $R$ and $\overrightarrow{C}_{\mS}$ may contain other instances of inclusion class ${\bf N}$.
	However, because $n=1$, these instances do not share literals and can be treated independently.
	Then, Castor removes literal $R(u)$ in 
	$\overrightarrow{C}_{\mR}$ iff it removes literals
	$S_1(v_{1}), \cdots, $ $ S_m(v_{m})$ in $\overrightarrow{C}_{\mS}$.
	
	Assumption step: let $n=k$. 
	$\overrightarrow{C}_{\mR}$ contains literals $[R(u_{i})]$, $1 \le i \le k$,
	$\overrightarrow{C}_{\mS}$ contains literals $S_1(v_{\_1}), [S_2(v_{i2}), \cdots, $ $S_m(v_{im})]$, $1 \le i \le k$ and
	$\overrightarrow{C}_{\mR} \equiv \overrightarrow{C}_{\mS}$.
	
	Induction step: let $n=k+1$.
	Let $\overrightarrow{C}_{\mS}$ contain $k+1$ instances of inclusion class ${\bf N}$, which share the first literal.
	Let $\overrightarrow{C}_{\mR}$ be the equivalent clause, which contains $k+1$ literals $R(u_{i})$, $1 \le i \le k+1$.
	We divide instances in $\overrightarrow{C}_{\mS}$ in two: ${\bf I_{{\bf N}(1..k)}} = $ $S_1(v_{\_1}), [S_2(v_{i2}), \cdots, $ $S_m(v_{im})]$, $1 \le i \le k$ and $I_{{\bf N}(k+1)} =$ $S_1(v_{\_1}), S_2(v_{(k+1)2}), \cdots, $ \\
	$S_m(v_{(k+1)m})$.
	We also divide literals in $\overrightarrow{C}_{\mR}$ in two: ${\bf R_{1..k}} = [R(u_{i})]$, $1 \le i \le k$ and $R(u_{k+1})$.
	
	Let $\overrightarrow{C'}_{\mS}$ contain all literals in ${\bf I_{{\bf N}(1..k)}}$ and $\overrightarrow{C'}_{\mR}$ contain all literals in ${\bf R_{1..k}}$.
	We examine the cases where we add literal $R(u_{k+1})$ to $\overrightarrow{C'}_{\mR}$ such that $\overrightarrow{C'}_{\mR} \cup \{ R(u_{k+1}) \} = \overrightarrow{C}_{\mR}$, 
	and we add all literals in instance $I_{{\bf N}(k+1)}$ to $\overrightarrow{C'}_{\mS}$ such that $\overrightarrow{C'}_{\mS} \cup I_{{\bf N}(k+1)} = \overrightarrow{C}_{\mS}$.
	
	Castor removes all literals in ${\bf R_{1..k}}$ and literal $R(u_{k+1})$ iff 
	it removes all literals in ${\bf I_{{\bf N}(1..k)}}$ and $I_{{\bf N}(k+1)}$.
	Then, $\mathit{Reduce}(\overrightarrow{C}_{\mR}) \equiv$ $\mathit{Reduce}(\overrightarrow{C}_{\mS})$.
	
	Castor removes all literals in ${\bf R_{1..k}}$ but not literal $R(u_{k+1})$ iff 
	it removes all literals in ${\bf I_{{\bf N}(1..k)}}$, but not literals in $I_{{\bf N}(k+1)}$.
	Notice that literal $S_1(v_{\_1})$ stays in clause $\mathit{Reduce}(\overrightarrow{C}_{\mS})$ because it is in instance $I_{{\bf N}(k+1)}$.
	Because $\tau (R(u_{k+1})) = $ $S_1(v_{\_1}), S_2(v_{(k+1)2}), \cdots, $ $S_m(v_{(k+1)m})$, 
	then $\mathit{Reduce}(\overrightarrow{C}_{\mR}) \equiv$ $\mathit{Reduce}(\overrightarrow{C}_{\mS})$.
	
	Castor removes literal $R(u_{k+1})$ but not literals in ${\bf R_{1..k}}$ iff 
	it removes all literals in $I_{{\bf N}(k+1)}$, but not literals in ${\bf I_{{\bf N}(1..k)}}$.
	Again, notice that literal $S_1(v_{\_1})$ stays in clause $\mathit{Reduce}(\overrightarrow{C}_{\mS})$ because it is in instances ${\bf I_{{\bf N}(1..k)}}$.
	Because we know that $\mathit{Reduce}(\overrightarrow{C'}_{\mR}) \equiv$ $\mathit{Reduce}(\overrightarrow{C'}_{\mS})$ (assumption step), 
	then $\mathit{Reduce}(\overrightarrow{C}_{\mR}) \equiv$ $\mathit{Reduce}(\overrightarrow{C}_{\mS})$.
\end{proof}
\noindent
Based on 
Lemmas~\ref{lemma:equivalent-bottom-clauses},
\ref{armg-schemaindependent}, and
\ref{castor-reduction-schemaindependent}, 
Castor is schema independent.

\subsection{Generating Safe Clauses}
\label{section:castor-safe-clauses}
Let the head-variables of a clause be the ones that appear in its head literal.
A clause is {\it safe} if every head-variable appears in some literal in the body of the clause.
A safe definition is safe if all its clauses are safe.
The results of safe clauses and definitions are finite over a (finite) database.
By default, current relational learning algorithms, including Castor, may learn {\it unsafe} Datalog definitions \cite{AliceBook}.
Because an unsafe definition produces infinitely many answers over a (finite) database, it is {\it not} desirable in many relevant applications, such as learning database queries from examples~\cite{Maier:VLDB:2015,Abouzied:PODS:13}.
Furthermore, a relational learning algorithm that learns only safe clauses can learn a definition from positive examples only.
In this section, we describe how Castor can be modified to generate only safe definitions.
As we have explained, Castor first constructs the bottom-clause associated with some positive example $e$, and 
then generalizes this clause using {\it ARMG} and negative reduction.

\subsubsection{Bottom-clause Construction:} 
The bottom-clause construction uses the positive example $e$ as the initial head-literal for the bottom-clause. Castor picks every literal in body of the bottom-clause based on the constants/ variables in the head-literal. Thus, the bottom-clause is guaranteed to be safe.

\subsubsection{Safe {\it ARMG} Algorithm}
Castor's {\it LearnClause} procedure calls the {\it ARMG} algorithm multiple times. 
Let the {\it ARMG} algorithm take as input clause $\overrightarrow{C}$ and positive example $e$, and produce as output clause $\overrightarrow{C'}$.
Clause $\overrightarrow{C'}$ may not be safe.
Besides checking whether the score of $\overrightarrow{C'}$ is better than the current best score, Castor also checks whether $\overrightarrow{C'}$ is safe.
If $\overrightarrow{C'}$ is safe, Castor considers it as a candidate.
If $\overrightarrow{C'}$ is not safe, Castor simply ignores it.

\subsubsection{Safe Negative Reduction}
In negative reduction, Castor first computes all instances of inclusion classes, and then iteratively removes non-essential instances.
In order to output a safe clause, Castor first sorts all instances of inclusion classes by the number of head-variables appearing in the instance in descending order.
Then, in each iteration, Castor finds the first inclusion instance $Y_i$ 
such that the sub-clause of $\overrightarrow{C}$ that contains all literals in every inclusion instance up to $Y_i$ 
has the same negative coverage as $\overrightarrow{C}$.
Castor then finds the head-connecting inclusion instances for $Y_i$.
Let these instances be called ${\bf H_{Y_i}}$.
Next, from the instances of inclusion classes that will be discarded, Castor finds the first instances that contain head-variables that do not appear in $Y_i$ or ${\bf H_{Y_i}}$.
Let these instances be ${\bf S_{Y_i}}$
The goal is to find literals needed to make the resulting clause safe.
These literals are guaranteed to exist because the clauses produced by {\it ARMG} are forced to be safe.
Castor then moves $Y_i$, ${\bf H_{Y_i}}$, and ${\bf S_{Y_i}}$ to the beginning of the list, and discards the inclusion instances after $Y_i$, except the ones in ${\bf S_{Y_i}}$.
The algorithm continues its normal operation until the number of inclusion instances in the clause does not change.
Finally, it creates a clause whose body contains all literals in the remaining instances of inclusion classes.

\subsection{General Decomposition/ Composition}
\label{sec:generalDecomposition}
Castor is robust over schema variations caused by bijective decompositions and compositions as defined in Section~\ref{transformations}.
Bijective decompositions and compositions need at least one IND with equality in the transformed and original schemas, respectively.
We have observed several examples of these transformations in real-world databases, some of which we report in Section~\ref{section:empirical}. 
However, in addition to INDs with equality, schemas often have INDs in the general form of subset or equality.
One can use these INDs to define a more general decomposition.
More precisely, a {\it general decomposition} of schema $\mR$ with single relation symbol $R$
is schema $\mS$ with relation symbols $S_1\ldots$ $S_n$ that satisfies all conditions in Definition~\ref{def:decompositionGeneral} 
but at least one IND in $\mS$ (in the second condition of Definition~\ref{def:decompositionGeneral}) is an IND in form of subset or equality.
A general decomposition of a schema with multiple relations is the union of general decompositions over each relation symbol in the schema.

A general decomposition is invertible but not bijective \cite{AliceBook}.
Consider the general decomposition from $\mR_1:$$\{R_1(A,B,C)\}$ to \\
$\mS_1:$$\{S_1(A,B), S_2(A,C)\}$ with IND $S_2[A]\subseteq$ $S_1[A]$, 
and the instance of $\mS_1$ $I^{1}_{\mS_1}:$ $I^{1}_{S_1}=\{(a_1,b_1),(a_2,b_2)\}$, $I^{1}_{S_2}=\{(a_1,c_1)\}$.
There is {\it not} any instance of $\mR_1$ that represents the same information as $I^{1}_{\mS_1}$. 
Hence, it is not clear how to define schema independence for $I^{1}_{\mS_1}$.
Also, the composition from $\mS_1$ to $\mR_1$ is not invertible as $I^{1}_{S_1} \bowtie I^{1}_{S_2}$ loses tuple $(a_2,b_2)$, 
which cannot be recovered. 
As some original and transformed databases in this composition do not have the same information,
it is not reasonable to expect equivalent learned definitions over these databases.

One may resolve these issues by considering databases with labeled nulls, e.g., 
by using weak universal relation assumption \cite{AliceBook,Fagin-inverse}.
For example, one can compose instance $I^{1}_{\mS_1}$ in the last example to $I^{1}_{\mR_1}:$ $\{(a_1, b1, c_1),$ $(a_2, b2, x)\}$ where
$x$ is a labeled null that reflects the existence of an unknown value. 
However, it takes more than a single paper to define the semantic of learning 
over databases with labeled nulls and schema independence over transformations that introduce labeled nulls, 
so we leave this direction for future work. 
Instead, we define schema independence for general decompositions by ignoring the instances in the transformed schema that do not have any
corresponding instance in the original schema.
Hence, the mapping between the instances in the original and the remaining instances of the transformed schemas is bijective, thus, it is 
definition bijective.
We define hypothesis invariance and schema independence as defined in Section~\ref{section:framework} for this mapping. 
An algorithm is schema independent over a general decomposition if it is schema independent over its mapping between 
the corresponding instances of the original and decomposed schemas.

A {\it general composition} is the inverse of a general decomposition.
As we have shown, general compositions lose information. Thus, it is not reasonable to expect algorithms to be schema independent over them.
We limit the instances of its original schema so that it becomes invertible.
For simplicity, we define schema independence for a general composition whose transformed schema has a single relation. 
Our definition extends for schemas with multiple relations.
Let schema $\mR$ with a single relation symbol $R$ be a general composition of schema $\mS$ with relation symbols 
$S_1\ldots$ $S_n$ such that for all $S_i, S_j$, $1 \leq i,j \leq n$, $X =$ $sort(S_i) \cap sort(S_j)$ $\neq \emptyset$,
$\mS$ has IND $S_i[X] \subseteq S_j[X]$.
Natural join between $S_1\ldots$ $S_n$ does not lose any tuple in an instance of $\mS$, $I_{\mS}$, iff
for each IND $S_i[X] \subseteq S_j[X]$ in $\mS$ we have $\pi_{X}(I_{S_i})=$ $\pi_{X}(I_{S_j})$, where
$I_{S_i}$ and $I_{S_j}$ are relations of $S_i$ and $S_j$ in $I_{\mS}$, respectively.
Let $J(\mS)$ denote instances with the aforementioned property in $\mS$. 
The mapping from $J(\mS)$ to $I(\mR)$ is bijective, therefore, 
it is definition bijective. Thus, hypothesis invariance and schema independence properties in Section~\ref{section:framework}
can be define for this mapping.
An algorithm over the general composition from $\mS$ to $\mR$ is schema independent 
if it is schema independent over the mapping between $J(\mS)$ to $I(\mR)$.
We call a finite application of general decompositions and compositions a general decomposition/ composition.
An algorithm is schema independent over a general decomposition/ composition if it is schema independent over
its general decompositions and general compositions.

Consider again schema $\mS$ with relation symbols 
$S_1\ldots$ $S_n$.
To achieve schema independence over general composition/ decomposition, given instance $I_{\mS}$, 
Castor finds each INDs $S_i[X] \subseteq S_j[X]$ in $\mS$ where $\pi_{X}(I_{S_i})=$ $\pi_{X}(I_{S_j})$ and adds the IND to its list of IND with equality in a preprocessing step. 
It then proceeds to its normal execution. The proofs of lemmas~\ref{lemma:equivalent-bottom-clauses}, \ref{armg-schemaindependent}, and \ref{castor-reduction-schemaindependent} extend for the corresponding instances of $\mR$ and $\mS$ that have the same information in non-bijective decompositions. Using a similar argument, these proofs also hold for the corresponding instances that have the same information over general decomposition. 
Thus, Castor is schema independent over general decompositions/ compositions. 
Using this method, Castor also handles combinations of INDs in general form and INDs with equality.

The pre-processing step of checking for each IND  $S_i[X] \subseteq S_j[X]$
in schema $\mS$ whether $\pi_{X}(I_{S_i})=$ $\pi_{X}(I_{S_j})$ holds, may 
take a long time and some users may not want to wait for this pre-processing phase to finish. 
Another approach is to use INDs in form of subset or equality in Castor directly as follows.
We extend Castor to use both INDs with equality and in general form. 
In the rest of this section, we refer to both type of INDs simply as IND and write them by $\subseteq$ for brevity.
We redefine an inclusion class ${\bf N}$ in schema $\mS$ as a set of relation symbols $S_i, S_j$ in $\mS$
such that there is a sequence of INDs $S_k[X_k]\subseteq$ $S'_k[X_k]$ or $S'_k[X_k]\subseteq$ $S_k[X_k]$
$i \leq k \leq j$, in $\mS$ where $X_k=\mathit{sort}(S_k)$ $\cap \mathit{sort}(S'_k)$ and $S_{k+1}=$ $S'_k$ for $i \leq k \leq j-1$.
Assume that Castor picks a tuple $s_i$ from relation $S_i$ in inclusion class ${\bf N}$ during the bottom-clause construction.
For each $S_i[X] \subseteq$ $S_j[X]$ in ${\bf N}$, 
Castor selects all tuples $s_j$ of relation $I_{S_j}$, $i \neq j$ such that $\pi_{X}(s_j)\subseteq$ $\pi_{X}(s_i)$.
Castor repeats this process for $s_j$ until it exhausts all INDs in ${\bf N}$. 
After this step, Castor follows the bottom-clause construction algorithm explained in Section~\ref{section:CastorBottomClause}.
Since the natural join between relations in $\mS$ is acyclic, the pairwise consistency implies the global consistency of the joint tuples. 
For the same reason, the proof of Proposition~\ref{prop:acyclic} extends for INDs. 
Hence, the INDs in each inclusion class are not cyclic and Castor efficiently finds the tuples that join according to the INDs. 
We also extend the Castor ARMG algorithm to ensure that the free tuple of each literal $S(u)$, $u$,
satisfies all INDs in which $S$ participates after a blocking atom is removed. 
If $u$ does not satisfy any of its corresponding INDs, it is removed. 
Finally, we redefine the instance of an inclusion class ${\bf N}$, $Y_{\bf N}$, in an ordered clause $\overrightarrow{C}$ as
a set of literals $S_1(u_1), \cdots, S_m(u_m)$ in $\overrightarrow{C}$ such that for each IND $S_i[X] \subseteq S_j[X]$, 
$1 \leq i,j \leq m$, there are literals $S_i(u_i)$ and $S_j(u_j)$ in $Y_{\bf N}$ where $\pi_{X}(u_i) =$ $\pi_{X}(u_j)$. 
We modify our negative reduction algorithm in Section~\ref{section:CastorNegativeReduction} to use the new definition of inclusion class instance.
This extension of Castor may not be schema independent as it may miss some tuples in bottom-up construction or ignore some literals in ARMG algorithms. 
For example, consider the general decomposition from $\mR_1:$$\{R_1(A,B,C)\}$ to  
$\mS_1:$$\{S_1(A,B), S_2(A,C)\}$ with IND $S_2[A]\subseteq$ $S_1[A]$ and instances
$J^{1}_{\mR_1}:$ $J^{1}_{R_1}=\{(a_1,b_1,c_1)\}$ and $J^{1}_{\mS_1}:$ $J^{1}_{S_1}=\{(a_1,b_1)\}$, $J^{1}_{S_2}=\{(a_1,c_1)\}$. 
Assume that the modified Castor bottom-clause construction over $J^{1}_{\mS_1}$ starts with tuple 
$(a_1,b_1)$. IND $S_2[A]\subseteq$ $S_1[A]$ does not force Castor to select $(a_1,c_1)$ for the bottom-clause.
Hence, Castor delivers non-equivalent bottom-clauses over $J^{1}_{\mS_1}$ and $J^{1}_{\mR_1}$.
But, our empirical results in Section~\ref{section:empirical} show that this extension of Castor 
is more schema independent than other algorithms over general decomposition/ composition.


\subsection{Castor System Design Choices and Implementation}
\label{section:implementation}
Current bottom-up algorithms do not run efficiently over medium or large databases because they
produce many long bottom-clauses to generalize \cite{progolem}.
Also, these clauses are time-consuming to evaluate.
A relational learning algorithm evaluates a clause by computing the number of positive and negative examples covered by the clause.
These tests dominate the time for learning~\cite{DeRaedt:LogicalLearning:08}.
It is generally time-consuming to evaluate clauses with many literals. 
Castor implements several optimizations to run efficiently over large databases. 

\subsubsection{In-memory RDBMS}
Castor is implemented on top of the in-memory RDBMS VoltDB ({\it voltdb.com}).
Relational databases are usually stored in RDBMS's. Therefore, it is natural to implement a learning algorithm on top of an RDBMS.
Using an RDBMS also provides access to the schema constraints, e.g., inclusion dependencies, which we use to achieve schema independence. 
Castor performs bottom-clause construction multiple times during the learning process.
The bottom-clause construction algorithm queries the database multiple times each of which selects all tuples in a table that match given constants from the training data. We leverage RDBMS indexing to
improve the running time of these queries.

\subsubsection{Stored Procedures}
We implement the bottom-clause construction algorithm inside a stored procedure to reduce the number of API calls made from Castor to the RDBMS. Castor makes only one API call per each bottom-clause.
The first time that Castor is run on a schema, it creates 
the stored procedure that implements the bottom-clause construction algorithm for the given schema.
Castor reuses the stored procedure when the algorithm is 
run again, with either new training data or updated database instance. 

\subsubsection{Efficient Clause Evaluation}
\label{section:clause-evaluation}
One approach to computing the number of positive (negative) examples covered by a clause is to join the table containing the positive (negative) examples with the tables corresponding to all literals in the body of the clause.
If two literals share a variable, then a natural join between the two columns corresponding to the shared variable in the literals is used.
This strategy works well when clauses are short, as in top-down algorithms~\cite{QuickFOIL}.
However, our empirical studies show that the time and space requirements for this approach are prohibitively large on large clauses generated by bottom-up algorithms.
Thus, we perform coverage tests by using a subsumption engine.
Clause~$C$ {\it $\theta$-subsumes} $C'$ iff there is some substitution $\theta$ such that $C \theta \subseteq C'$.
A ground bottom-clause is a bottom-clause that only contains constants. 
A candidate clause~$C$ covers example $e$ iff $C$ $\theta$-subsumes the ground bottom-clause $\perp_{e}$ associated with $e$.
Castor uses the efficient subsumption engine Resumer2~\cite{Kuzelka:2009:RSE:1497096.1497102}. 
Resumer2 efficiently checks if clause $C$ covers example $e$ by deciding the subsumption between $C$ and the ground bottom-clause $\perp_{e}$ of $e$.
Given clause $C$ and a set of examples $E$, Castor checks if $C$ covers each $e \in E$ separately. 
Castor divides $E$ in subsets and performs coverage testing for each subset in parallel.

\subsubsection{Coverage Tests}
Castor optimizes the generalization process by reducing the number of coverage tests.
Castor first generates the bottom-clause relative to a positive example. Then, Castor generalizes this clause in the beam search and ARMG algorithms by generating new, more general clauses.
If clause $C$ covers example $e$, then clause $C''$, which is more general than
$C$, also covers $e$. If Castor knows that $C$ covers $e$, it does not check if $C''$ covers $e$.

\subsubsection{Minimizing Clauses}
Bottom-up algorithms such as Castor produce large clauses, which are expensive to evaluate.
Castor minimizes bottom-clauses by removing syntactically redundant literals.
A literal $L$ in clause $C$ is {\it redundant} if $C$ is equivalent to $C' = C - \{L\}$.
Clause equivalence between $C$ and $C'$ can be determined by checking whether $C$ $\theta$-subsumes $C'$ and $C'$ $\theta$-subsumes $C$. 
Castor minimizes clauses using theta-transformation~\cite{Costa:2003:QTI:945365.945392}.
It uses a polynomial-time approximation of the clausal-subsumption test, which is efficient and retains the property of correctness.
Given clause $C$, for each literal $L$ in $C$, the algorithm checks if $C \subseteq C' = C - \{L\}$. 
If this holds, then $L$ is redundant and will be removed.
Minimizing bottom-clauses reduces the hypothesis space considered by Castor.
It also makes coverage testing faster.
Castor also minimizes learned clauses before adding them to the definition. This ensures that clauses are concise and interpretable.

	\section{Query-based algorithms}
\label{section:worstcase}

In this section, we consider query-based learning algorithms, which learn exact definitions by asking queries to an oracle~\cite{Khardon:1999,Reddy98learninghorn,Selman:2011,Arias:2007,Abouzied:PODS:13}.
This type of algorithms have been recently used in various areas of database management, such as finding schema mappings and designing usable query interfaces  \cite{Cate:2013:LSM:2539032.2539035,Abouzied:PODS:13}.
Queries can be of multiple types, however the most common types are equivalence queries and membership queries.
In equivalence queries (EQ), the learner presents a hypothesis to the oracle and the oracle returns {\it yes} if the hypothesis is equal to the target relation definition, otherwise it returns a counter-example. 
In membership queries (MQ), the learner asks if an example is a positive example, and the oracle answers {\it yes} or {\it no}.

Because query-based algorithms follow a different learning model, Definition \ref{schema_independence} is not suited for evaluating their schema (in)dependence. 
Since a query-based algorithm can ask the oracle whether candidate definitions are correct,  
the algorithm will always learn the correct definitions by asking sufficient number of queries from the oracle.
As it takes time and/or resources to answer queries, a desirable query-based algorithm should not ask too many queries \cite{Arias:2007}.
For instance, some database query interfaces use query-based algorithms to
discover users' intents \cite{Abouzied:PODS:13}. 
Because the oracle for these algorithms is the user of the database, a more desired algorithm 
should figure out the user's intent by asking fewer queries from the user.

Query-based algorithms are theoretically evaluated by their {\it query complexity} 
-- the asymptotic number of queries asked by the algorithm \cite{Khardon:1999}.
Therefore, we analyze the impact of schema transformations on the query complexity of these algorithms.
Generally, if an algorithm has different asymptotic behavior over equivalent schemas, then the algorithm is schema dependent.
One way to show that an algorithm has different asymptotic behavior over different schemas is by comparing the lower bound on the query complexity of the algorithm against the upper bound on its query complexity.
If the lower bound under one of the schemas is greater than the 
upper bound under another schema, then the algorithm is highly schema dependent. 
Of course, this is not a desirable property, as this means that the choice of 
representation has a huge impact on the performance of the algorithm. 
However, we prove that a popular query-based algorithm called $A2$ suffers from this property.

$A2$~\cite{Khardon:1999} is a query-based learning algorithm that learns function-free, first-order Horn expressions.
The reasons for choosing this algorithm are three fold: i) $A2$ is representative of query-based learning algorithms that work on the relational model, ii) there is an implementation of the algorithm~\cite{Arias:2007}, iii) $A2$ is a generalization to the relational model of a classic query-based propositional algorithm~\cite{Angluin:1992}.

\begin{theorem}
	\label{A2_schema_independent}
	Let $\Omega (f)_{\mR}$ and $O (g)_{\mR}$ be the lower bound and upper bound, respectively, 
	on the query complexity of $A2$ for all target relations under schema $\mR$,
	where $f$ and $g$ are functions of properties of $\mR$.
	Then, there is a composition/ decomposition of $\mR$, $\mS$, such that $\Omega (f)_{\mR} > O(g)_{\mS}$.
\end{theorem}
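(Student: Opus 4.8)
The plan is to exhibit a schema $\mR$ with a single high-arity relation together with a vertical decomposition $\mS$ of it into binary relations, and to show that a worst-case lower bound on the query complexity of $A2$ over $\mR$ already exceeds the worst-case upper bound over $\mS$. I would take $\mR = (\{R(A_1,\ldots,A_a)\}, \Sigma)$ with $\Sigma$ the functional dependencies $A_1 \to A_i$ for $2 \le i \le a$, and let $\mS$ decompose $R$ into the binary relations $S_i(A_1,A_i)$, $2 \le i \le a$, equipped with the INDs with equality $S_i[A_1] = S_j[A_1]$ demanded by Definition~\ref{def:decomposition}. By the results of Section~\ref{transformations} this is a bijective composition/decomposition, so $A2$ is genuinely being run on two representations of the same information, and any separation in its query complexity is attributable solely to the choice of schema.

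The routine half is the upper bound $O(g)_{\mS}$. Over $\mS$ every relation has arity $2$ and there are $a-1$ of them, so instantiating the worst-case bounds for $A2$ from \cite{Khardon:1999} with number of relations $p = a-1$ and maximum arity $2$ gives a total query complexity in which the only factor that depended exponentially on the arity --- the term of the form $k^{a}$ --- collapses to $k^{2}$, while the dependence on the number of relations is merely linear in $a$. Since the target, the number of clauses $m$, the per-clause variable count $k$, and the example size $n$ are common to both schemas (the decomposition only rewrites the body literals), $O(g)_{\mS}$ is bounded by a fixed polynomial in $a$ multiplied by factors such as $m^2 k^{3k}$ and $n$ that are identical on both sides.

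The substantive half, and the main obstacle, is to prove a lower bound $\Omega(f)_{\mR}$ that retains the exponential arity factor $k^{a}$. I would construct a family of target relations over $R$ --- built from the $k^{a}$ possible assignments of variables to the $a$ argument positions --- together with an adversarial oracle that answers each equivalence query with a counterexample eliminating only a vanishing fraction of the family, and answers membership queries so as to rule out at most one candidate at a time. Because $A2$ constructs clauses bottom-up by pairing literals drawn from precisely this candidate set and then prunes by querying, the adversary can force it to generate and test a constant fraction of the $k^{a}$ candidates before it may commit to the correct definition, giving $\Omega(k^{a})$ queries times the same common factors appearing in $O(g)_{\mS}$. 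The delicate points are to verify that every counterexample the adversary produces is realizable as a legal interpretation over $\mR$ respecting $\Sigma$, and that the pairing step of $A2$ cannot collapse the enumeration and shortcut to the answer; establishing these two facts is where essentially all the work lies.

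Combining the halves finishes the proof: $\Omega(f)_{\mR}$ carries the factor $k^{a}$ whereas $O(g)_{\mS}$ carries only $k^{2}$ together with an extra linear factor $a-1$, with all remaining factors ($m$, $k^{3k}$, $n$) common to both. Hence $\Omega(f)_{\mR}/O(g)_{\mS} = \Omega(k^{a-2}/(a-1))$, which exceeds $1$ once the arity $a$ is taken sufficiently large; for such $a$ we obtain $\Omega(f)_{\mR} > O(g)_{\mS}$, exactly the claimed separation, so $A2$ is not schema independent.
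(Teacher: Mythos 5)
Your schema construction (a single arity-$a$ relation with FDs $A_1 \to A_i$, decomposed into the binary relations $S_i(A_1,A_i)$ with INDs with equality), your instantiation of the upper bound over the decomposed schema with $p = a-1$ and arity $2$, and your final asymptotic comparison all coincide with the paper's proof. The genuine gap is in your lower-bound half. You propose to rebuild the $\Omega(k^a)$ lower bound via an adversary over a family of $k^a$ candidate definitions, and you yourself defer exactly the steps on which everything hinges: that each counterexample is realizable as a legal interpretation respecting $\Sigma$, and that $A2$'s pairing operation cannot collapse the enumeration. As written, the proposal therefore asserts rather than proves $\Omega(f)_{\mR}$. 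None of this work is necessary: the paper simply invokes the lower bound $\Omega(mpk^a)$ already established in \cite{Khardon:1999}, which is the VC-dimension of the hypothesis language (restricted in the proof to single-clause definitions without self-joins, so that $m=1$ and the variable count $k$ is preserved under the decomposition, as you correctly observe). Since a VC-dimension bound lower-bounds the query complexity of \emph{any} learner over that language, citing it both closes your gap and yields the stronger remark the paper makes after the theorem --- that any algorithm as good as $A2$ suffers the same schema dependence --- which your $A2$-specific adversary would not give.

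There is also a bookkeeping error in your final comparison: you cancel $k^{3k}$ as ``common to both'' sides, but it appears only in the upper bound $O(m^2 p k^{a+3k} + nmpk^{a+k})$; the lower bound $\Omega(mpk^a)$ carries no such factor. The inequality you must establish is $k^a > (a-1)k^{2+3k}$ (after discarding the dominated $n$-term and the common factor $m$), not $k^{a-2} > a-1$, so your stated ratio $\Omega(k^{a-2}/(a-1))$ overstates the separation. The conclusion still holds --- fix $k \geq 2$ and let $a$ grow, so the left side is exponential in $a$ while the right side is linear in $a$ --- which is the paper's ``sufficiently large $k$ and $a$,'' but the comparison should be stated with the $k^{3k}$ factor on the $\mS$ side where it belongs.
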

\begin{proof}
	Let $\mR$ and $\mS$ be two definition equivalent schemas.
	Schema $\mR = ({\bf R}, \Sigma)$ contains the single relation $R(A_1, \cdots, A_l)$.
	Assume that $l \ge 2$ and there are
	$l-1$ functional dependencies $A_1 \rightarrow A_i$, $2 \leq i \leq l$, in $\Sigma$.
	Let $\mS = ({\bf S}, \Omega)$ be a vertical decomposition of schema $\mR$, such that
	relation $R(A_1, \cdots, A_l) \in$ ${\bf R}$ is decomposed into
	$l-1$ relations in ${\bf S}$ in the form of $S_i(A_1,A_i)$, $2 \leq i \leq l$.
	For each relation $S_i(A_1,A_i) \in {\bf S}$, $\Omega$ contains the functional dependency
	$A_1 \rightarrow A_i$. For each set of relations $S_i(A_1,A_i)$, $2 \leq i \leq l$,
	$\Omega$ also contains $2(l-1)$ inclusion dependencies in the form of $S_2.A_1 \subseteq S_j.A_1$
	and $S_j.A_1 \subseteq S_2.A_1$, $2 < j \leq l$.
	
	Let $p_{\mR}$ be the number of relations in schema $\mR$, $a_{\mR}$
	be the largest arity of any relation in $\mR$, $k_{\mR}$ be the largest number
	of variables in a clause, and $m_{\mR}$ be the number of clauses in the definition of the target relation over $\mR$.
	We define $p_{\mS}$, $a_{\mS}$, $k_{\mS}$, and $m_{\mS}$ analogously.
	The largest number of constants (i.e. objects) in any example is denoted by $n$. Parameter $n$ is a constraint on the answers of the oracle, therefore it is independent of the hypothesis space and the schemas.
	Because the number of relations in $\mR$ is $p_{\mR}=1$ and the maximum arity is $a=a_{\mR}$,
	then the maximum number of relations in $\mS$ is $p_{\mS} = a-1$. We also have that $a_{\mS} = 2$.
	
	
	Let $\mL$ be the hypothesis language that consists of the subset of Horn definitions that contain a single clause in which no self-joins are allowed.
	All definitions in $\mL$ under schema $\mR$ have the form
	\begin{equation*}
	T({\bf u}) \leftarrow R(x_1, x_2, \cdots, x_l).
	\end{equation*}
	where $T$ is the target relation and ${\bf u}$ is a subset of $\{ x_1, x_2, \cdots, x_l \}$.
	
	Any clause in a definition $h_{\mR} \in \mL$ under schema $\mR$ has at most $l$ distinct variables, which corresponds to the arity of relation $R$. Therefore $k_{\mR} = l$.
	As schema $\mS$ is a vertical decomposition of schema $\mR$, and no self-joins are allowed in $\mL$,
	the definition $\delta (h_{\mR}) = h_{\mS} \in \mL$ also has at most $l$ variables.
	We will use $k = k_{\mR}$ to denote the upper bound on $k_{\mR}$ and $k_{\mS}$.
	Because definitions in $\mL$ consist of a single clause, then the maximum number of clauses in a definition $m = 1$. In general, $m_{\mR} = m_{\mS}$ because $\mS$ is a vertical decomposition of $\mR$.
	
	The upper bound on the number of EQs and MQs in the $A2$ algorithm is $O(m^2pk^{a+3k} + nmpk^{a+k})$, and the lower bound is $\Omega (mpk^a)$ \cite{Khardon:1999}.
	In order to prove our theorem, the following should hold for $\mR$ and $\mS$
	\begin{align*}
	\Omega(mp(k_{\mR})^a)_{\mR} > O(m^2p(a-1)(k_{\mS})^{2+3k_{\mS}} + nmp(a-1)(k_{\mS})^{2+k_{\mS}})_{\mS}
	\end{align*}
	where the left side of the inequality is the lower bound on the query complexity under schema $\mR$ and the right side is the upper bound on the query complexity under schema $\mS$.
	The operator $>$ means that $A2$ will always ask asymptotically more queries under schema $\mR$ than under schema $\mS$.
	We have that $k_{\mR}$ and $k_{\mS}$ are bounded by $k$ and $m$ is the same for both schemas. We can also ignore $n$ as it is independent of the hypothesis space and the schemas. Therefore, by canceling out some terms, the previous inequality can be rewritten as
	$\Omega (k^a)_{\mR} > O(m (a-1) k^{2 + 3k} + (a-1) k^{2+k} )_{\mS}.$
	The first term in the upper bound dominates the second term, then we have
	$\Omega (k^a)_{\mR} > O(m (a-1) k^{2 + 3k} )_{\mS}.$
	Assuming that $m=1$, as in $\mL$, we get
	$\Omega (k^a)_{\mR} > O((a-1) k^{2 + 3k} )_{\mS}.$
	This inequality holds for sufficiently large $k$ and $a$.
\end{proof}

The lower bound of $A2$ is actually the Vapnik-Chevonenkis dimension (VC-Dim) of the hypothesis 
language that consists of function-free, first-order Horn expressions. 
Therefore, we have proven in Theorem~\ref{A2_schema_independent} that there are 
cases where the lower bound on the query complexity of {\it any} 
algorithm under this hypothesis language is greater than the upper bound on 
the query complexity of $A2$. This means that any algorithm that is as good as $A2$ 
(does not ask more queries than $A2$) is highly dependent on the schema details.


	\section{Experiments}
\label{section:empirical}

\subsection{Experimental Settings}
\label{section:empirical-settings}

\subsubsection{Datasets}
We use three datasets whose statistics are shown in Table~\ref{table:datasets}.

The {\bf HIV-Large} dataset contains information about 42,000 chemical compounds 
obtained from the National Cancer Institute's AIDS antiviral screen ({\it wiki.nci.nih.gov/display/NCIDTPdata}).
The initial schema contains relation {\it compound(comp,atm)}, which indicates that compound {\it comp} contains atom {\it atm}. 
The schema also has relations that indicate the chemical element that an atom represents, e.g., {\it element\_C(atm)}, as well as relations to indicate properties of each atom, e.g., {\it p2\_1(atm)}.
The schema represents a bond between two atoms by relation {\it bonds(bd,atm1,atm2)}, and it has a relation for each type of a bond, e.g., {\it bondType1(bd,t1)}.
The goal is to learn the target relation {\it hivActive(compound)}, which indicates that {\it compound} has anti-HIV activity. 
The original HIV dataset is stored in flat files and does not have any information about its constraints.
We explored the database for possible dependencies.
In particular, we have discovered that the INDs {\it bonds[bd] = bondType1[bd], 
bonds[bd] = bondType2[bd], bonds[bd] = bondType3[bd]} hold in the database.
We have used these dependencies to compose relations {\it bonds}, {\it bondType1}, {\it bondType2}, and {\it bondType3} into a single
relation {\it bonds} and create a schema in 4NF, named 4NF-1.
We also decompose relation {\it bonds} in the initial schema to relations {\it bondSource} and {\it bondTarget} to create another schema,
called 4NF-2.
The schemas and all INDs for this dataset are shown in Tables~\ref{table:hiv_schemas} and \ref{table:hiv-INDs}, respectively.
In the {\bf HIV-2K4K} dataset, we keep the same background knowledge, 
but reduce the number of examples to 2K positive and 4K negative examples.

\begin{table}
	\small
	\centering
	\begin{tabular} { c c c c c c }
		\hline
		Name & Schema & \#R & \#T & \#P & \#N \\
		\hline
		\multirow{3}{*}{HIV-Large} &  Initial & 80 & 14M & \multirow{3}{*}{5.8K} & \multirow{3}{*}{36.8K} \\ 
		&  4NF-1 & 77 & 7.8M &  &  \\ 
		&  4NF-2 & 81 & 16M &  &  \\ 
		\hline
		\multirow{4}{*}{UW-CSE} & Original & 9 & 1.8K & \multirow{4}{*}{102} & \multirow{4}{*}{204} \\ 
		&  4NF & 6 & 1.4K &  & \\ 
		&  Denormalized-1 & 5 & 1.3K &  &  \\ 
		&  Denormalized-2 & 4 & 1.3K &  &  \\ 
		\hline
		\multirow{3}{*}{IMDb} &  JMDB & 46 & 8.4M & \multirow{3}{*}{1.85K} & \multirow{3}{*}{3.6K} \\ 
		&  Stanford & 41 & 10.5M &  & \\
		&  Denormalized & 33 & 7.2M & & \\
	\end{tabular}
	\caption{ {\small Numbers of relations (\#R), tuples (\#T), positive examples (\#P), and negative examples (\#N) for each dataset.} }
	\label{table:datasets}
\end{table}
\begin{table}[h]
	\small
	\centering
	\begin{tabular} { l l l}
		\hline
		Initial & 4NF-1 & 4NF-2 \\
		\hline
		bonds(bd,atm1,atm2) & bonds(bd,atm1,atm2, & bSource(bd,atm1)\\
		bType1(bd,t1) & \ \ \ \ \ \ \ \ \ \ t1,t2,t3) & bTarget(bd,atm2) \\
		bType2(bd,t2) &  & bType1(bd,t1) \\
		bType3(bd,t3) &  & bType2(bd,t2) \\
		&  & bType3(bd,t3) \\
		&  & \\
		\hline
		\multicolumn{3}{c}{Common relations} \\
		\hline
		compound(comp, atm) & element\_C(atm) $...$ & element\_O(atm) \\
		p2\_0(atm) & p2\_1(atm) $...$ & p3(atm) \\
	\end{tabular}
	\caption{ {\small Schemas for the HIV-Large and HIV-2K4K datasets.}}
	\label{table:hiv_schemas}
\end{table}

\begin{table}[h]
	\small
	\centering
	\begin{tabular} { | l l|}
		\hline
		bonds[bd]=bType1[bd] & bonds[bd]=bType2[bd] \\
		bonds[bd]=bType3[bd] & \\
		\hline
		bonds[atm1]$\subseteq$compound[atm]  & bonds[atm2]$\subseteq$ compound[atm] \\   
		elem\_C[atm]$\subseteq$compound[atm] $\dots$ & elem\_O[atm]$\subseteq$ compound[atm] \\
		p2\_0[atm]$\subseteq$compound[atm]  $\ldots$ & p3[atm]$\subseteq$compound[atm]\\
		\hline
	\end{tabular}
	\caption{ {\small The INDs in the initial HIV dataset. The initial schema contains 80 INDs in total. }}
	\label{table:hiv-INDs}
\end{table}

\begin{table}[h]
	\small
	\centering
	\begin{tabular} { | l l|}
		\hline
		student[stud] $=$ inPhase[stud] & yearsInProg[stud] $\subseteq$ student[stud]  \\
		hasPosition[prof] $=$ professor[prof] & ta[stud] $\subseteq$ student[stud] \\
		ta[crs] $=$ taughtBy[crs] &  \\
		\hline
		taughtBy[prof] $=$ professor[prof] & student[stud] $\subseteq$ yearsInProg[stud] \\
		courseLevel[crs] $=$ taughtBy[crs] & \\
		\hline
		inPhase[stud] $\subseteq$ student[stud] & yearsInProg[stud] $\subseteq$ student[stud]  \\
		hasPosition[prof] $\subseteq$ professor[prof] & ta[stud] $\subseteq$ student[stud] \\
		taughtby[prof] $\subseteq$ professor[prof] & taughtby[crs] $\subseteq$ courseLevel[crs] \\
		\hline
	\end{tabular}
	\caption{ {\small Top: INDs in the original UW-CSE dataset. Middle: added INDs to have bijective transformations. Bottom: INDs that should hold according to the semantics of the database.}}
	\label{table:uwcse-INDs}
\end{table}

The {\bf UW-CSE} dataset ({\it alchemy.cs.washington.edu/data/uw-cse}) contains information about an academic department and 
has been used as a benchmark in the relational learning literature~\cite{Richardson:2006:MLN}.
The goal is to learn the target relation {\it advisedBy(stud,prof)}, as explained in Section~\ref{sec:introduction}.
The dataset comes with a set of constraints in form of first-order logic clauses that should hold over the dataset domain.
The INDs in these constraints are shown in Table~\ref{table:uwcse-INDs} (top).
The INDs in these constraints are {\it hasPosition[prof] = professor[prof], 
student[stud] = inPhase[stud],
ta[crs] = taughtBy[crs], 
yearsInProgram[stud] $\subseteq$ student[stud]}, and 
{\it ta[stud] $\subseteq$ student[stud]}. 
According to the original set of constraints, if one considers only the professors whose position is {\it Faculty},
the IND {\it taughtBy[prof] $=$ professor[prof]} holds.
If there are more INDs with equality in the schema, one can generate more schemas from the original UW-CSE schema using composition transformation.
To evaluate the effectiveness of algorithms over more varieties of schemas, 
we have considered only professors with position {\it Faculty} to use the IND {\it taughtBy[prof] = professor[prof]}.
For the same reasons, we also added the INDs {\it student[stud] $\subseteq$ yearsInProgram[stud]} and 
{\it courseLevel[crs] = taughtBy[crs]} to the schema. 
We enforce the aforementioned constraints by removing a small fraction of tuples, 159 tuples, from the original dataset.
We iteratively compose the original schema to four different schemas, two of which are shown in Table~\ref{table:uwcse}.
We compose {\it courseLevel} and {\it taughtBy} relations in 4NF schema to create the a more denormalized schema, named Denormalized-1, and 
compose {\it courseLevel}, {\it taughtBy}, and {\it professor} in 4NF schema to generate the fourth schema, named Denormalized-2.

\begin{table}[h]
	\small
	\centering
	\begin{tabular} { l l }
		\hline
		JMDB & Stanford  \\
		\hline
		movie(id,title,year) &  movie(id,title,year,genreid, \\
		movies2genre(id,genreid) & \hspace{0.5cm} colorid,prodcompid,\\
		movies2color(id,colorid) & \hspace{0.5cm} directorid,producerid)\\
		movies2director(id,directorid) & \\
		movies2producer(id,producerid) & \\
		movies2prodcomp(id,prodcompid) & \\
		\hline
		\multicolumn{2}{c}{Common relations} \\
		\hline
		language(id,language) & plot(id,plot) \\
		country(id,country) & color(id,color)\\
		business(id,text) & altversion(id,version) \\
		runningtime(id,times) & prodcompany(id,name) \\
		actor(id,name,sex) & editor(id,name)\\
		director(id,name) & producer(id,name)  \\
		writer(id,name) & akaname(name,akaname) \\
		akatitle(id,langid,title) & cinematgr(id,name) \\
		biography(id,name,bio) & movies2misc(id,miscid) \\
		composer(id,name) & costdesigner(id,name) \\
		distributor(id,name) & rating(id,rank,votes)\\
		genre(id,genre) & misc(id,name)  \\
		mpaarating(id,text) & technical(id,text) \\
		proddesinger(id,name) & releasedate(id,countryid,date) \\
		movies2actor(id,actorid,character) & movies2editor(id,editorid) \\
		movies2writer(id,writerid) & movies2cinematgr(id,cinamtid) \\
		movies2composer(id,composerid) & movies2costdes(id,costdesid)\\
		movies2language(id,langid) & certificate(id,countryid,cert)\\
		movies2proddes(id,proddesid) & movies2country(id,countryid) \\
	\end{tabular}
	\caption{{\small JMDB and Stanford schemas for the IMDb dataset. Relations in bottom are contained in both schemas. }}
	\label{table:imdb_schemas_1}
\end{table}

\begin{table}[h]
	\small
	\centering
	\begin{tabular} { l l }
		\hline
		\multicolumn{2}{c}{Denormalized} \\
		\hline
		movie(id,title,year) &  language(id,language) \\
		movies2actor(id,actorid,name, & plot(id,plot)\\
		\hspace{1.5cm} character,sex) & business(id,text)\\
		movies2color(id,colorid,color) & altversion(id,version)\\
		movies2X(id,Xid,name) s.t. & runningtime(id,times)\\
		X$=\{\textnormal{writer,editor,composer,}$ & prodcompany(id,name)\\
		\hspace{0.2cm} $\textnormal{cinematgr,costdes,proddes,}$ & country(id,country)\\
		\hspace{0.2cm}$\textnormal{director,producer,misc}\}$ &  akaname(name,akaname)\\
		akatitle(id,langid,title) & biography(id,name,bio) \\
		distributor(id,name) & rating(id,rank,votes)\\
		genre(id,genre) & releasedate(id,countryid,date) \\
		movies2language(id,langid) & certificate(id,countryid,cert)\\
		mpaarating(id,text) & technical(id,text) \\
		movies2country(id,countryid) & \\
	\end{tabular}
	\caption{ {\small Denormalized schema for the IMDb dataset. }}
	\label{table:imdb_schemas_2}
\end{table}

\begin{table}[h]
	\small
	\centering
	\begin{tabular} { |  c|}
		\hline
		movies2X[id] $=$ movie[id] \\ 
		s.t. X$=\{\textnormal{genre, color, prodcompany, producer, director}\}$\\
		movies2Y[Yid] $=$ Y[id] \\
		s.t. Y$=\{\textnormal{actor, cinematagr, color, composer, costdes, director,}$\\
		$\textnormal{editor, misc, proddes, producer, writer}\}$\\
		\hline
		Z[id] $\subseteq$ movie[id]\\
		s.t. Z=$\{\textnormal{business, runningtime, altversion, certificate,}$\\ 
		$\textnormal{plot, rating, akatitle, distributor, releasedate,}$\\ 
		$\textnormal{technical, movies2actor, movies2country, movies2composer,}$\\
		$\textnormal{movies2writer, movies2costdes, movies2misc, movies2editor,}$\\
		$\textnormal{movies2cinematgr, movies2language, movies2proddes}\}$\\
		certificate[countryid] $\subseteq$ country[countryid] \\ 
		releasedate[countryid] $\subseteq$ country[countryid] \\
		akatitle[langid] $\subseteq$ language[langid] \\ 
		movies2country[countryid] $\subseteq$ country[countryid] \\
		movies2language[langid] $\subseteq$ language[langid] \\ 
		movies2genre[genreid] $\subseteq$ genre[genreid] \\
		movies2prodcompany[prdcompid] $\subseteq$ prodcompany[prdcompid] \\
		\hline
	\end{tabular}
	\caption{ {\small INDs in IMDB dataset.}}
	\label{table:imdb-INDs}
\end{table}

The {\bf IMDb} ({\it imdb.com}) dataset contains information about movies.
We learn the target relation {\it dramaDirector(director)}, which indicates that {\it director} has directed a drama movie.
JMDB ({\it jmdb.de}) provides a relational database of IMDb data under a 4NF schema.
We create a subset of JMDB database by selecting the movies produced after year 2000 and their related entities, 
e.g., actors, directors, producers. 
The relationships between relation {\it movie(id,title,year)} and its related relations, e.g., {\it director(id,name)}, 
are stored in relations {\it movies2X} where X is the name of the related entity set, e.g., {\it movies2director(id,directorid)}.
The resulting database has 11 INDs with equality in form of {\it movies2X[Xid] =X[id]}, e.g., \\
{\it movies2director[directorid] = director[id]}.  
To test over more transformations, we have changed some regular INDs 
in the database in form of {\it movies2X[id] $\subseteq$ movie[id]} to 
{\it movies2X[id] = movie[id]}
where X is {\it genre}, {\it color}, {\it prodcompany}, {\it producer}, and  {\it director} by removing some tuples from the database. 
We use the first set of 11 INDs with equality to compose 11 pairs of relations in JMDB schema, e.g., composing 
{\it movies2director(id, directorid)} and {\it director(id, name)} into {\it movies2director(id, directorid, name)},  
to create a new schema, called Denormalized.
We use the second set of INDs with equality to compose 5 relations in JMDB schema, e.g., {\it movies2genre}, into {\it movie} relation
and create a schema called Stanford that follows a structure similar to the one used in the Stanford Movie DB ({\it infolab.stanford.edu/pub/movies}). 
We explored our JMDB database to find other INDs, which are listed in Table~\ref{table:imdb-INDs}.
The three schemas and the full list of INDs in IMDb data are shown in Tables~\ref{table:imdb_schemas_1}, \ref{table:imdb_schemas_2} and \ref{table:imdb-INDs}. 
In the UW-CSE and IMDb datasets, we generate negative examples by using the closed-world assumption, 
and then sample to obtain twice as many negative examples as positive examples.

\subsubsection{Algorithms}
We compare Castor to three relational learning systems: 
FOIL~\cite{Quinlan:FOIL}, Aleph~\cite{aleph}, and GILPS~\cite{progolem}.
{\bf FOIL} system implements FOIL algorithms but does not scale to medium and large datasets.
Therefore, we also emulate FOIL using Aleph by forcing Aleph to follow a greedy strategy and call it {\bf Aleph-FOIL}.
Aleph is a well known ILP system that implements Progol by its default setting~\cite{progol}.
To differentiate the two variations of Aleph used in our experiment, 
we call the default implementation of Aleph {\bf Aleph-Progol}. 
GILPS implements {\bf ProGolem}, which is a bottom-up algorithm.
We cannot compare Castor to QuickFOIL~\cite{QuickFOIL} as it is not publicly available. 

There are far fewer query-based relational learning 
systems available than the ones that use samples for learning.
To empirically evaluate the schema independence of query-based 
learning methods, we use the {\bf LogAn-H} system~\cite{Arias:2007}, which is an implementation 
of the A2 algorithm~\cite{Khardon:1999}. 
We call the learning algorithms that use 
batches of training samples, 
e.g., FOIL and ProGolem, sample-based algorithms to 
distinguish them from query-based algorithms in this section. 

Machine learning algorithms usually require parameter tuning to run them successfully.
We try to use the default parameter configuration for all systems, except when needed.
Because we use noisy datasets, we must allow the algorithms to learn clauses that cover some negative examples.
To limit the number of negative examples covered by any learned clause, we require 
that the ratio of positive to negative examples covered by a clause (precision) is at least 2 to 1. 
That is, the number of positive examples examples covered by a clause must be two times greater than or equal to the number of negative examples covered by the clause.
In FOIL, this value is set with the {\it aaccur} parameter; in Aleph it is set with the {\it minacc} parameter; in ProGolem and Castor it is set with the {\it minprec} parameter.
In FOIL, the only settings that we modify is {\it aaccur=0.67}.
In Aleph, the settings that we modify are {\it minacc=0.67}, {\it minpos=2}, {\it noise=inf}, and {\it openlist=1} (only for Aleph-FOIL).
In Castor and ProGolem, the settings are {\it minprec=0.67}, {\it noise=1}, {\it minpos=2}, and {\it sample=1}, {\it beamwidth=1} for HIV-Large, HIV-2K4K, and IMDb, and {\it sample=20}, {\it beamwidth=3} for UW-CSE.
In the IMDb dataset, we also restrict the number of literals with the same relation symbol added to a ground bottom clause in one iteration of the bottom clause construction algorithm.
We set this value to 10.
If this value is unrestricted, a bottom clause may contain hundreds or thousands of literals with the same relation symbol (one for each tuple).

Top-down algorithms contain the parameter {\it clauselength}, which sets an upper bound on the number of literals in a clause.
The default value for this parameter in Aleph is 4.
Over HIV-Large and HIV-2K4K, the definition for the target relation must contain long clauses.
With {\it clauselength} $= 4$, Aleph-FOIL and Aleph-Progol do not learn any clause.
Therefore, we set this parameter to have values of 10 and 15.

\subsubsection{Evaluation Metrics}
We compare the quality of the leaned definitions using the metrics of {\it precision} and {\it recall}.
Let the set of {\it true positives} for a definition be the set of positive examples in the testing data that are covered by the definition.
The precision of a definition is the proportion of its true positives over all examples covered by the definition.
The recall of a definition is the number of its true positives divided by the total number of positive examples in the testing data.
Precision and recall are between 0 and 1, where an ideal definition delivers both precision and recall of 1.
Similar to other machine learning tasks, 
it is not often possible to learn an ideal definition for a target concept due to various reasons,
such as the hardness of the target concept or the lack of sufficient amount of training data. 
In these situations, the values of reasonable precision and recall for a definition depend on the underlying applications, 
e.g., 5\% improvement in precision may not be important in a financial application but vital in a medical application.
Nevertheless, definitions with higher precision and/or recall are generally more desirable \cite{Quinlan:FOIL,progolem,aleph}.
We perform 5-fold cross validation for UW-CSE and 
10-fold cross validation for HIV and IMDb datasets.
We evaluate precision, recall, and running times, showing the average over the cross validation.

\subsubsection{Other Settings}
Experiments were run on a server containing 32 2.6GHz Intel Xeon E5-2640 processors, running CentOS Linux 7.2 with 50GB of main memory.

\subsection{Sample-based Algorithms}

Castor is schema independent over all datasets and 
delivers equal precision and recall across all schemas of each dataset in our experiments.
However, other algorithms are schema dependent.\\

{\bf HIV datasets.}
Aleph-FOIL, Aleph-Progol and Castor are the only algorithms that scale to the HIV-2K4K dataset.
Aleph-FOIL and Castor also scale to the HIV-Large dataset.
The definitions learned by Aleph-FOIL and Aleph-Progol over different schemas are not equivalent
as shown by their precision and recall values across schemas in  Table~\ref{table:hiv}. 
Different schemas cause Aleph-FOIL and Aleph-Progol to explore different regions of the hypothesis space.
Aleph-FOIL and Aleph-Progol are not able to find any definition over the 4NF-2 schema of HIV-Large and HIV-2K4K datasets.
The reason is that any good clause must contain information about bonds. 
In the 4NF-2 schema, this information is represented by two relations, {\it bondSource} and {\it bondTarget}, and three more to indicate their types. 
With a top-down search, these algorithms are not able to find a clause that contains these relations.
Aleph-FOIL terminates without learning anything and Aleph-Progol does not terminate after 75 hours.

Aleph-Progol does not terminate after 75 hours over the 4NF-2 schema of HIV-2K4K.
FOIL crashes on both HIV datasets.
ProGolem does not learn anything after 5 days running, even on smaller subsets of the HIV dataset.\\

{\bf UW-CSE dataset.}
As shown in Table~\ref{tableuwcseres67}, all algorithms except for Castor are schema dependent and learn non-equivalent definitions over 
different schemas of UW-CSE.
As this dataset is smaller than HIV and IMDb datasets, it has a relatively smaller hypothesis space.
Hence, the degree of schema dependence for these algorithms over this dataset is generally lower than other datasets.
This is reflected in their precision and recall, which are not significantly different across schemas.

Over denormalized schemas, Aleph-FOIL learns definitions consisting of many clauses, each covering a few examples. 
This results in low generalization, hence very low precision and recall.
On the other hand, over the Original schema, it learns definitions consisting of a lower number of clauses, each covering a greater number of examples.
Note that Aleph-FOIL does not exactly emulate FOIL. 
FOIL uses a different evaluation function and explores an unrestricted hypothesis space. 
Therefore, FOIL does not suffer from the same problems as Aleph-FOIL.
However, it is less effective than other algorithms.
Castor's effectiveness is comparable to Aleph-Progol and ProGolem over the Original and 4NF schemas. 
Nevertheless, Aleph-Progol and ProGolem perform worse on other schemas. 
On the other hand, Castor is effective over all schemas.\\


\begin{table}
	\begin{center}
		{\small 
			\centering
			\begin{tabular} {c| c|c|c|c}
				\hline
				\multicolumn{5}{c}{HIV-Large} \\
				\hline
				Algorithm & Metric & Initial & 4NF-1 & 4NF-2 \\
				\hline			
				Aleph-FOIL & Precision	& 0.58	& 0.72 & 0 	\\
				($\mathit{clauselength} = 10$) & Recall	& 0.42	& 0.91 & 0	 	\\
				& Time (hours)	& 3	& 0.9 & 6 	\\
				\hline
				Aleph-FOIL & Precision	& 0.68	& 0.68 & 0  	\\
				($\mathit{clauselength} = 15$) & Recall	& 0.41	& 0.85 & 0	 	\\
				& Time (hours)	& 11.7	&  3.7 & 47	\\
				\hline
				\multirow{3}{*}{Castor} & Precision	& 0.81 & 0.81 & 0.81 \\
				& Recall	& 0.85 & 0.85 & 0.85 \\
				& Time (hours)	& 3.5 & 1.9 & 56 \\
				\hline
				\multicolumn{5}{c}{HIV-2K4K} \\
				\hline		
				Aleph-FOIL & Precision	& 0.72	&  0.78 & 0	\\
				($\mathit{clauselength} = 10$) & Recall	& 0.69	& 0.81 & 0	 	\\
				& Time (min)	& 6.2	& 7.9 & 20.6	 	\\
				\hline
				Aleph-FOIL & Precision	& 0.70	& 0.78 & 0  	\\
				($\mathit{clauselength} = 15$) & Recall	& 0.79	& 0.89	& 0	\\
				& Time (min)	& 6.72	& 7.07	& 122.2	\\
				\hline
				Aleph-Progol & Precision	& 0.70	& 0.79 & -	\\
				($\mathit{clauselength} = 10$) & Recall	& 0.85	& 0.90	& -	\\
				& Time (min)	& 58.5	& 72.2	& $>$ 75 h	\\
				\hline
				Aleph-Progol & Precision	& 0.72	&  0.75 & -	\\
				($\mathit{clauselength} = 15$) & Recall	& 0.89	& 0.87 & -	\\
				& Time (min)	& 155.51	& 13.56	& $>$ 75 h	\\
				\hline
				\multirow{3}{*}{Castor} & Precision	& 0.80 & 0.80 & 0.80\\
				& Recall	& 0.87 & 0.87 & 0.87 \\
				& Time (min)	& 15.1 & 6.5 & 335.5 \\
			\end{tabular}
		}
		\caption{{\small Results of learning relations over HIV-Large and HIV-2K4K data.}}
		\label{table:hiv}
	\end{center}
\end{table}

\begin{table}[t]
	\begin{center}
		{\small 
			\centering
			\begin{tabular} { c | c | c | c | c | c }
				\hline
				Algorithm & Metric & Original & 4NF & Denorm-1 & Denorm-2  \\
				\hline			
				\multirow{3}{*}{FOIL} & Precision	& 0.84	& 0.79	& 0.77 & 0.85 \\
				& Recall	& 0.35	& 0.36	 & 0.42 & 0.47 \\
				& Time (s)	& 18.7	& 20.84 & 30.72 & 30.64	\\
				\hline
				\multirow{3}{*}{Aleph-FOIL} & Precision	& 0.78	& 0.50 & 0.36 & 0.19 \\
				& Recall	& 0.17	& 0.18 &	0.13 & 0.11 \\
				& Time (s)	& 3.5	& 4.3 & 14.8 & 398.1 \\
				\hline
				\multirow{3}{*}{Aleph-Progol} & Precision	& 0.95	& 0.97 &	0.98 & 0.55 \\
				& Recall	& 0.54	& 0.45	& 0.36 & 0.29 \\
				& Time (s)	& 9.7	& 13.2	& 27.9 & 334.8 \\
				\hline
				\multirow{3}{*}{ProGolem} & Precision	& 0.95	& 0.95	& 0.80 & 0.82\\
				& Recall	& 0.54	& 0.54	& 0.48 & 0.48 \\
				& Time (s)	& 24.4	& 28.8 &	26.7 & 54.1 \\
				\hline
				\multirow{3}{*}{Castor} & Precision	& 0.93 & 0.93 & 0.93 & 0.93 \\
				& Recall	& 0.54 	& 0.54	& 0.54 & 0.54 \\
				& Time (s)	& 7.2 & 7.4 & 7.9 & 12.4 \\
			\end{tabular}
		}
		\caption{{\small Results of learning relations over UW-CSE data.}}
		\label{tableuwcseres67}
	\end{center}
\end{table}

\begin{table}[t]
	\begin{center}
		{\small 
			\centering
			\begin{tabular} { c | c | c | c | c }
				\hline
				Algorithm & Metric & JMDB & Stanford & Denormalized \\
				\hline			
				\multirow{3}{*}{Aleph-FOIL} & Precision	& 0.66	& 0.92 & 0.67	 \\
				& Recall	& 0.44	& 	1 & 0.45 \\
				& Time (min)	& 	6.4 & 1,229 & 476.4	\\
				\hline
				\multirow{3}{*}{Aleph-Progol} & Precision	& 	0.66 & 1 & 0.69	 \\
				& Recall	& 	0.47& 1 & 0.52	 \\
				& Time (min)	& 	312.9 & 1,248 & 937.4	\\
				\hline
				\multirow{3}{*}{Castor} & Precision	& 1 & 1 & 1  \\
				& Recall	& 1 & 1 & 1 \\
				& Time (min)	& 15.14 & 108.15 & 32.4 \\
			\end{tabular}
		}
		\caption{{\small Results of learning relations over IMDb data.}}
		\label{tableimdbres67}
	\end{center}
\end{table}

{\bf IMDb dataset.}
The target relation for the IMDb dataset has an exact Datalog definition given the background knowledge and training examples.
Castor finds this definition over all schemas and obtains precision and recall of 1, as shown in Table~\ref{tableimdbres67}.
Aleph-FOIL fails to find this definition over all schemas.
Aleph-Progol finds this definition only over the Stanford schema.
The definitions learned by Aleph-FOIL and Aleph-Progol over different schemas are largely different. \\
{\bf Relationship between style of design and effectiveness.}
Our results show that there is not any single style of design, e.g., 4NF, on which all algorithms, except for Castor, 
are effective over all datasets. 
Generally, the style of design on which a relational learning algorithm delivers its most effective results varies based on 
the metric of effectiveness, the dataset, and the algorithm.
For example, Aleph-Progol delivers its highest precision over a denormalized schema, Denormalized-1, for  
UW-CSE, but its highest recall over the original schema, which is more normalized than 4NF. 
Aleph-Progol also delivers its lowest precision on UW-CSE data over another denormalized schema, Denormalized-2, for this dataset.
Hence, it is generally hard to find a straightforward relationship between the style of 
design and the precision or recall of an algorithm over a given dataset.
Furthermore, each algorithm prefers a different style of design over each dataset. 
For example, Aleph-Progol has higher overall precision and recall
on the most normalized schema, original schema, for UW-CSE. 
But, it delivers its highest overall precision and recall over the most denormalized schema, Stanford, for IMDb.
Finally, different algorithms prefer distinct styles of design over the same dataset.
For example,  FOIL delivers both its highest precision and highest recall 
over a denormalized schema for UW-CSE data, Denormalized-2, over which Aleph-Progol delivers both its lowest precision and lowest recall.
Over the same database, ProGolem achieves both its highest precision and highest recall for the most normalized schema, i.e., original schema. \\

{\bf Efficiency.}
Besides being schema independent, Castor offers the best trade-off between effectiveness and efficiency.
Generally, Aleph-FOIL is more efficient than Castor, but less effective.
Aleph-Progol is usually effective, but becomes very inefficient as the size of data grows.
FOIL and ProGolem only scale to small datasets.

Aleph-FOIL and Castor are the only algorithms that scale to the HIV-Large dataset.
Aleph-FOIL with $\mathit{clauselength} = 10$ is more efficient than Castor.
However, when {\it clauselength} is set to 15, it becomes less efficient, as shown in Tables~\ref{table:hiv}.
Aleph-FOIL with both {\it clauselength} = $10$ and $15$ is also faster than Castor over the HIV-2K4K dataset.
In general, top-down algorithms that follow greedy search strategies are expected to be more efficient than bottom-up algorithms.
Top-down algorithms have a search bias for shorter clauses, which are cheaper to compute.
They usually limit the maximum length of the clauses to be learned.
Further, algorithms that follow greedy search strategies can be more efficient.
This is exploited by related work that focuses on efficiency~\cite{QuickFOIL,join:crossmine,amie-plus}.
However, as the maximum clause length is increased, the hypothesis space grows, and these algorithms become less efficient.
Top-down algorithms that do not follow a greedy search strategy, such as Progol, are generally not efficient. 
This is reflected in our empirical studies, where Aleph-Progol did not scale to the HIV-Large dataset, and is the slowest algorithm on the HIV-2K4K dataset.

Castor is able to scale to large databases such as HIV-Large and HIV-2K4K because of the optimizations explained in Section~\ref{section:implementation}.
By reusing information about previous coverage tests, Castor reduces the number of coverage tests on new clauses.
This is particularly useful on large databases with complex schemas, such as the HIV datasets, where generated clauses are large and expensive to evaluate.
Parallelization also helps Castor on reducing the time spent on coverage testing. 
For these experiments, Castor parallelized coverage testing by using 32 threads.
Finally, minimization helps in reducing the size of clauses.
For instance, over both of HIV datasets, Castor reduces the size of bottom-clauses over the Initial schema by $19\%$, over the 4NF-1 schema by $13\%$, and over the 4NF-2 schema by $18\%$, on average.
Castor removes redundant literals from the bottom-clause, which results in reducing the search space and the cost of performing coverage tests.
Note that the running time of all algorithms increases significantly over the 4NF-2 schema of the HIV-Large and HIV-2K4K datasets. 
As the {\it bond} relation is decomposed into {\it bondSource} and {\it bondTarget} in this schema,
the number of tuples to represent bonds is doubled compared to the Initial schema.
Therefore, algorithms must explore clauses with a large number of literals, hundreds, whose coverage testings take a very long time.
We plan to optimize the coverage testing engine of Castor to efficiently process such datasets.

The efficiency of Castor is comparable to the efficiency of Aleph-FOIL and Aleph-Progol over the Original and 4NF schemas of the UW-CSE dataset. 
The running time of Aleph-FOIL and Aleph-Progol is heavily impacted over the Denormalized-2 schema, as shown in Table~\ref{tableuwcseres67}. 
Castor is efficient over all schemas of this dataset.
UW-CSE is the only dataset for which FOIL and ProGolem scale. However, in general, they are less efficient.

Castor is significantly more efficient and effective than Aleph-FOIL and Aleph-Progol on the IMDb dataset, as shown in Table~\ref{tableimdbres67}.
In general, top-down algorithms are efficient 
if they take the correct first steps when searching for the definition.
In this case, Aleph-FOIL and Aleph-Progol (over two schemas) take the wrong steps and focus on a section of the hypothesis space that does not contain the correct definition.\\

\begin{table}
	\begin{center}
		{\small 
			\centering
			\begin{tabular} {c|c|c|c|c}
				\hline
				\multicolumn{5}{c}{HIV-2K4K} \\
				\hline			
				Metric & Initial & \multicolumn{2}{c|}{4NF-1} & 4NF-2  \\
				\hline
				Precision	& 0.77	& \multicolumn{2}{c|}{0.79} & 0.73 	\\
				Recall	& 0.63	& \multicolumn{2}{c|}{0.87} & 0.76 	\\
				Time (min)	& 27	&  \multicolumn{2}{c|}{14.8} & 576 \\
				
				\hline
				\multicolumn{5}{c}{UW-CSE} \\
				\hline			
				Metric & Original & 4NF & Denorm-1 & Denorm-2  \\
				\hline
				Precision	& 0.93	& 0.93 & 0.93 & 0.93	\\
				Recall	& 0.54	& 0.54 & 0.54	& 0.54	\\
				Time (s)	& 8	&  8.9 & 9.1	& 13.3 \\
				
				\hline
				\multicolumn{5}{c}{IMDb} \\
				\hline			
				Metric & JMDB & \multicolumn{2}{c|}{Stanford} & Denormalized  \\
				\hline
				Precision	& 1	& \multicolumn{2}{c|}{0.98} & 1 	\\
				Recall	& 1	& \multicolumn{2}{c|}{0.84} & 1 	\\
				Time (min)	& 7.3	&  \multicolumn{2}{c|}{90.8} & 8.1	 \\
			\end{tabular}
		}
		\caption{{\small Results of Castor learning relations over HIV-2K4K, UW-CSE and IMDb data using only INDs in the form of subset.}}
		\label{table:beyondcomposition}
	\end{center}
\end{table}

{\bf General decomposition/ composition.}
As it is explained in Section~\ref{sec:generalDecomposition}, 
there are two methods to achieve robustness over the schema variations
created by the INDs in general forms.
One can use a preprocessing step to check whether the IND holds in the 
form of equality over the available instance. Then, one 
can apply the original Castor algorithm and
achieve complete schema independence. The empirical results of this method are
exactly the same as the ones of the original Castor algorithm with the
overhead of its preprocessing step.
Another method is to use the INDs in general form directly without any preprocessing.
We empirically evaluate the robustness of the latter method in this section. 
To explore general decomposition/ compositions of 
HIV, UW-CSE, and IMDb, we restore the INDs with equality that we have enforced on their schemas to their original forms. 
For instance, we restore the enforced INDs with equality {\it movies2X[id] = movie[id]} in IMDb schemas to  
{\it movies2X[id] $\subseteq$ movie[id]} in IMDb schemas. 
We also modify the INDs with equality that are originally found in these datasets to INDs in form of foreign key 
to primary key referential integrities in their schemas. 
For example, we have changed INDs {\it movies2X[Xid] = X[id]} to {\it movies2X[Xid] $\subseteq$ X[id]} over IMDb schemas. 
Hence, the transformations explained in Section~\ref{section:empirical-settings} for these datasets are 
general decomposition/ composition and not bijective. 
We run the extended version of Castor from Section~\ref{sec:generalDecomposition} using the aforementioned INDs and all other regular INDs in each schema.
The extension of Castor gets the same results as in Table~\ref{tableuwcseres67} 
over UW-CSE and is schema independent. It is also robust and
delivers the same results as in Table~\ref{tableimdbres67} for JMDB and Denormalized schemas 
of IMDb. But, it returns precision of 0.98 and recall of 0.84 over the database with 
Stanford schema. Overall, it is more effective and schema independent than other algorithms over IMDb.
However, the results of the extension of Castor vary with the schema over the HIV-2K4K dataset: it delivers precision of 
0.77,  0.79, and 0.73 and recall of 0.63, 
0.87, and 0.76 over the Initial, 4NF-1, and 4NF-2 schemas, respectively. 
This is because it cannot access some tuples in the bottom-clause construction in these databases as explained in Section~\ref{sec:generalDecomposition}. 
Its precisions are equal or higher than the those of Aleph-FOIL and Aleph-Progol over all schemas and its recall is higher than that of Aleph-FOIL and Aleph-Progol in 4NF-2 schema.
But, its recall is lower than the recall of Aleph-FOIL and Aleph-Progol over the Initial and Aleph-Progol over 4NF-1 schemas.
Table~\ref{table:beyondcomposition} shows the results of Castor learning relations over the HIV-2K4K, UW-CSE and IMDb datasets, using only INDs in the form of subset.
For HIV-2K4K, it uses the INDs in Table~\ref{table:hiv-INDs} (bottom).
For UW-CSE, it uses the INDs in Table~\ref{table:uwcse-INDs} (bottom).
For IMDb, it uses the INDs in Table~\ref{table:imdb-INDs} (bottom).

\subsection{Impact of Castor Design Choices}

We evaluate the impact of parallelization and the use of stored procedures on Castor's running time.
There are some variations in the running times of Castor compared to the experiments in the previous section. This is because we run experiments again, and the running times may fluctuate.

\begin{figure}
	\includegraphics[width=0.5\linewidth]{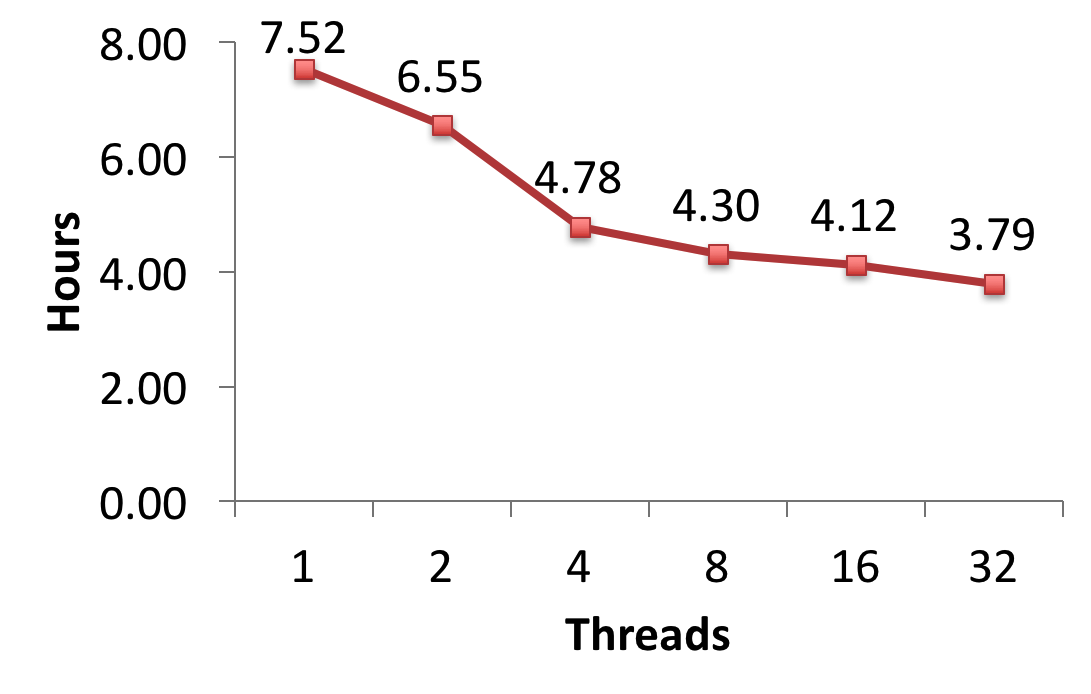}\hfill
	\includegraphics[width=0.5\linewidth]{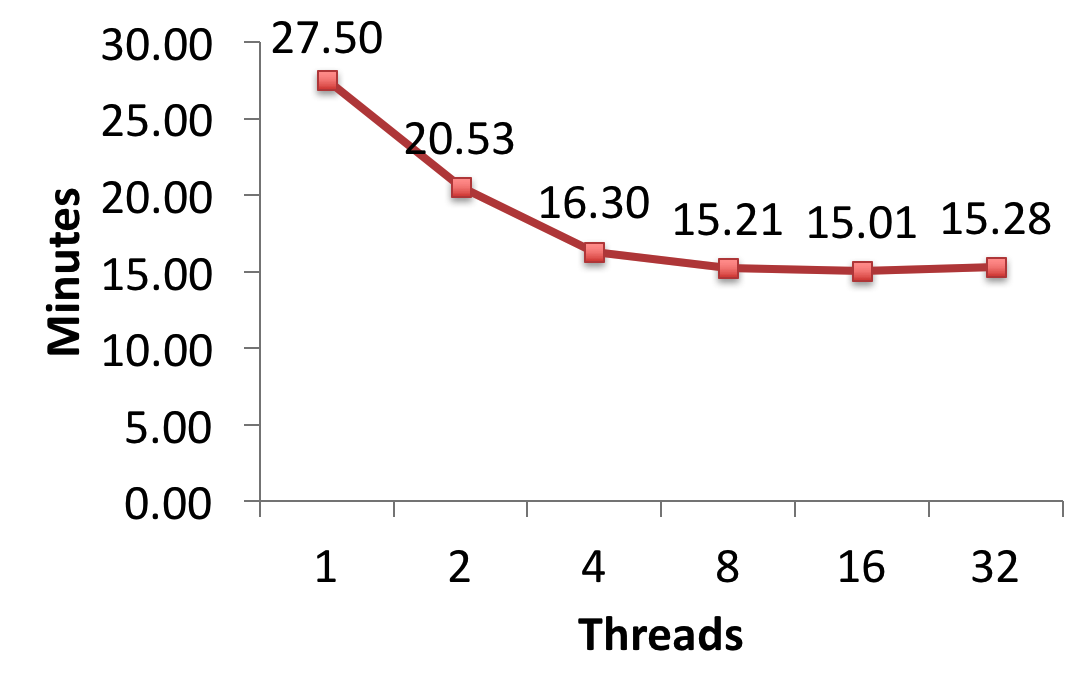}\hfill
	\includegraphics[width=0.5\linewidth]{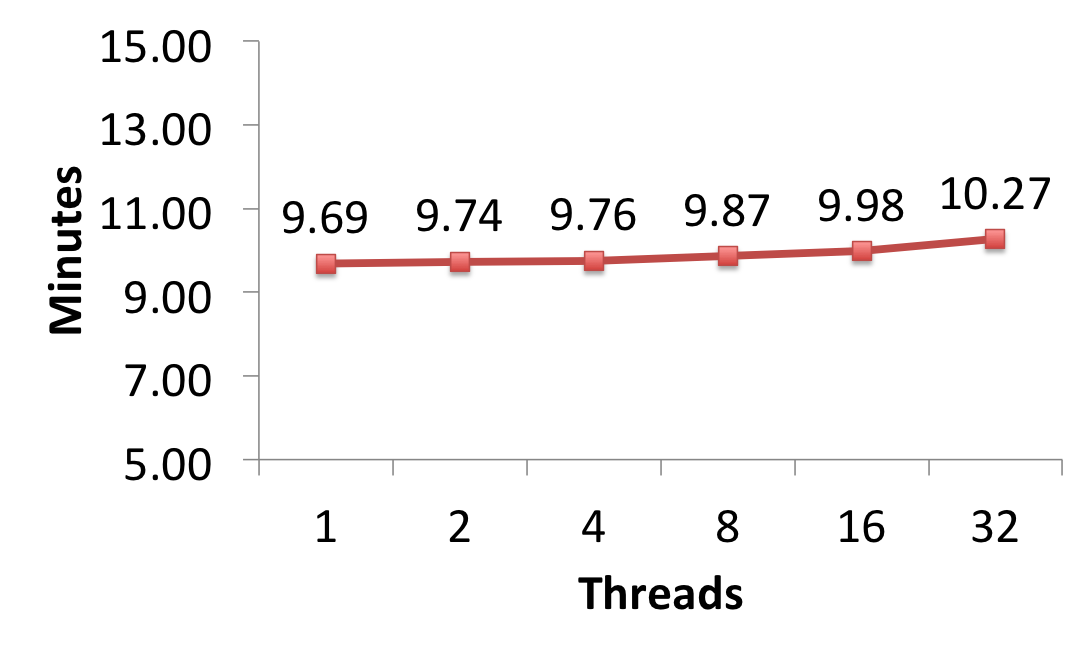}\hfill
	\caption{{\small Impact of parallelization on Castor's running time over the HIV-Large (top-left), HIV-2K4K (top-right), and IMDb datasets (bottom).}}
	\label{figure:castor-parallelization}
\end{figure}

{\bf Impact of parallelization.}
Castor performs coverage tests in parallel to improve its running time.
Figure~\ref{figure:castor-parallelization} shows the impact of parallelization on Castor's running time over the HIV-Large (Initial schema), HIV-2K4K (Initial schema) and IMDb (JMDB schema) datasets.
Over both HIV-Large and HIV-2K4K datasets, Castor benefits from parallelization.
Over the HIV-Large dataset, the best performance is obtained by using 32 threads, which reduces the running time by half compared to using 1 thread.
Over the HIV-2K4K dataset, the running time also reduces significantly with parallelization and the best performance is obtained with 16 threads.
Over the IMDb dataset, there is no benefit in using parallelization. This is because Castor does not need to perform many coverage tests, as it is able to find the perfect definition very quickly. 
In this case, most of Castor's running time is spent in creating the ground bottom-clauses, as explain in Section~\ref{section:implementation}.
Because the UW-CSE dataset is very small, there is no need for parallelization.
Notice that sequential Castor (1 thread) is more efficient than Aleph-FOIL with {\it clauselength = 15} over the HIV-Large dataset and more efficient than Aleph-Progol over the HIV-2K4K and IMDb datasets.
This shows that besides parallelization, the techniques explained in Section~\ref{section:implementation} allow Castor to run efficiently.

{\bf Impact of using stored procedures.}
The bottom-clause construction algorithm is used to generate ground bottom-clauses, used to evaluate clauses, as well as in Castor's {\it LearnClause} procedure.
As mentioned in Section~\ref{section:implementation}, we implement the bottom-clause construction algorithm inside a stored procedure.
To evaluate the benefit of using stored procedures, we also implement a version of Castor that does not use stored procedures. 
Table~\ref{table:castor-storedprocedures} shows the running time of the versions of Castor with and without stored procedures over the HIV-Large (Initial schema), HIV-2K4K (Initial schema) and IMDb (JMDB schema) datasets.
The version of Castor that uses stored procedures obtains between 1.25x and 1.9x speedup over the version that does not use stored procedures.


\begin{table}[t]
	\begin{center}
		{\small 
			\centering
			\begin{tabular} { c | c | c }
				\hline
				Dataset & With stored procedures & Without stored procedures \\
				\hline
				HIV-Large & 3.79h & 4.75h \\
				HIV-2K4K & 15.28m & 25.23m \\
				IMDb & 10.27m & 19.49m \\
			\end{tabular}
		}
		\caption{{\small Impact of stored procedures on Castor's running time over the HIV-Large, HIV-2K4K, and IMDb datasets}}
		\label{table:castor-storedprocedures}
	\end{center}
\end{table}

\subsection{Query-based Algorithms}
We used the interactive 
algorithm with automatic user mode in the LogAn-H system. In this mode, the system is told 
the Horn definition to be learned, so that it can act as an oracle. Then the algorithm's queries are answered automatically until it learns the exact definition.
We performed experiments over the UW-CSE dataset.
Query-based algorithms do not scale to larger datasets.
We generated random Horn definitions over the Denormalized-2 schema of the UW-CSE dataset. 
Each definition contains one or more clauses.
The definition generator has a parameter to indicate the number of variables in each clause.
To generate the head of each clause, we created a new relation of 
random arity, where the minimum arity is 1 and the maximum arity is 
the maximum arity of the relations in the Denormalized-2 schema. 
The body of each clause can be of any length as long as the number 
of variables in the clause is equal to the specified parameter and 
all variables appearing in the head relation also appear in any relation in the body. 
The body of the clause is composed of randomly chosen relations, 
where each relation can be the head relation (allowing for recursive clauses) 
or any relation in the input schema. 
Head and body relations 
are populated with variables, where each variable is randomly 
chosen to be a new (until reaching the input number of variables) 
or already used variable. 
Clauses cannot contain function or constant symbols. 

After generating each random Horn definitions over the Denormalized-2 schema, we transformed these expressions to the Denormalized-1, 4NF and Original schemas by simply doing vertical 
decomposition to each of the clauses in a definition. 
We varied the number of clauses in a definition to be between 1 and 5, each containing between 4 and 8 variables. 
Therefore, we generated 50 random definitions for each setting.
The A2 algorithm takes as input the target definition and the schema. 
We ran the LogAn-H system and recorded the number of queries required to learn each definition under each schema. 
In these experiments, we report the average query complexity -- number of equivalence queries (EQs) and membership queries (MQs) -- of the A2 algorithm.

The number of EQs and MQs asked by the algorithm is presented in Figure~\ref{figure:a2-results}.
The average number of EQs required by the A2 algorithm is constant for different number of variables and similar throughout all schemas. 
However, this is not the case for MQs. 
Particularly, we can see that the number of MQs is greater for more decomposed schemas, e.g., Original schema.
Further, the number of MQs also increases with the number of variables.



\begin{figure} 
	\includegraphics[width=0.5\linewidth]{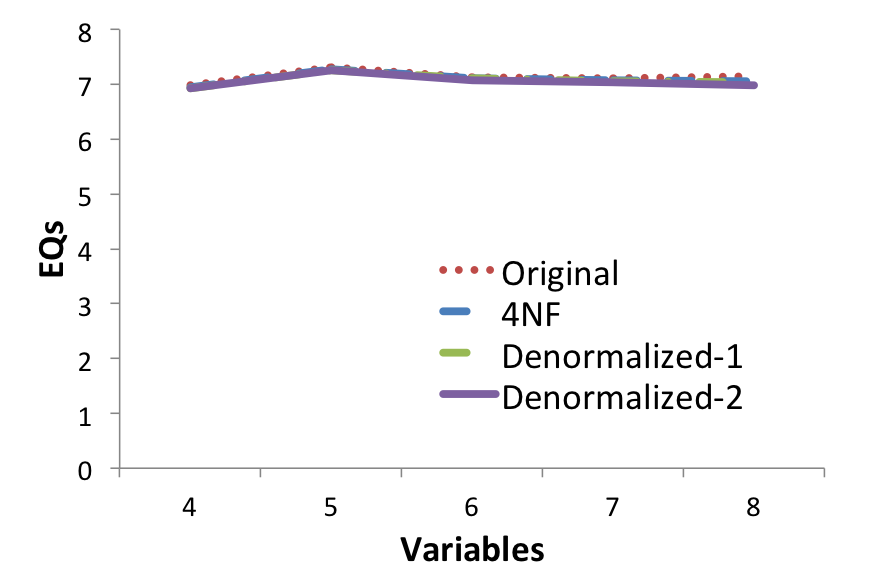}\hfill
	\includegraphics[width=0.5\linewidth]{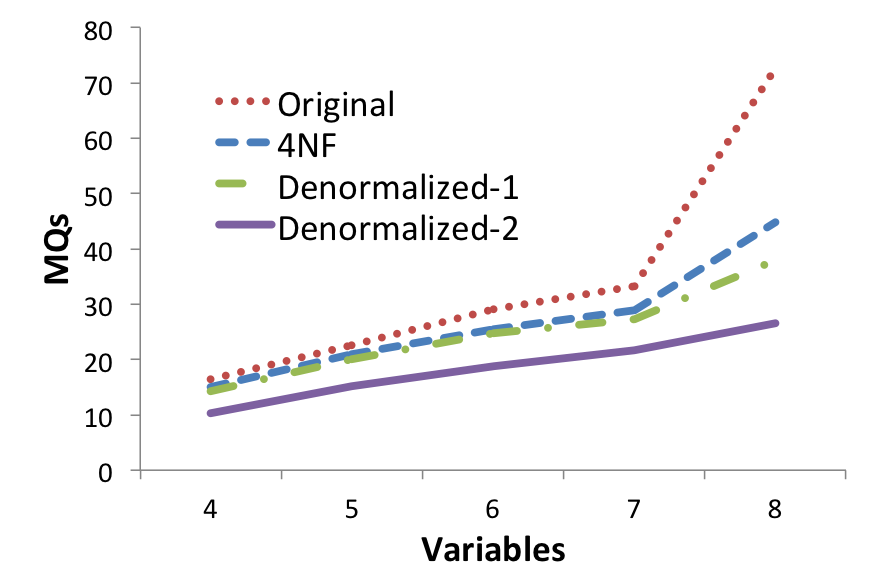}\hfill
	\caption{{\small Average number of equivalence (left) and membership (right) queries for the A2 algorithm.}}
	\label{figure:a2-results}
\end{figure}

	\section{Conclusion and Future Work}
\label{section:conclusion}
Users often have to apply relational learning algorithms over
databases that have different schemas from the ones
used to validate these algorithms. In order for these 
algorithms to effectively learn the definitions of 
target relations over various databases, 
their results should not depend on the representational details of 
the database schema. In this paper,
we formally defined the novel 
property of schema independence for
relational learning algorithms, which states that the output of
these algorithms should not depend on the schema used to represent
their input databases. 
We proved that current well-known relational 
learning algorithms are not schema independent over composition/decomposition. 
We proposed a new algorithm, Castor, that leverages schema constraints to achieve schema independence.
Our empirical results on benchmark and real datasets validated our theoretical results and showed that Castor is more efficient and more or as effective as
current relational learning algorithms.

We believe that this paper initiates some exciting 
new investigations on the impact of
representation on the quality of learning
concepts over structured data. 
An interesting direction is to explore the schema independence 
of other types of supervised and unsupervised learning algorithms over 
relational databases. 
For example, many learning systems first select and extract features from a 
relational database and then execute non-relational learning 
algorithms, such as decision trees, over these features to 
learn the target concepts \cite{RothYi01a,Anderson:CIDR:2013}.
It is also interesting to explore the 
behavior of learning algorithms over other types of 
schema transformations, such as the ones used in data exchange \cite{DBLP:conf/icdt/FaginKMP03}, 
to better understand the connection between data exchange/integration 
and learning.


	
	\bibliographystyle{ACM-Reference-Format}
	
	\bibliography{../../ref}

\end{document}